\documentclass[11pt,a4paper]{article}
\usepackage{fullpage}
\usepackage{graphicx}
\usepackage{clrscode}
\usepackage{algorithmic}
\usepackage{algorithm}
\usepackage{amsmath, amssymb}
\usepackage{euscript}
\usepackage{xspace}
\usepackage{paralist}
\usepackage{wrapfig}

\newtheorem{theorem}{Theorem}
\newtheorem{lemma}{Lemma}
\newtheorem{defn}{Definition}
\newtheorem{observation}{Observation}
\newtheorem{corollary}{Corollary}

\newenvironment{proof}{Proof:}{\qed}
\def\squareforqed{\hbox{\rlap{$\sqcap$}$\sqcup$}}
\def\qed{\ifmmode\squareforqed\else{\unskip\nobreak\hfil
\penalty50\hskip1em\null\nobreak\hfil\squareforqed
\parfillskip=0pt\finalhyphendemerits=0\endgraf}\fi}

%

\newcommand{\figlab}[1]{\label{fig:#1}}
\newcommand{\figref}[1]{Figure~\ref{fig:#1}}

\newcommand{\seclab}[1]{\label{sec:#1}}
\newcommand{\secref}[1]{Section~\ref{sec:#1}}

\newcommand{\obslab}[1]{\label{observation:#1}}
\newcommand{\obsref}[1]{Observation~\ref{observation:#1}}

\newcommand{\defref}[1]{Definition~\ref{def:#1}}
\providecommand{\deflab}[1]{\label{def:#1}}

\newcommand{\algref}[1]{Algorithm~\ref{alg:#1}}
\providecommand{\alglab}[1]{\label{alg:#1}}

\newcommand{\lemlab}[1]{\label{lem:#1}}
\newcommand{\lemref}[1]{Lemma~\ref{lem:#1}}

\newcommand{\Eqlab}[1]{\label{eq:#1}}
\newcommand{\Eqref}[1]{Equation~\ref{eq:#1}}

\newcommand{\thmref}[1]{Theorem~\ref{theo:#1}}
\newcommand{\thmlab}[1]{{\label{theo:#1}}}

\newcommand{\corlab}[1]{\label{cor:#1}}
\newcommand{\corref}[1]{Corollary~\ref{cor:#1}}

\newcommand {\heig} [4] {\langle #1, #2, #3, #4 \rangle}
\def\low(#1){\ensuremath{\id{low}(#1)}}
\def\high(#1){\ensuremath{\id{high}(#1)}}
\def\steep(#1,#2,#3){\ensuremath{\sigma_{#1}(#2,#3)}}
\def\elev(#1,#2){\ensuremath{\id{elev}_{#1}(#2)}}
\def\level(#1,#2){\ensuremath{\id{level}_{#1}(#2)}}
\def\level(#1){\ensuremath{\id{level}(#1)}}
\def\symb(#1){\ensuremath{\id{elev}^{*}(#1)}}
\def\hre(#1){\ensuremath{\id{maxReachable}(#1)}}
\def\tenthre(#1){\ensuremath{\id{reachable}[#1]}}
\def\flowsto#1{\vtop{\hbox{\,$\rightarrow$\,}\kern-2ex\hbox{\scriptsize\,$#1$}}}
\def\flowsvia#1{\vtop{\hbox{\scriptsize\,$~#1$}\kern-2ex\hbox{\,$\rightarrow$\,}}}
\def\avoids#1#2{\vbox{\hbox{\scriptsize$\setminus #1$}\kern-1.7ex\flowsto{#2}}}
\def\nflowsto#1{\vtop{\hbox{\,$\not\rightarrow$\,}\kern-1.7ex\hbox{\scriptsize\,$#1$}}}

\renewcommand{\Re}{{\rm I\!\hspace{-0.025em} R}}

\newcommand{\ReT}{\ensuremath{\EuScript{R}_{T}}\xspace}
\newcommand{\FP}{\ensuremath{\Pi}}
\newcommand{\FPstable}{\ensuremath{\widetilde{\Pi}}}

\newcommand{\Rlow}{\ensuremath{{R}^{-}}\xspace}
\newcommand{\Rhigh}{\ensuremath{{R}^{+}}\xspace}
\newcommand{\emphi}[1]{\textbf{\emph{#1}}}
\def\WS(#1,#2){\ensuremath{\EuScript{W}_{#1}(#2)}}
\def\PoWS(#1){\ensuremath{\EuScript{W}_{\cup}\left(#1\right)}}
\def\AvWS#1(#2){\ensuremath{\EuScript{W}_{\cup}^{\setminus #1}\left(#2\right)}}
\def\CoWS(#1){\ensuremath{\EuScript{W}_{\cap}(#1)}}
\def\PsWS(#1){\ensuremath{\EuScript{W}_{\rlap{\kern0.3ex$\cdot$}\cap}(#1)}}
\def\Del(#1){\ensuremath{\EuScript{D}(#1)}}
\def\PoDel(#1){\ensuremath{\EuScript{D}_{\cup}(#1)}}
\newcommand{\WSO}{\ensuremath{\overline{R}}\xspace}
\def\CanonR(#1){\ensuremath{R_{\cup}(#1)}}
\newcommand{\expandPWS}{\ensuremath{\proc{Expand}}\xspace}
\newcommand{\computePWS}{\ensuremath{\proc{PotentialWS}}\xspace}
\newcommand{\expandPWSX}[2]{\ensuremath{\expandPWS(#1,#2)}\xspace}
\newcommand{\computePWSX}[1]{\ensuremath{\computePWS(#1)}\xspace}
\newcommand{\expandPD}{\ensuremath{\proc{ExpandDown}}\xspace}

\newcommand{\expandPDX}[2]{\ensuremath{\expandPD(#1,#2)}\xspace}

\newcommand{\compl}[1]{\ensuremath{\left(#1\right)^c}}
\newcommand{\clearance}{\ensuremath{\id{bar}}}
\newcommand{\PotentialWS}{\PoWS}
\newcommand{\PersistentWS}{\PsWS}
\newcommand{\eps}{\varepsilon}
\def\crossing(#1,#2){\ensuremath{{\cal X}_{#1}(#2)}}
\def\PoBound(#1){\ensuremath{{\cal X}_{\cup}(#1)}}

\newcommand{\XSays}[3]{{
      {$\rule[-0.12cm]{0.2in}{0.5cm}$\fbox{\tt
            #1 #2:} }
      \small #3
      \marginpar{\tt #1}
      {$\rule[0.1cm]{0.3in}{0.1cm}$\fbox{\tt
            end}$\rule[0.1cm]{0.3in}{0.1cm}$}
   }
}
\newcommand{\XSaysExt}[3]{{
      ~\\
      \fbox{\begin{minipage}{0.99\linewidth}
             {$\rule[-0.12cm]{0.2in}{0.5cm}$\fbox{\tt
                   #1 #2:}} \small #3
         \end{minipage}}
      \marginpar{\tt #1}}}

\newcommand{\Anne}[2][says]{{\XSays{Anne}{#1}{#2}}}
\newcommand{\AnneX}[2][says]{{\XSaysExt{Anne}{#1}{#2}}}
\newcommand{\HH}[2][says]{{\XSays{HH}{#1}{#2}}}
\newcommand{\HHX}[2][says]{{\XSaysExt{HH}{#1}{#2}}}

\newcommand{\maarten}[2][says]{{\XSays{Maarten}{#1}{#2}}}
\newcommand{\maartenX}[2][says]{{\XSaysExt{Maarten}{#1}{#2}}}

\newcommand{\email}[1]{\texttt{#1}}


\begin{document}
\title{Flow Computations on Imprecise Terrains}

\author{%
   Anne Driemel%
   \thanks{
      Department of Information and Computing Sciences,
      Utrecht University, The Netherlands,
      \email{anne@cs.uu.nl};
      This work has been supported by the Netherlands Organisation for
      Scientific Research (NWO) under RIMGA (Realistic Input Models
      for Geographic Applications).} %
   \and%
   Herman Haverkort%
   \thanks{
   Department of Computer Science,
   Eindhoven University of Technology,
   the Netherlands,
   \email{cs.herman@haverkort.net}
   }
   \and%
   Maarten L\"{o}ffler%
   \thanks{
   Computer Science Department,
   University of California, Irvine,
   USA,
   \email{mloffler@uci.edu};
   Funded by the U.S. Office of Naval Research under grant N00014-08-1-1015.}
   \and%
   Rodrigo Silveira%
   \thanks{
   Departament de Matem\`atica Aplicada II,
   Universitat Polit\`ecnica de Catalunya,
   Spain,
   \email{rodrigo.silveira@upc.edu};
   Supported by the Netherlands Organisation for Scientific Research (NWO).}
}

\maketitle

%


\begin{abstract} We study water flow computation on imprecise
terrains.  We consider two approaches to modeling flow on a terrain: one where
water flows across the surface of a polyhedral terrain in the direction of
steepest descent, and one where water only flows along the edges of a predefined
graph, for example a grid or a triangulation.  In both cases each vertex has an
imprecise elevation, given by an interval of possible values, while its
$(x,y)$-coordinates are fixed.  For the first model, we show that the problem of
deciding whether one vertex may be contained in the watershed of another is
NP-hard.  In contrast, for the second model we give a simple $O(n \log n)$ time
algorithm to compute the minimal and the maximal watershed of a vertex, or a set
of vertices, where $n$ is the number of edges of the graph.  On a grid model, we
can compute the same in $O(n)$ time.
\end{abstract}

\begin{centering}
\textit{Rose knew almost everything that water can do,\\there are an awful lot when you think what.}

\medskip
{\small Gertrude Stein, \textit{The World is Round}.}

\end{centering}

\section{Introduction}

Simulating the flow of water on a terrain is a problem that has been studied
for a long time in geographic information science (\textsc{gis}), and has
received considerable attention from the computational geometry community due
to the underlying geometric problems~\cite{bbd-crtt-96, ms-ecw-99,
bch-cfft-10}. It can be an important tool in analyzing flash floods for risk
management~\cite{borga-08}, for stream flow
forecasting~\cite{nature-koster-10}, and in the general study of
geomorphological processes~\cite{nature-craddock-10}, and it could contribute
to obtaining more reliable climate change predictions~\cite{tetzlaff-08}.

When modeling the flow of water across a terrain, it is generally assumed that
water flows downward in the direction of steepest descent.  It is common
practice to compute drainage networks and catchment areas directly from a
digital elevation model of the terrain.
Most hydrological research in \textsc{gis} models the terrain surface with a grid
in which each cell can drain to one or more of its eight neighbors~(e.g.~\cite{tarboton1997new}).
This can also be modeled as a computation on a graph, in which each node
represents a grid cell and each edge represents the adjacency of two neighbors
in the grid.
Alternatively, one could use an irregular network in which each node drains to
one or more of its neighbors, which may reduce the required storage space by
allowing less interesting parts of the terrain to have a lower sampling density.
We will refer to this as
the \emph{network model}, and we assume that, from every node, water flows down
along the steepest incident edge.
Assuming the elevation data is exact, drainage networks can be computed
efficiently in this model (e.g.~\cite{Danner2007}).
In computational geometry and topology, researchers have studied flow path and
drainage network computations on triangulated polyhedral
surfaces~(e.g.~\cite{deberg-10,tsiro-11, ls-ftt-04}). In this model, which we
call the \emph{surface model}, the flow of water can be traced across the
surface of a triangle. This avoids creating certain artifacts that arise when
working with grid models. However, the computations on polyhedral surfaces may
be significantly more difficult than on network models~\cite{tsiro-11}.

Naturally, all computations based on terrain data are subject to various sources
of uncertainty, like measurement, interpolation, and numerical errors.
The \textsc{gis} community has long recognized the importance of dealing with
uncertainty explicitly, in particular for hydrological modeling.  A common
approach is to model the elevation at a point of the
terrain using stochastic methods~\cite{w-udem-07}. However, the models
available in the hydrology literature are unsatisfactory~\cite{buytaert-08,sivakumar-08,montanari-06}
and computationally expensive~\cite{vrugt-2005}. A~particular challenge is posed by the fact that
hydrological computations can be extremely sensitive to small elevation
errors~\cite{hebeler20094,lindsay08}.
While most of these studies have been done in the network model, we note that 
there also exists work on the behaviour of watersheds on noisy terrains in the
surface model by Haverkort and Tsirogiannis \cite{ht-font-11}. 

\begin{figure}[t]\center
\includegraphics[width=0.88\textwidth]{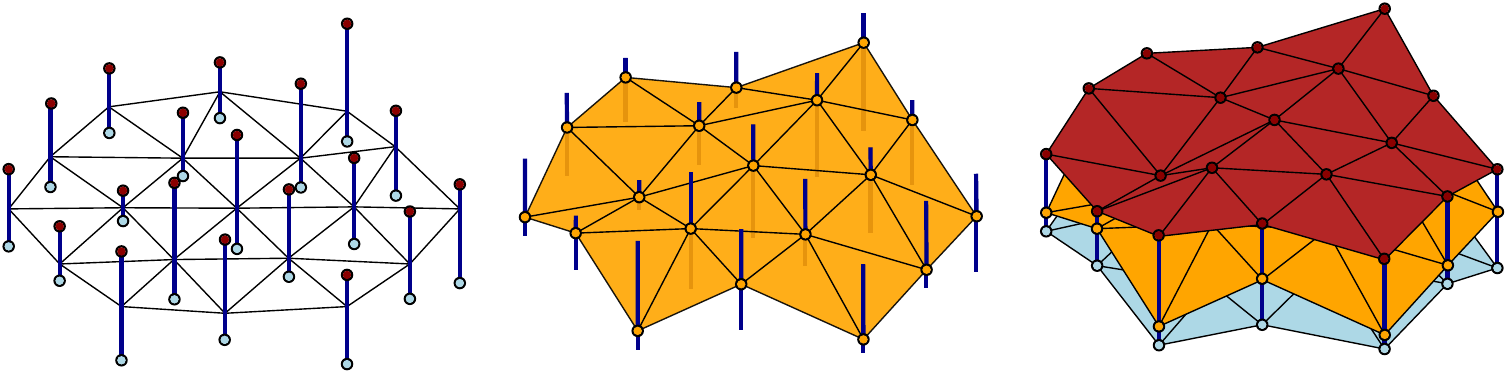}
\caption{
Left: An imprecise terrain. Each vertex of the triangulation has a elevation interval (gray).
Center: a realization of the imprecise terrain.
Right: the same realization together with the \emph{highest} and \emph{lowest} possible realizations of the imprecise terrain.
}
\label{fig:imprecise-terrain}
\end{figure}

A non-probabilistic model of imprecision that is often used in
computational geometry consists in representing an imprecise attribute (such as
location) by a region that is guaranteed to contain the true value.  This
approach has also been applied to polyhedral terrains~(e.g.~\cite{gls-smit-09,ke-opsp-07}),
replacing the exact elevation of each surface point by an imprecision interval
(see Figure~\ref{fig:imprecise-terrain}).
In this way, each terrain vertex does not have one fixed elevation, but a whole
range of possible elevations which includes the true elevation.
Choosing a concrete elevation for each vertex
results in a \emph{realization} of the imprecise terrain. The realization is a
(precise) polyhedral terrain.  Since the set of all possible realizations is
guaranteed to include the true (unknown) terrain, one can now obtain bounds on
parameters of the true terrain by computing the best- and worst-case values of
these parameters over the set of all possible realizations.
Note that we assume the error only in the $z$-coordinate (and not in the
$x,y$-coordinates). This is partially motivated by the fact that commercial
terrain data suppliers often only report elevation error~\cite{ft-ccedem-06}.
However, it is also a natural simplification of the model, since the true
terrain needs to have an elevation at any exact position in the plane.


In this paper we apply this model of imprecise terrains to problems related to
the simulation of water flow, both on terrains represented by surface models and
on terrains represented by network models.  One of the most fundamental
questions one may ask about water flow on terrains is whether water flows from a
point $p$ to another point $q$.
In the context of imprecise terrains, reasonable answers may be ``definitely
not'', ``possibly'', and ``definitely''.  The \emph{watershed} of a point in a
terrain is the part of the terrain that drains to this point.  Phrasing the same
question in terms of watersheds leads us to introduce the concepts of
\emph{potential} (maximal) and \emph{persistent} (minimal) watersheds.



\paragraph{Results}
In \secref{surface} we show that the problem of deciding whether water can flow
between two given points in the surface model is NP-hard.
%
Fortunately, the situation is much better for the network model, and therefore as a
special case also for the D-8 grid model which is widely adopted in \textsc{gis} applications.
In \secref{network} and \secref{regular} we present various results using this model.
In \secref{potential-ws} we present an algorithm to compute the potential
watershed of a point.  On a terrain with $n$ edges, our algorithm
runs in $O(n \log n)$ time; for grid models the running time can even be
improved to $O(n)$. We extend these techniques and achieve the same running times
for computing the potential downstream area of a point in \secref{deltas}
and its persistent watershed in \secref{pers-ws}.
In order to be able to extend these results in the network model, we define a
certain class of imprecise terrains which we call \emph{regular} in
\secref{regular}. We prove that persistent watersheds satisfy certain
nesting conditions on regular terrains in \secref{nesting}. This leads to
efficient computations of fuzzy watershed boundaries in \secref{fuzzy-boundary}, and
to the definition of the \emph{fuzzy ridge} in \secref{fuzzy-ridge}, which delineates the persistent
watersheds of the ``main'' minima of a regular terrain and which is equal to the
union of the areas where the potential watersheds of these minima overlap.
We can compute this structure in $O(n \log n)$ time (see
\thmref{compute:fuzzy:ridge}) and we discuss an algorithm that turns a
non-regular terrain into a regular one (see \secref{regularize}).
We conclude the paper with the discussion of open problems in
\secref{conclusions}.

\section{Preliminaries}

We give the definition of imprecise terrains and realizations and discuss the
two flow models used in this paper.

\subsection{Basic definitions and notation}
\label{sec:definitions}

We define an \emphi{imprecise terrain} $T$ as a possibly non-planar geometric
graph $G$ with nodes $V \subset \Re^2$ and edges $E \subseteq V \times V$, where each
node $v \in V$ has an imprecise third coordinate, which represents its
\emphi{elevation}.  We denote the bounds of the elevation of $v$ with \low(v) and
\high(v).  A \emphi{realization} $R$ of an imprecise terrain $T$ consists of the
given graph together with an assignment of elevations to nodes, such that for
each node $v$ its elevation \elev(R,v) is at least \low(v) and at most \high(v).
As such, it is a fully embedded graph $R=(V_R,E_R)$ in $\Re^3$, where $V_R$ is
defined as the set $\{(x,y,z) ~|~ \exists~ v \in V: v=(x,y), z=\elev(R,v)\}$.
Note that this defines a one-to-one correspondence between nodes of $V$ and
$V_R$. The edge set $E_R$ is induced by $E$ under this correspondence.
With slight abuse of notation we will sometimes refer to nodes of $V$ by their
corresponding nodes in $V_R$.
We denote with $\Rlow$ the realization,
such that $\elev(\Rlow,v)=\low(v)$ for every vertex $v$ and similarly the
realization $\Rhigh$, such that $\elev(\Rhigh,v)=\high(v)$.  The set of all
realizations of an imprecise terrain $T$ is denoted with \ReT.


Now, consider a realization $R$ of an imprecise terrain as defined above.
For any set of nodes $P \subseteq V_R$, we define the
\emphi{neighborhood} of $P$ as the set $N(P)=\{s: s \notin P \wedge \exists~ t
\in P : (s,t)\in E_R\}$. If $P$ is a connected set, all nodes of $P$ have the
same elevation and this elevation is strictly lower than the elevation of any
node in  $N(P)$, then $P$ constitutes a \emphi{local minimum}. Likewise, a local
maximum is a set of nodes at the same elevation of which the neighborhood is strictly lower.

\subsection{A model of discrete water flow}
\seclab{def:network:flow}

Consider a realization $R$ of an imprecise terrain as defined above.
If water is only allowed to flow along the edges of the realization, then the
realization represents a network. Therefore we refer to this model of water flow as the
\emphi{network model}. Below, we state more precisely how water flows in this
model and give a proper definition of the watershed. This model or variations of
it have been used before, for example in \cite{Danner2007, o1984extraction,
tarboton1997new}.

The steepness of descent (slope) of an edge $(p,q) \in E_R$ is defined as
$\steep(R,p,q) = (\elev(R,p)-\elev(R,q)) / |pq|$, where $|pq|$ is the Euclidean
distance between the corresponding nodes in $V$.  The node $q$ is a
\emphi{steepest descent neighbor} of $p$, if and only if $\steep(R,p,q)$ is
non-negative and maximal over all neighbours $q$ of $p$. Water that arrives in
$p$ will continue to flow to each of its steepest descent neighbors, unless $p$
constitutes a local minimum.  If there exists a local minimum $P \ni p$, then
the water that arrives in $p$ will flow to the neighbors of $p$ in $P$ and
eventually reach all the nodes of $P$, but it will not flow further to any node
outside the set $P$. If water from $p$ reaches a node $q \in V_R$ then we write $p \flowsto{R} q$
(``$p$ flows to $q$ in $R$''), and for technical reasons we define $p \flowsto{R} p$ for all $p$ and $R$.

The \emphi{discrete watershed} of a node $q$ in a realization $R$ is
defined as the union of nodes that flow to $q$ in $R$, that is
$\WS(R,q):=\{p ~:~ p \flowsto{R}q\}$. Similarly, we define the discrete watershed of
a set of nodes $Q$ in this realization as $\WS(R,Q):=\bigcup_{q\in Q}\WS(R,q)$.

Consider the graph $G$ of the imprecise terrain.
Let $\pi$ be a path in $G$, with no repeated vertices.  We say $\pi$ is a
\emphi{flow path} in a realization $R$ if it carries water to a local minimum
and visits all nodes of this local minimum in $R$.  For two consecutive vertices
$p, q$ in $\pi$, $q$ is a steepest descent neighbour of $p$.
For any pair of nodes $p, q$ in $\pi$, we write $p \flowsvia{\pi} q$ if $p
\flowsto R q$, that is, $\pi$ contains $p$ and $q$ in this order.  We denote
with $\pi[p,q]$ the subpath of $\pi$ from $p$ to $q$, including these two nodes.
For any given set of realizations $S \subseteq \ReT$, we denote with $\FP(S)$
the set of all flow paths in any realization in $S$.

\subsubsection{Flow paths are stable}
\seclab{stable:flow}

This subsection is a note on flow paths, which we defined for the network model
above. We define when a flow path is stable and argue that any flow path induced by
a realization in $\ReT$ is stable with respect to some $\eps-$neighborhood of \ReT.
Intuitively, the analysis in this section shows that the flow paths considered
in our model are never the result of an isolated degenerate situation, but could also
exist if the estimated elevation intervals of the vertices would be slightly different.
This may serve as a justification or proof of soundness of the network model.

If for two realizations $R, R' \in \ReT$ and any node $v\in {V_R}$ and its
corresponding node $v' \in V_{R'}$ we have $|vv'| \leq \eps$, then we call $R'$ an \emphi{$\eps-$perturbation}
of $R$. For a set of realizations $S$, let $S^{\eps}$ denote the union of $S$
with the $\eps-$perturbations of elements of $S$.  We say that a flow
path $\pi \in \FP(S)$ is \emphi{stable} with respect to $S$ if for some $\eps > 0$
the flow path exists in any $\eps-$perturbation of some $R \in S$ (we call $R$
the \emphi{perturbation center}).  Let $\FPstable(S) \subseteq \FP(S)$ denote the
subset of flow paths that is stable with respect to $S$.
We call a realization which does not contain horizontal edges and in which any
node has at most one steepest descent neighbor \emphi{non-ambiguous}, similarly,
a realization for which any of these properties does not hold is called
ambiguous.   We have the following
lemma, which implies that any flow path induced by an element of $\ReT$ is
stable with respect to $\ReT^{\eps}$, for any $\eps>0$.

\begin{lemma}\lemlab{all:stable}
For any set of realizations $S$, we have that $\FP(S) \subseteq \lim_{\eps
\rightarrow 0} \FPstable(S^{\eps})$.
\end{lemma}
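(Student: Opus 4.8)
The plan is to unpack the limit and reduce the statement to a uniform stability claim: it suffices to show that every flow path $\pi \in \FP(S)$ lies in $\FPstable(S^{\eps})$ for every $\eps>0$, since this places $\pi$ in the set $\lim_{\eps\to 0}\FPstable(S^{\eps})$. So I fix a realization $R\in S$ in which $\pi=(v_0,\dots,v_k)$ is a flow path and observe that ``$\pi$ is a flow path in $R$'' is a finite conjunction of conditions: for each $i<k$ the node $v_{i+1}$ is a steepest descent neighbour of $v_i$, meaning $\steep(R,v_i,v_{i+1})\ge 0$ and $\steep(R,v_i,v_{i+1})\ge\steep(R,v_i,w)$ for every other neighbour $w$ of $v_i$, and the path terminates in a local minimum all of whose nodes it visits. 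The weak inequalities and the horizontal edges inside the terminal minimum are precisely the ambiguous features; the strategy is to replace $R$ by a nearby \emph{non-ambiguous} realization in which each of these conditions holds \emph{strictly}, and then to convert strictness into an open neighbourhood of perturbations in which $\pi$ persists.

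Concretely, I would construct a perturbation centre $R^{\star}$ by nudging the elevations of $R$ by at most $\eps$ so that (i) no edge is horizontal; (ii) for every $i<k$ the strict inequality $\steep(R^{\star},v_i,v_{i+1})>\steep(R^{\star},v_i,w)$ holds for each neighbour $w\neq v_{i+1}$ of $v_i$, breaking every tie in favour of the successor chosen by $\pi$; and (iii) the terminal flat minimum is collapsed to a single strict sink at $v_k$, so that $v_k$ lies strictly below all of its neighbours. Only finitely many competing slopes must be separated, and each can be shifted by an arbitrarily small amount, so the whole perturbation can be kept well within the budget $\eps$; thus $R^{\star}$ is an $\eps$-perturbation of $R\in S$ and hence belongs to $S^{\eps}$.

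To turn strictness into stability, note that each surviving condition is a strict inequality between finitely many functions that are affine in the node elevations (the denominator $|pq|$ in $\steep$ is a fixed constant, since the $(x,y)$-coordinates do not move). Strict affine inequalities are open conditions, so there is a $\delta>0$ such that all of them remain valid in every $\delta$-perturbation of $R^{\star}$: each step of $\pi$ stays the unique steepest descent and $v_k$ stays the unique local minimum, whence $\pi$ is still a flow path. This is exactly stability of $\pi$ with respect to $S^{\eps}$ with perturbation centre $R^{\star}$, giving $\pi\in\FPstable(S^{\eps})$ for every $\eps>0$ and therefore $\pi\in\lim_{\eps\to 0}\FPstable(S^{\eps})$.

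The hard part will be step~(iii): collapsing the terminal flat minimum to a single sink while keeping $\pi$ a valid flow path. Since $\pi$ may traverse the minimum in a zig-zag, a careless straightening perturbation can make a chord between two non-consecutive minimum nodes steeper than the step actually taken by $\pi$, diverting water off the path; a perturbation proportional to edge lengths even makes every such shortcut at least as steep by the triangle inequality. Resolving this requires choosing the elevations inside the former minimum so that, for each visited node, the drop to its $\pi$-successor strictly dominates the drop along every incident chord. I expect that this can be arranged with an arbitrarily small perturbation by ordering the minimum's nodes along $\pi$ and assigning the slope separations hierarchically, and I regard the careful bookkeeping of these slope comparisons inside the minimum as the delicate core of the proof; it is also exactly the situation that the $\lim_{\eps\to 0}$ formulation is designed to accommodate, as we never need one fixed robust realization but only, for each shrinking $\eps$, some non-ambiguous centre within $\eps$ that routes water through all former-minimum nodes in the order prescribed by $\pi$.
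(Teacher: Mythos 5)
Your strategy coincides with the paper's own proof: for each fixed $\eps>0$ one shows $\FP(S)\subseteq\FPstable(S^{\eps})$ by producing a non-ambiguous realization within distance $\eps$ of the inducing one that still induces $\pi$, and then uses the openness of the resulting strict inequalities. The paper implements your steps (i)--(iii) in a single sentence, by lowering the $i$-th path node by $\eps/2+(i\eps)/(4k)$ and perturbing all other nodes by less than $\eps/4$; your ``hierarchical'' tie-breaking is the same device, and your remark that ties occurring at \emph{distinct} elevations can always be broken in favour of the path (equality of slopes together with $\elev(R,p_j)<\elev(R,p_{i+1})$ forces $|p_ip_j|>|p_ip_{i+1}|$, which is exactly the geometric slack needed) is correct.

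However, the step you honestly flag as unresolved is a genuine gap, and the resolution you hope for does not exist. Consider a flat local minimum of three mutually adjacent nodes $u_1,u_2,u_3$ at pairwise distance $1$, visited by $\pi$ in that order. Any non-ambiguous realization inducing $\pi$ must satisfy $\elev(R,u_1)>\elev(R,u_2)>\elev(R,u_3)$, and then $\steep(R,u_1,u_3)>\steep(R,u_1,u_2)$, so $u_2$ is not the steepest descent neighbour of $u_1$: no hierarchical (or any other) assignment of depths works. Worse, since every $\eps'$-neighbourhood of every candidate perturbation centre contains non-ambiguous realizations and none of these induces $\pi$, such a $\pi$ is not stable with respect to any set of realizations, so the difficulty cannot be pushed into a cleverer choice of centre. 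The obstruction in general is any chord $(p_i,p_j)$ with $j>i+1$, $\elev(R,p_j)=\elev(R,p_{i+1})$ and $|p_ip_j|\le|p_ip_{i+1}|$. The paper's published proof silently steps over exactly this configuration (its lowering scheme makes such a chord at least as steep as the path edge, as you yourself observe via the triangle inequality), so your attempt is no less complete than the original; but as written, step (iii) fails, and closing it requires either restricting to paths and terrains without such chords (for instance, singleton local minima) or weakening what it means for $\pi$ to ``exist'' in a perturbation.
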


\begin{proof}
Given any value $\eps > 0$, we argue that the set  $\FP(S)$ is contained
in the set $\FPstable(S^{\eps})$.  Clearly, any flow path induced by a
non-ambiguous realization $R$ is stable with perturbation center $R$.  Now, let
$\pi=p_1,p_2,\dots,p_k$ be a flow path from $p_1$ to $p_k$ which is induced by
an ambiguous realization $R \in S$.  We lower each node $p_i$ by
$\eps/2+(i\eps)/(4k)$ and perturb the remaining vertices by some value smaller
than $\eps/4$ to create a non-ambiguous realization which also induces $\pi$.
This proves the claim.
\end{proof}


\subsection{A model of continuous water flow}
\seclab{def:surface:flow}

Consider an imprecise terrain, where the graph that represents
the terrain forms a planar triangulation in the $(x,y)$-domain.
Any realization of this terrain is a polyhedral terrain with a triangulated
surface. If we assume that the water which arrives at a particular
point $p$ on this surface will always flow in the true direction of steepest
descent at $p$ across the surface, possibly across the interior of a triangle,
then we obtain a continuous model of water flow. Since the
steepest descent paths do not necessarily follow along the edges of the
graph, but instead lead across the surface formed by the graph,
we call this model the \emphi{surface model}.
This model has been used before, for example in \cite{deberg-10,tsiro-11,
ls-ftt-04}.

Since, as we will show in the next section, it is already NP-hard to decide whether water from a point $p$ can potentially flow to another point $q$, we will focus on the network model in the rest of the paper, and we do not formally define watersheds in the surface model. Therefore, we will simply use the term ``watershed'' to refer to discrete watersheds in this paper.




\section{NP-hardness in the surface model}
\seclab{surface}

In the surface model water flows across the surface of a polyhedral
terrain; refer to \secref{def:surface:flow} for the details of the model.
In this section we prove that it is NP-hard to decide whether water potentially
flows from a point $s$ to another point~$t$ in this model. The reduction is from
\mbox{3-SAT}; the input is a 3-CNF formula with $n$ variables and $m$ clauses.
We first present the general idea of the proof, then we proceed with a
detailled description of the construction, and finally we prove the correctness.


\subsection{Overview of the construction}
The main idea of the NP-hardness construction is to encode the variables and
clauses of the \mbox{3-SAT} instance in an imprecise terrain, such that a truth
assignment to the variables corresponds to a realization---i.e., an assignment
of elevations---of this terrain.  If and only if all clauses are satisfied,
water will flow from a certain starting vertex $s$ to a certain target vertex $t$.
We first introduce the basic elements of the construction: channels and switch gadgets.

\begin{wrapfigure}[6]{r}{0.3\textwidth}  \centering
\includegraphics[scale=0.6]{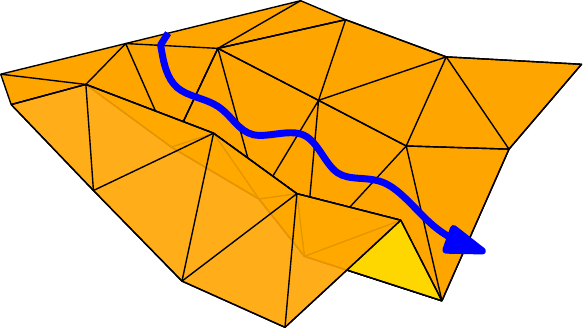}
\end{wrapfigure}
\paragraph{Channels}
We can mold channels in the fixed part of the terrain to route water
along any path, as long as the path is monotone in the direction of steepest
descent on the terrain. We do this by increasing the elevations of vertices
next to the path, thus building walls that force the water to stay in the
channel. We can end a channel in a local minimum anywhere
on the fixed part of the terrain, if needed.

\paragraph{Switch gadgets}
The general idea of a \emph{switch gadget} is that it provides a way for water
to switch between channels.  A simple switch gadget has one incoming channel,
three outgoing channels, and two \emph{control vertices} $a$ and $b$, placed on
the boundary of the switch. The water from the incoming channel has to flow
across a central triangle, which is connected to $a$ and~$b$.
Depending on their elevations, the two vertices $a$ and $b$ divert the water
from the incoming channel to a particular outgoing channel and thereby
``control'' the behaviour of the switch gadget.  This is possible, since the
slope of the central triangle, which the water needs to pass, depends on the
elevations of $a$ and $b$ and those two are the only vertices with imprecise
elevations. The elevations of the vertices which are part of the channels are
fixed.  Refer to \figref{hardness-switch} for an illustration.

\begin{figure}[thb]\center
\includegraphics[width=0.3\textwidth,page=2]{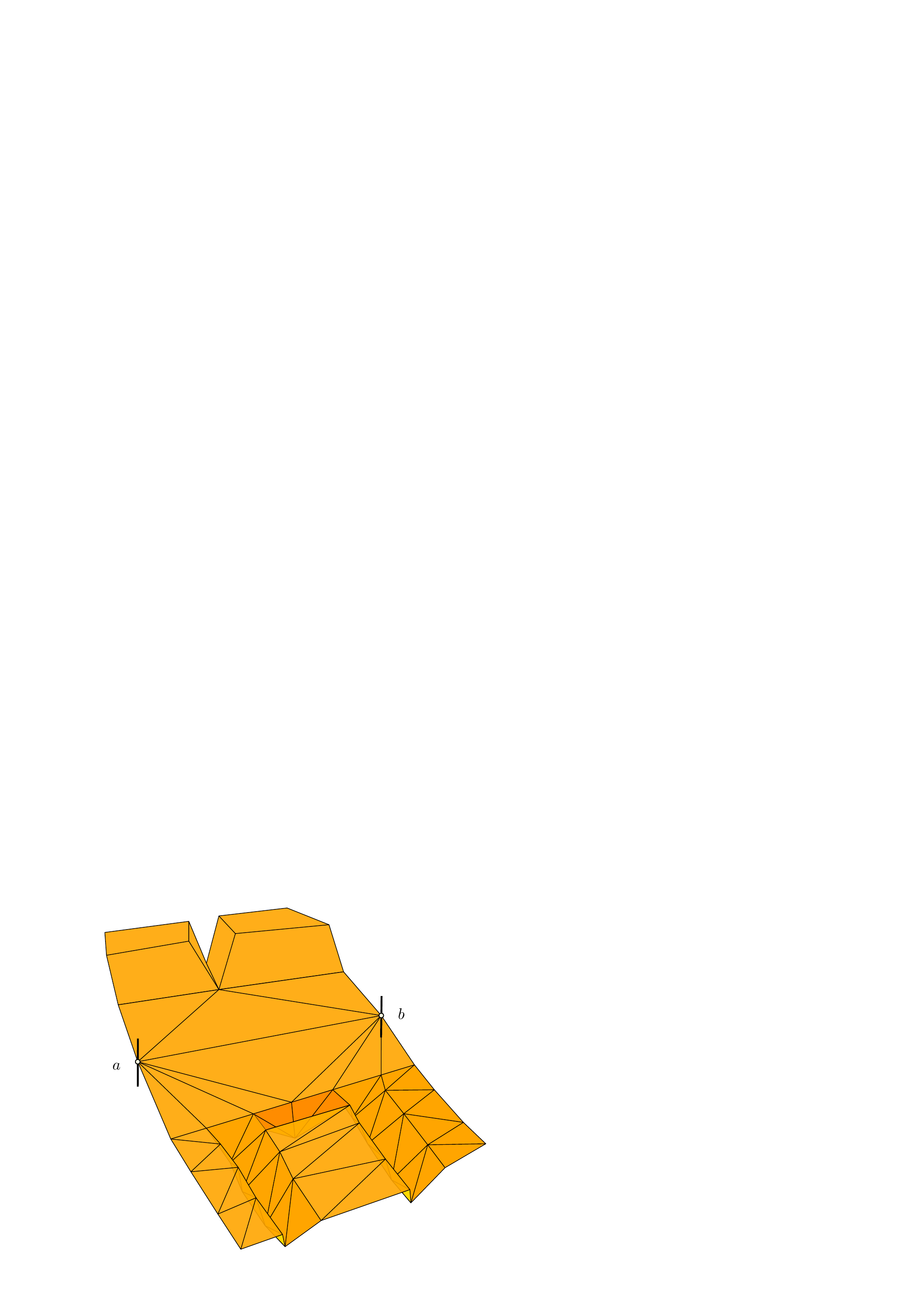}
\includegraphics[width=0.3\textwidth,page=4]{figs/switch}
\includegraphics[width=0.3\textwidth,page=3]{figs/switch}
\caption{Three different states of a simplistic switch gadget. }
\figlab{hardness-switch}
\end{figure}

We can also build switches for multiple incoming channels. In this case, every
incoming channel has its own dedicated set of outgoing channels, and it is also
controlled by only two vertices, see \figref{hardness-switch-multiple}. Note
that we can lead the middle outgoing channel to a local minimum as shown in the
examples and, in this way, ensure that, if any water can pass the switch, the
elevations of its control vertices are at unambigous extremal elevations.
Depending on the exact construction of the switch, we may want them to be at
opposite extremal elevations or at corresponding extremal elevations.

\begin{figure}[thb]\center
\includegraphics[width=0.45\textwidth,page=2]{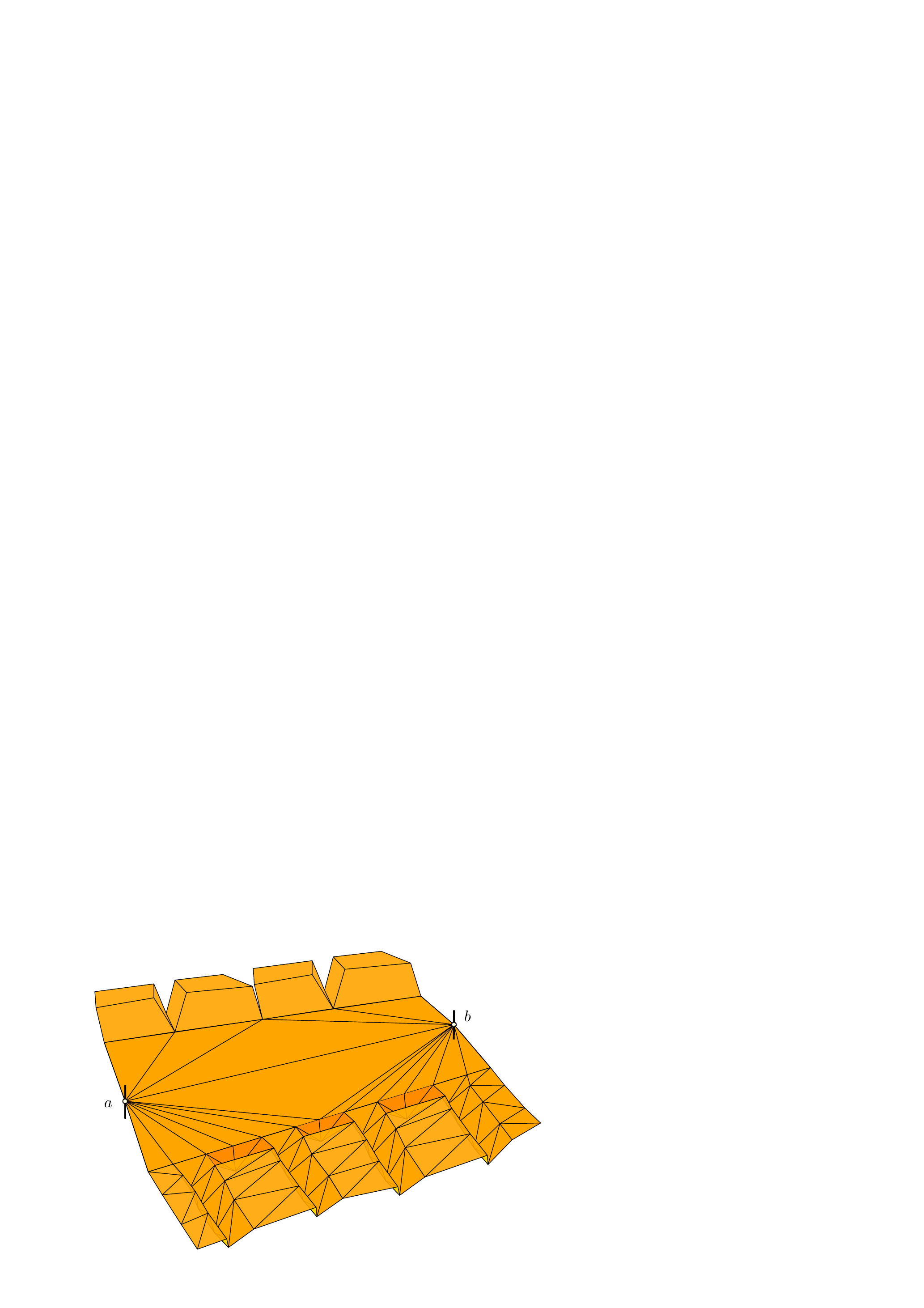}
\includegraphics[width=0.45\textwidth,page=3]{figs/multiple_switch}
\caption{Sketch of a switch with multiple incoming channels.}
\figlab{hardness-switch-multiple}
\end{figure}

\paragraph{Global layout}
The global layout of the construction is depicted in
Figure~\ref{fig:hardness-global}. The construction contains a grid of $m \times
n$ cells, in which each clause corresponds to a column and each variable to a
row of the grid. The grid is placed on the western slope of a ``mountain'';
columns are oriented north-south and rows are oriented east-west. We create a
system of channels that spirals around this mountain, starting from $s$ at the
top and ending in $t$ at the bottom of the mountain.
We ensure that in no realization, water from $s$ can escape this channel system
and, if it reaches $t$, we know that it followed a strict course that passes
through every cell of the grid exactly once, column by column from east to
west, and within each column, from north to south.
Embedded in this channel system, we place a switch gadget in every cell
of the grid, which allows the water from $s$ to ``switch'' from one channel to another
within the current column depending on the elevations of the vertices that control the
gadget.  In this way, the switch gadgets of a row encode the state of a variable.
To ensure that the state of a variable is encoded consistently across a row of
the grid, the switch gadgets in a row are linked by their control vertices.
Every column has a dedicated entry point at its north end, and a dedicated exit
point at its south end. If and only if water flows between these two points,
the clause that is encoded in this column is satisfied by the corresponding
truth assignment to the variables.
The slope of the mountain is such that columns descend towards the south,
and the exit point of each column (except the westernmost one) is lower than
the entry point of the adjacent column to the west; water can flow between
these points through a channel around the back of the mountain. The easternmost
column's entry point is the starting vertex $s$, and the westernmost column's
exit point is the target vertex $t$.

%
%

\begin{figure}[t]
\centering
\includegraphics[width=0.60\textwidth]{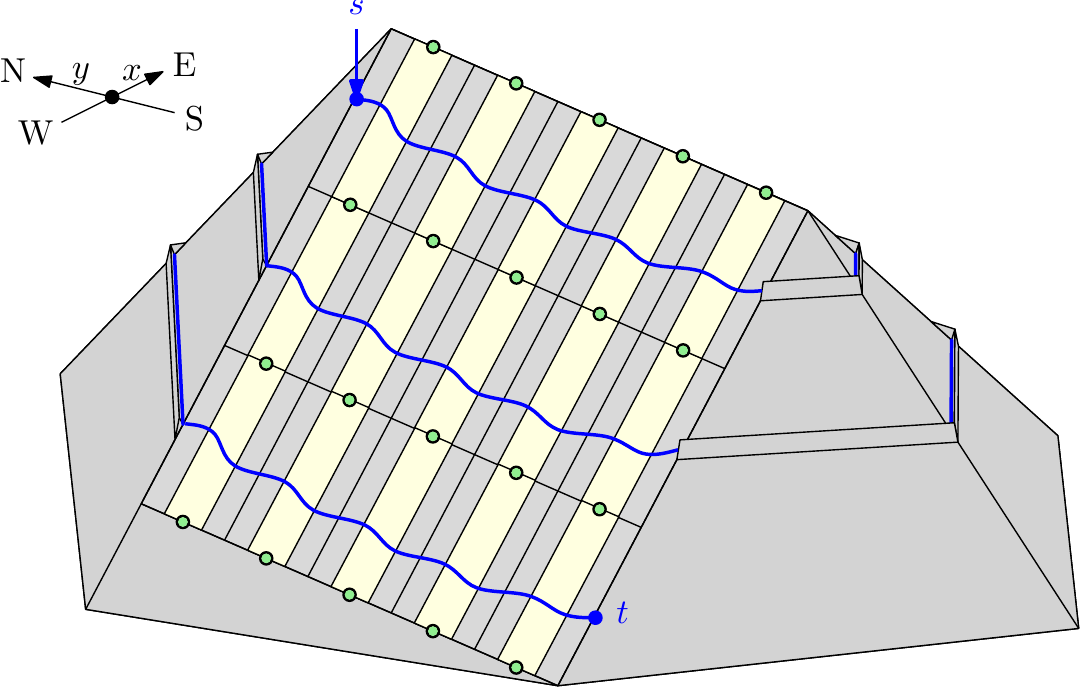}
\hfil
\includegraphics[width=0.25\textwidth]{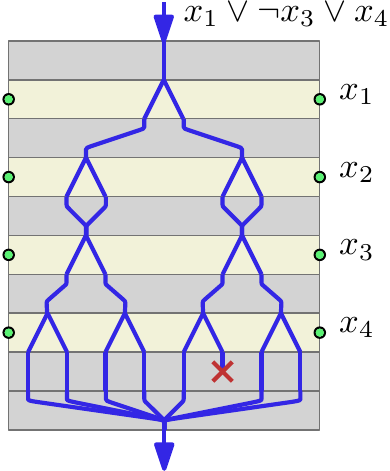}
\caption {
Left: Global view of the NP-hardness construction, showing the grid on the
mountain slope.  The fixed parts are shown in gray, the variable parts are
shown light yellow and the imprecise vertices are filled light green;
Right: Detail of a clause, which forms one of the columns of the grid.}
\label{fig:hardness-global}
\end{figure}

%
\paragraph{Clause columns}
To encode each clause, we connect the switch gadgets in a column of the global grid by
channels in a tree-like manner. By construction, water will arrive in a different channel
at the bottom of the column for each of the eight possible combinations of truth
values for the variables in the clause. This is possible
because a switch gadget can switch multiple channels simultaneously. We let the channel
in which water would end up if the clause is not satisfied lead to a local
minimum; the other seven channels merge into one channel that leads to the exit
point of the clause. The possible courses that water can take will also cross switch
gadgets of variables that are not part of the clause: in that case, each course
splits into two courses, which are merged again immediately after emerging from
the switch gadget. Figure~\ref {fig:hardness-global} (right) shows an example.


\paragraph{Sloped switch gadgets}
Since the grid is placed on the western slope of a mountain, water on the
central triangle of a switch will veer off towards the west, regardless of the
elevations of its control vertices.  However, as we will see, we can still
design a working switch gadget in this case.  Recall that we link the switch
gadgets of a variable row by their control vertices, such that each switch
gadget shares one control vertex with its neighboring cell to the west and one
with its neighboring cell to the east.  As mentioned before, such a row encodes the
state of a particular variable. We say that it is in a consistent state if
either all control vertices of the switches are \emph{high}
or all control vertices are \emph{low}.  Thus, we will use the following
assignment of truth values to the elevations of the control vertices of our
switch gadgets: both vertices set to their highest elevation
encodes \emph{true}; both vertices set to their lowest elevation
encodes \emph{false}; other combinations encode \emph{confused}. Depending
on the truth value encoded by the elevations of the imprecise vertices, water
that enters the gadget will flow to different channels. The channels in which
the water ends up when the gadget reads \emph{confused} always lead to a local
minimum. For the other channels, their destination depends on the clause.
In \figref{hardness-switch-sloped} you can see a sketch of a sloped switch
gadget which works the way described above.

\begin{figure}[htb]\center
\includegraphics[width=0.3\textwidth,page=1]{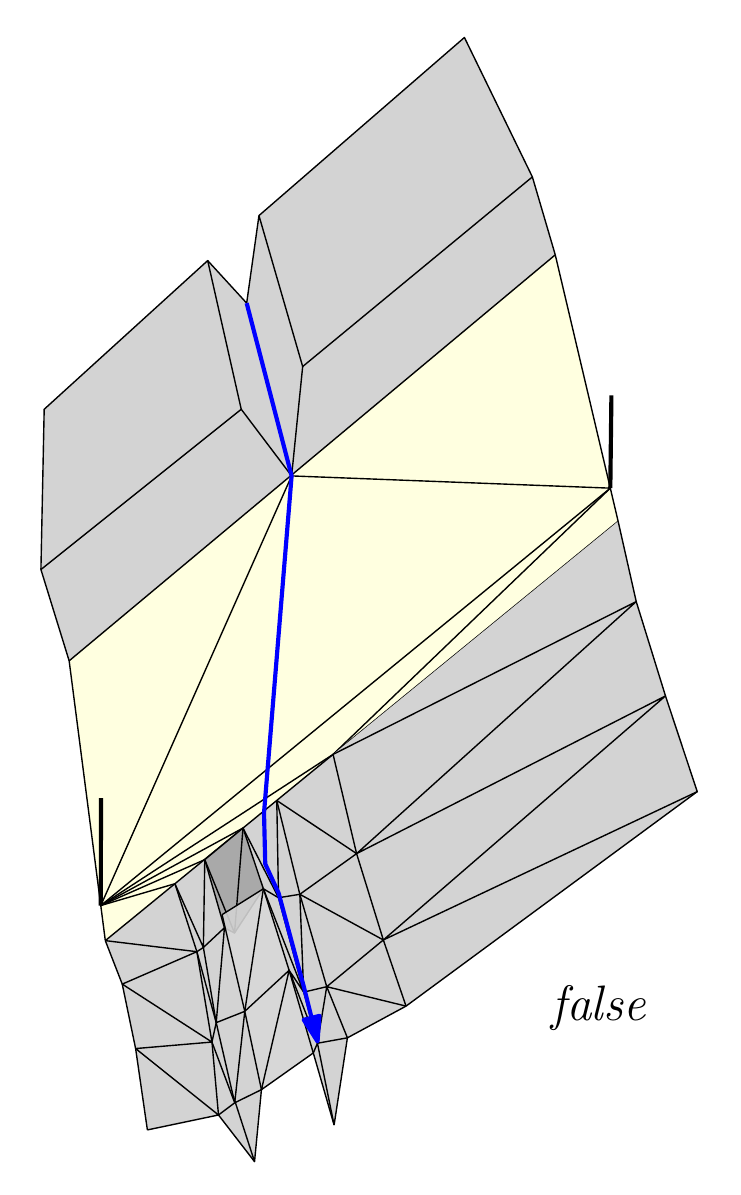}
\includegraphics[width=0.3\textwidth,page=3]{figs/sloped_switch_color}
\includegraphics[width=0.3\textwidth,page=2]{figs/sloped_switch_color}
\caption{Illustration of a sloped switch gadget similar to the one used in the final construction.
The final gadget has multiple incoming channels, which is not shown in this
figure. }
\figlab{hardness-switch-sloped}
\end{figure}

\subsection{Details of the construction}

Recall that we are given a 3-SAT instance with $n$ variables and $m$ clauses. The central part of the construction, which will contain the gadgets, consists of a grid of $n$ rows---one for each variable---and $m$ columns---one for each clause. We denote the width of each row, measured from north to south, by $B = 400$, and the width of a column, measured from west to east, by $A = \max((n+1) \cdot B, 4000)$. Ignoring local details, on any line from north to south in this part of the construction, the terrain descends at a rate of $dz/dy = 1$, and on any line from east to west, it descends at a rate of $dz/dx = 1$; thus we have $z = x+y$. Observe that each column measures $n B < A$ from north to south; thus the southern edge of each column is at a higher elevation than the northern edge of the next column to the west. The dedicated entrance and exit points of column $1 \leq j \leq n$ are placed at
$(jA - \frac12A, nB, jA - \frac12A + nB)$ and
$(jA - \frac12A, 0, jA - \frac12A)$, thus allowing the construction of a descending channel from each column's exit point to the entry point of the column to the west.

For every variable $v_i$, $1 \leq i \leq n$, we place $m+1$ imprecise vertices $v_{ij}$, for $0 \leq j \leq n$,\break in row~$i$, on the boundaries of the columns corresponding to the $m$ clauses. Vertex $v_{ij}$ has $x$-coordinate~$jA$, $y$-coordinate $iB - \frac12B$, and an imprecise $z$-coordinate $[jA+iB-\frac12B,\break jA+iB-\frac12B+20]$. On every pair of imprecise vertices $v_{i(j-1)}, v_{ij}$ we build a switch gadget $G_{ij}$; thus there is a switch gadget for each variable/clause pair. The coordinates of the vertices in each gadget, relative to the coordinates of $v_{i(j-1)}$, can be found in \figref{gradients}.



\paragraph {Switch gadget construction}
We use the sloped switch gadget described above and illustrated in \figref{hardness-switch-sloped}.
Our switch gadget occupies a rectangular area that is $A$ wide from west to east,
and $41$ wide from north to south. Its key vertices and their coordinates,
relative to each other, can be found in \figref{gradients}. There are two
imprecise vertices, $d$ and $e$, with elevation range $[0,20]$ and $[A, A+20]$,
respectively---so in any realization, their elevations have the form $20\alpha$
and $A+20\beta$, respectively, where $\alpha, \beta \in [0,1]$.

\begin{figure}
\centering
\includegraphics[scale=1.15]{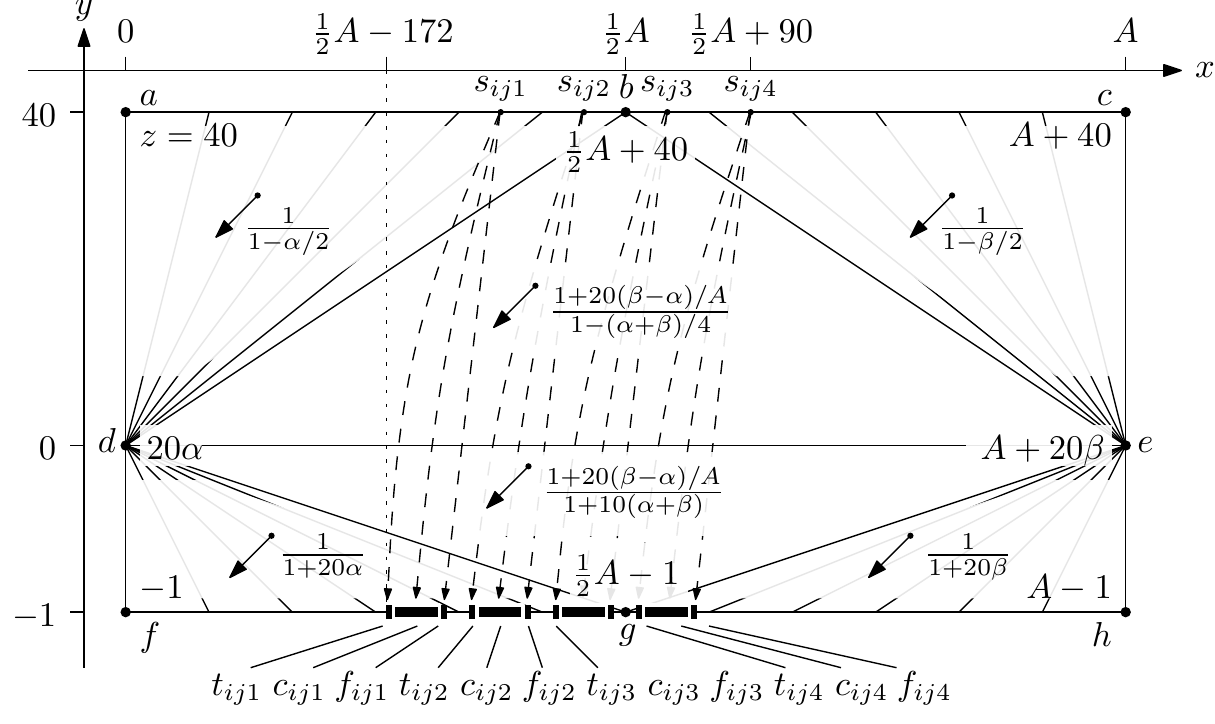}
\caption{Distances and gradients on a connector gadget. All coordinates are relative to the lowermost position of the control vertex $d = v_{i(j-1)}$. The other control vertex is $e = v_{ij}$. Thus, $e$ and $d$ are the only imprecise vertices. The $x$- and $y$-coordinates of the vertices are indicated on the axes. The elevations of the key vertices are written next to the vertices. The elevations of the control vertices are expressed as a function of $\alpha, \beta \in [0,1]$. The directions of steepest descent on the different faces of the gadget are expressed in the form $dy/dx$, as a function of $\alpha$ and $\beta$.}
\figlab{gradients}
\end{figure}

On the north edge of the gadget, there may be many more vertices, all collinear with $a$, $b$ and $c$. The vertices on the western half of the north edge are connected to the western control vertex, and the vertices on the eastern half of the north edge are connected to the eastern control vertex. In particular, each gadget $G_{ij}$ is designed to receive water from four channels that arrive at four points $s_{ij1}, s_{ij2}, s_{ij3}, s_{ij4}$ on the north edge, close to $b$; the coordinates of these points are $s_{ijk} = (\frac12A - 150 + 60k, 40, \frac12A - 110 + 60k)$.

On the south edge of the gadget, there is a similar row of vertices, all
collinear with $f$, $g$ and $h$, that are connected to the control vertices. To
the south, the gadget is connected to twelve channels that catch all water that
arrives at certain intervals on the south edge: for each $k \in \{1,2,3,4\}$,
there is a \emph{western channel} $t_{ijk}$ catching all water arriving between
$s_{ijk} - (82,41,123)$ and $s_{ijk} - (77,41,118)$, a \emph{middle channel}
$c_{ijk}$ catching all water arriving between $s_{ijk} - (77,41,118)$ and
$s_{ijk} - (44,41,85)$, and an \emph{eastern channel} $f_{ijk}$ catching all
water arriving between $s_{ijk} - (44,41,85)$ and $s_{ijk} - (39,41,80)$.

In a particular realization $R$, we define the switch gadget to be in a
\emph{false} state if $\alpha = \beta = 0$, in a \emph{true} state if $\alpha =
\beta = 1$, and in a \emph{confused} state if $\alpha \leq \frac12$ while $\beta
\geq \frac12$, or if $\alpha \geq \frac12$ while $\beta \leq \frac12$. As we
will show below, in the true, false, and confused states the gadget leads any
water that comes in at any point $s_{ijk}$ into $t_{ijk}$, $f_{ijk}$, and
$c_{ijk}$, respectively.


We model the fixed part of the terrain such that the middle channels all lead to
local minima. The western and eastern channels correspond to a (partial) truth
assignment of the variables of the clause that is represented by the column that
contains the gadget; these channels lead to a local minimum or to the next row,
as described below.

\paragraph{Constructing the clause columns}

Each clause is modeled in a column $j$ by making certain connections between the outgoing channels of each gadget to the dedicated entrance points of the gadget in the next row. Observe that by our choice of $B$, the entrance point of column $j$ lies above all entrance points of $G_{nj}$, all outgoing channels of any gadget $G_{(i+1)j}$ start at higher elevations than all entrance points of $G_{ij}$, and all outgoing channels of $G_{1j}$ start at an elevation higher than the exit point of the column. This ensures that all channels described below can indeed be built as monotonously descending channels, so that water can flow through it. We will now explain the connections which we use to build a clause.

Let $p > q > r$ be the indices of the variables that appear in the clause. The water courses modelling the clause start at the entry point of the column, from which any water is led through a channel to entry point $s_{nj1}$ of gadget $G_{nj}$.

For $i \neq \{p,q,r\}, k \in \{1,2,3,4\}$, we connect both $t_{ijk}$ and $f_{ijk}$ to $s_{(i-1)jk}$ (if $i > 1$) or to the exit point of the column (if $i = 1$).

We connect $t_{pj1}$ and $f_{pj1}$ to $s_{(p-1)j1}$ and $s_{(p-1)j2}$, respectively. Thus, for $i \in \{q,...,p-1\}$, water that enters $G_{ij}$ at $s_{ij1}$ and $s_{ij2}$ represents the cases that $p$ is true and $p$ is false, respectively.

We connect $t_{qj1}, f_{qj1}, t_{qj2}$ and $f_{qj2}$ to $s_{(q-1)j1}, s_{(q-1)j2}, s_{(q-1)j3}$ and $s_{(q-1)j4}$, respectively. Thus, for $i \in \{r,...,q-1\}$, water that enters $G_{ij}$ at $s_{ij1}, s_{ij2}, s_{ij3}$ and $s_{ij4}$ represents the four different possible combinations of truth assignments to $p$ and $q$, respectively.

The eight channels $t_{rj1}, f_{rj1}, t_{rj2}, f_{rj2}, t_{rj3}, f_{rj3}, t_{rj4}, f_{rj4}$ now represent the eight different possible combinations of truth assignments to the variables of the clause. The channel that corresponds to the truth assignment that renders the clause false, is constructed such that it ends in a local minimum. The other seven channels all lead to $s_{(r-1)j1}$ (if $r > 1$) or to the exit point of the clause column (if $r = 1$).

\paragraph{Analysis of flow through a gadget}
Below we will analyse where water may leave a gadget $G_{ij}$ after entering the gadget at point $s_{ijk}$, with $x$-coordinate $x_k$. In the discussion below, all coordinates are relative to the lowermost position of the western control vertex of the gadget---refer to \figref{gradients}, which also shows the directions of steepest descent (expressed as $dx/dy$) on each face of the gadget.

First observe that in any case, the directions of steepest descent on $\triangle abd$, $\triangle bce$ and $\triangle bed$ are at least $1-20/A \geq 199/200 = 0.995$ and at most $(1+20/A)/(1/2) \leq 201/100 = 2.01$. Thus, when the water reaches $y$-coordinate $38$, it will be at $x$-coordinate at least $x_k - 4.02$ and at most $x_k - 1.99$.

Note that the line $bd$ intersects the plane $y = 38$ at $x = \frac12 A - \frac1{40} A \leq \frac12 A - 100$, and the line $be$ intersects the plane $y = 38$ at $x = \frac12 A + \frac1{40} A \geq \frac12 A + 100$. By our choice of coordinates for the entrance points $s_{ijk}$, we have $|x_k - \frac12A| \leq 90$; therefore the water will be on $\triangle bed$ when it reaches $y = 38$. Let $g_{\max}$ and $g_{\min}$ be the maximum and minimum possible gradients $dx/dy$ on $\triangle bed$, respectively. Thus, the water will reach the line $de$ at $x$-coordinate at least $x_k- 4.02 - 38 g_{\max}$ and at most $x_k - 1.99 - 38 g_{\min}$.

Finally, the directions of steepest descent on $\triangle dgf$, $\triangle egh$ and $\triangle deg$ are more than 0 and less than $1 + 20/A \leq 201/200 < 1.01$. Thus, the water will reach the line $fh$ at $x$-coordinate more than $x_k - 5.03 - 38 g_{\max}$ and less than $x_k - 1.99 - 38 g_{\min}$.

We will now consider five classes of configurations of the control vertices in the gadget, and compute the interval of $x$-coordinates where water may reach the line $fh$ in each case.

\begin{itemize}
\item \textit{$\alpha = \beta = 1$ (true state)}
In this case we have $g_{\max} = g_{\min} = 2$, so water will reach the line $fh$ within the $x$-coordinate interval $(x_k - 81.03, x_k - 77.99)$, and thus it will flow into channel $t_{ijk}$.

\item \textit{$\alpha + \beta > \frac32$ (true-ish state)}
In this case we have $g_{\max} \leq (1+20/A)/(1/2) \leq 2.01$ and $g_{\min} \geq (1-20/A)/(1-3/8) \geq 199/125 > 1.59$. Thus water will reach the line $fh$ within the $x$-coordinate interval $(x_k - 81.41, x_k - 62.41)$, and thus it will flow into channel $t_{ijk}$ or $c_{ijk}$.

\item \textit{$\frac12 \leq \alpha+\beta \leq \frac32$ (this includes all confused states)}
In this case we have $g_{\max} \geq (1+20/A)/(1-3/8) \leq 201/125 < 1.61$ and $g_{\min} \geq (1-20/A)/(1-1/8) \geq 199/175 > 1.13$. Thus water will reach the line $fh$ within the $x$-coordinate interval $(x_k - 66.21, x_k - 44.93)$, and thus it will flow into channel $c_{ijk}$.

\item \textit{$\alpha + \beta < \frac12$ (false-ish state)}
In this case we have $g_{\max} \leq (1+20/A)/(1-1/8) \leq 201/175 < 1.15$ and $g_{\min} \geq (1-20/A) \geq 199/200 > 0.99$. Thus water will reach the line $fh$ within the $x$-coordinate interval $(x_k - 48.73, x_k - 39.61)$, and thus it will flow into channel $c_{ijk}$ or $f_{ijk}$.

\item \textit{$\alpha = \beta = 0$ (false state)}
In this case we have $g_{\max} = g_{\min} = 1$, so water will reach the line $fh$ within the $x$-coordinate interval $(x_k - 43.03, x_k - 39.99)$, and thus it will flow into channel $f_{ijk}$.
\end{itemize}

\paragraph{Correctness of the NP-hardness reduction}
\begin{lemma}
If water flows from $s$ to $t$ in some realization, then there is a truth assignment of the variables of the 3-CNF formula that satisfies the formula.
\end{lemma}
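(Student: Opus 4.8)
The plan is to take any realization $R$ in which water flows from $s$ to $t$, read off a truth assignment from the elevations of the control vertices, and verify that this assignment satisfies every clause. First I would observe that, by the global layout, any flow path from $s$ to $t$ must traverse every column from east to west and, within each column $j$, must pass through the gadgets of all rows from $n$ down to $1$, entering each gadget $G_{ij}$ at one of the four points $s_{ijk}$ with $|x_k - \frac12 A| \le 90$, so that the preceding flow analysis applies verbatim. The decisive consequence is that no water on this path may enter a middle channel $c_{ijk}$, since every such channel leads to a local minimum; hence each gadget the path crosses must route its water into a true channel $t_{ijk}$ or a false channel $f_{ijk}$.

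Next I would translate this ``no middle channel'' condition into a constraint on the control vertices. Writing the two control vertices of $G_{ij}$ as $20\alpha$ and $A + 20\beta$ with $\alpha,\beta \in [0,1]$, the case analysis already carried out shows that water can reach $t_{ijk}$ only when $\alpha + \beta > \frac32$ and can reach $f_{ijk}$ only when $\alpha + \beta < \frac12$, the whole intermediate range $\frac12 \le \alpha+\beta \le \frac32$ being routed into $c_{ijk}$. Since $\alpha,\beta \le 1$, the inequality $\alpha + \beta > \frac32$ forces both $\alpha > \frac12$ and $\beta > \frac12$, while $\alpha + \beta < \frac12$ forces both below $\frac12$. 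Thus each traversed gadget certifies that both of its control vertices lie on the same side of $\frac12$.

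The central step, which I expect to be the main obstacle, is to upgrade these per-gadget facts into a single globally consistent truth assignment. Here I would exploit that consecutive gadgets in a row share a control vertex: $v_{ij}$ is the eastern control vertex of $G_{ij}$ and the western control vertex of $G_{i(j+1)}$. Running along row $i$, gadget $G_{i1}$ forces $v_{i0}$ and $v_{i1}$ to the same side of $\frac12$, gadget $G_{i2}$ forces $v_{i1}$ and $v_{i2}$ to the same side, and because $v_{i1}$ is common to both, the two gadgets must be in the same state; propagating this along the row shows that all of $v_{i0},\dots,v_{im}$ are uniformly above $\frac12$ or uniformly below it. I then set variable $v_i$ to \emph{true} in the first case and \emph{false} in the second, so that by construction every gadget $G_{ij}$ is in the true state exactly when $v_i$ is true.

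It remains to verify that this assignment satisfies each clause. For a fixed column $j$ with clause variables $p > q > r$, I would trace the flow path through the tree of channel connections: gadgets of rows outside $\{p,q,r\}$ merely split and re-merge the flow, preserving the channel index $k$, while $G_{pj}$, $G_{qj}$, and $G_{rj}$ successively refine $k$ so that, by the way the connections were wired, the channel leaving $G_{rj}$ encodes precisely the triple of truth values assigned to $p,q,r$. Since the one channel corresponding to the unique falsifying assignment is routed to a local minimum and the other seven reach the exit point, the mere fact that water exits column $j$ forces the assignment of $(p,q,r)$ to satisfy that clause. As this holds for every column, the extracted assignment satisfies the whole formula, as desired.
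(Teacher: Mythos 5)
Your proposal is correct and follows essentially the same route as the paper's proof: rule out the middle channels (hence confused states), deduce that in every row all control vertices lie strictly on the same side of the midpoint, read off the truth assignment, and let the clause-column wiring do the rest. The only difference is that you make explicit two steps the paper compresses into ``as a consequence''---namely that $\alpha+\beta>\tfrac32$ (resp.\ $<\tfrac12$) forces both of $\alpha,\beta$ above (resp.\ below) $\tfrac12$, and that the shared control vertices propagate this along a row---which is a welcome clarification rather than a deviation.
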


\begin{proof}
Water that starts flowing from $s$, which is the entrance point of the clause
column $m$, is immediately forced into a channel to entrance point $s_{nm1}$ of
gadget $G_{nm}$. As calculated above, any water that enters a gadget at one of
its designated entrance points will leave the gadget in one of its designated
channels, which leads either to a local minimum, or to a designated entrance
point of the next gadget. Therefore, water from $s$ can only reach $t$ after
flowing through all switch gadgets.

Since all middle outgoing channels $c_{ijk}$ lead to local minima, we know that if there is a flow path from $s$ to $t$, then the water from $s$ is nowhere forced into a middle outgoing channel. It follows that no gadget is in a confused state. As a consequence, in any row, either all gadgets have their control vertices in the lower relatively open half of their elevation range, or all gadgets have their control vertices in the upper relatively open half of their elevation range. In the first case, all gadgets in the row are in a \emph{false-ish} state, and any incoming water from $s$ leaves those gadgets in the same channels as if the gadgets were in a proper \emph{false} state. In the second case, all gadgets in the row are in a \emph{true-ish} state, and any incoming water from $s$ leaves those gadgets in the same channels as if the gadgets were in a proper \emph{true} state.

We can now construct a truth assignment ${\cal A}$ to the variables, in which each variable is \emph{true} if the control vertices in the corresponding row are in the upper halves of their elevation ranges, and \emph{false} otherwise. It follows from the way in which channel networks in clause columns are constructed, that in each clause column, water will flow into one of the seven channels that corresponds to a truth assignment that satisfies the corresponding clause---otherwise the water would not reach $t$. Therefore, ${\cal A}$ satisfies each clause, and thus, the complete 3-CNF formula.
\end{proof}

\begin{lemma}
If there is a truth assignment to the variables that satisfies the given 3-CNF formula, then there is a realization of the imprecise terrain in which water flows from $s$ to $t$.
\end{lemma}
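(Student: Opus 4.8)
The plan is to prove the converse of the previous lemma constructively: given a satisfying assignment, I build a concrete realization and then trace the water from $s$ all the way to $t$. First I would fix a satisfying truth assignment $\mathcal{A}$ and use it to pin down every imprecise elevation. For each variable $v_i$ I set \emph{all} control vertices in row $i$ to their highest elevation if $v_i$ is \emph{true} under $\mathcal{A}$, and to their lowest elevation if $v_i$ is \emph{false}. Because each gadget $G_{ij}$ is controlled precisely by the two vertices $v_{i(j-1)}$ and $v_{ij}$ that it shares with its east and west neighbours in the same row, this row-uniform choice forces every gadget in row $i$ into a pure \emph{true} state ($\alpha=\beta=1$) or a pure \emph{false} state ($\alpha=\beta=0$); in particular no gadget is confused. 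All chosen elevations lie inside the prescribed intervals, so this is a legitimate realization $R \in \ReT$.

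Second, I would invoke the per-gadget flow analysis carried out above. The two extreme itemized cases ($\alpha=\beta=1$ and $\alpha=\beta=0$) already show that water entering a gadget at an entrance point $s_{ijk}$ reaches the line $fh$ inside the $x$-interval feeding channel $t_{ijk}$ in the true state and inside the interval feeding $f_{ijk}$ in the false state, and never inside the middle interval feeding $c_{ijk}$. Since every channel is molded as a strictly descending channel with walls---which is possible exactly because the choice of $A$ and $B$ places all outgoing channels of $G_{(i+1)j}$ above the entrance points of $G_{ij}$ and keeps columns and back-channels monotone---the water is confined and is carried along the designated channel to the next designated entrance point. Thus, under $R$, each gadget behaves as a deterministic router: \emph{true} sends $s_{ijk}$ to $t_{ijk}$, and \emph{false} sends it to $f_{ijk}$.

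Third, I would trace the water through a single clause column $j$ whose clause involves variables $p > q > r$. Water enters at the column's entry point, is led to $s_{nj1}$, and descends through $G_{nj}, G_{(n-1)j},\ldots,G_{1j}$. For any row $i \notin \{p,q,r\}$, both outgoing channels $t_{ijk}$ and $f_{ijk}$ merge into the same next entrance $s_{(i-1)jk}$, so the course is unaffected by the state of $G_{ij}$. For the three clause rows, the tree of channel connections built in the construction selects, among the eight channels at row $r$, exactly the one that encodes the truth values $\mathcal{A}$ assigns to $p,q,r$. Because $\mathcal{A}$ satisfies the clause, this is one of the seven channels that do \emph{not} terminate in the dead-end local minimum, so it leads to $s_{(r-1)j1}$; continuing through the remaining don't-care rows, the water reaches the exit point of column $j$. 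Finally I would chain the columns: by the elevation bookkeeping ($nB < A$, so each column's exit lies above the entry of the column to its west), the back-channels carry the water monotonically from the exit of column $j$ to the entry of column $j-1$. Since the entry of column $m$ is $s$ and the exit of column $1$ is $t$, the water passes through every clause column and arrives at $t$, i.e.\ $s \flowsto{R} t$.

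The main obstacle is the verification hidden in the second and third steps: that the traced water genuinely follows a single, well-defined, strictly descending course, never escaping the walled channel system and never stalling at a junction where several channels merge into one. This is precisely where one must lean on the quantitative gradient bounds of \figref{gradients} and on the elevation inequalities enforced by the choice of $A$ and $B$. Once confinement and monotone descent are granted, what remains is the combinatorial bookkeeping of the clause-column network, which mirrors the forward direction established in the previous lemma.
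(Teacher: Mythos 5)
Your proposal is correct and follows essentially the same approach as the paper: the paper's own proof simply sets all control vertices in true rows to their highest positions and in false rows to their lowest, then asserts that one may verify by construction that water reaches $t$. You spell out that verification (pure true/false gadget states, the per-gadget flow analysis, and the column/back-channel chaining) in more detail than the paper does, but the underlying argument is identical.
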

\begin{proof}
We set all control vertices in rows corresponding to true variables to their highest positions and all control vertices in rows corresponding to false variables to their lowest positions. One may now verify that, by construction, in each clause column water from the column's entry point will flow into one of the seven channels that lead to the column's exit point, and thus, water from $s$ reaches $t$.
\end{proof}


\vspace{\baselineskip}
Thus, 3-SAT can be reduced, in polynomial time, to deciding whether there is a realization of $T$ such that water can flow from $s$ to $t$. We conclude that deciding whether there exists a realization of $T$ such that water can flow from $s$ to $t$ is NP-hard.

\begin {theorem}
  Let $T$ be an imprecise triangulated terrain, and let $s$ and $t$ be two
  points on the terrain. Deciding whether there exists a realization $R \in
  \ReT$ such that $p \flowsto{R} q$ is NP-hard.
\end {theorem}

\section{Watersheds in the network model}
\seclab{network}

In the network model we assume that water flows only along the edges of a
realization.  More specifically, water that arrives in a node $p$ continues to
flow along the steepest descent edges incident on $p$, unless $p$ is a local
minimum.  For a formal definition of the watershed and flow paths
please refer to \secref{def:network:flow}.

\subsection{Potential watersheds}
\seclab{potential-ws}
The \emphi{potential watershed} of a set of nodes $Q$ in a terrain $T$ is defined as
\[\PoWS(Q):=\bigcup_{R \in \ReT} \bigcup_{q \in Q} \WS(R,q),\]
which is the union of the watersheds of $Q$ over all realizations of $T$.
This is the set of nodes for which there exists a flow path to a node of
$Q$. With slight abuse of notation, we may also write $\PoWS(q)$ to denote the
potential watershed of a single node $q$.


\subsubsection{Canonical realizations}

We prove that for any given set of nodes $Q$ in an imprecise terrain, there exists a
realization $R$ such that $\WS(R,Q) = \PoWS(Q)$.
For this we introduce the notion of the overlay of a set of watersheds in
different realizations of the terrain. Informally, the overlay is a realization
that sets every node that is contained in one of these watersheds to the lowest
elevation it has in any of these watersheds.

\begin{defn}\label{def:ws-overlay}
Given a sequence of realizations $R_1,...,R_k$ and a sequence of nodes
$q_1,...,q_k$, the \emphi{watershed-overlay} of $\WS(R_1,q_1),...,\WS(R_k,q_k)$
is the realization \WSO such that for every node $v$, we have that
$\elev(\WSO,v)=\high(v)$ if $v \notin \bigcup \WS(R_i,q_i)$ and otherwise
\[\elev(\WSO,v) = \min_{i:v \in \WS(R_i,q_i)} \elev(R_i,v).\]
\end{defn}


\begin{lemma}\label{lem:watershed-overlay}
Let $\WSO$ be the watershed-overlay of $\WS(R_1,q_1),\dots,\WS(R_k,q_k)$, and
let $Q=\bigcup_{1\leq i\leq k} q_i$, then $\WS(\WSO,Q)$ contains $\WS(R_i,q_i)$.
\end{lemma}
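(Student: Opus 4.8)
The plan is to prove the slightly stronger statement that the entire union $U:=\bigcup_{1\le i\le k}\WS(R_i,q_i)$ is contained in $\WS(\WSO,Q)$; since each $\WS(R_i,q_i)\subseteq U$, the claim follows at once. Concretely, I want to show that any node which flows to some $q_i$ in its own realisation $R_i$ still flows to \emph{some} node of $Q$ in the single realisation $\WSO$. I will use two facts that are immediate from Definition~\ref{def:ws-overlay}: the overlay only lowers the nodes of $U$, so if $v\in\WS(R_i,q_i)$ then $\elev(\WSO,v)\le\elev(R_i,v)$, with equality, $\elev(\WSO,v)=\elev(R_{i^\ast},v)$, for the index $i^\ast$ attaining the minimum; and every node $v\notin U$ is kept at its highest elevation, $\elev(\WSO,v)=\high(v)\ge\elev(R_i,v)$ for every $i$.

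The first key step is to show that a single descent step can be taken without leaving $U$. Fix $v\in U\setminus Q$, let $i^\ast$ attain the minimum, and let $w$ be the node following $v$ on a flow path to $q_{i^\ast}$ in $R_{i^\ast}$; then $w$ is a steepest-descent neighbour of $v$ in $R_{i^\ast}$ and $w\in\WS(R_{i^\ast},q_{i^\ast})\subseteq U$. A short slope comparison should finish this step: because $\elev(\WSO,v)=\elev(R_{i^\ast},v)$ while $w$ can only have been lowered, the slope from $v$ to $w$ in $\WSO$ is at least its slope in $R_{i^\ast}$ and hence non-negative; and because any neighbour $x\notin U$ was raised to $\high(x)$, its slope from $v$ only decreases and so cannot exceed the slope towards $w$. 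Thus the steepest descent out of $v$ in $\WSO$ is attained by a node of $U$, sitting at elevation at most $\elev(\WSO,v)$.

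The second, and I expect the most delicate, step is to rule out that the overlay creates a new sink with no target inside it: I will show that every local minimum $C$ of $\WSO$ meeting $U$ contains some $q_i$. For this I would pick $v\in C\cap U$ with minimiser $i^\ast$, so $\elev(R_{i^\ast},v)$ equals the common elevation $h$ of $C$, and walk along the flow path from $v$ to $q_{i^\ast}$ in $R_{i^\ast}$. Each successor on that path is a steepest-descent neighbour in $R_{i^\ast}$, hence of $R_{i^\ast}$-elevation at most $h$, and lies in $U$, so its $\WSO$-elevation is at most $h$ as well; but a node adjacent to $C$ has $\WSO$-elevation either equal to $h$, in which case it belongs to $C$, or strictly above $h$. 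An induction along the path therefore keeps every successor inside $C$ at elevation exactly $h$, forcing $q_{i^\ast}\in C$ and so $C\cap Q\neq\emptyset$.

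Finally I would combine the two steps. Starting from any $v\in U$, repeatedly move to a steepest-descent neighbour in $U$, preferring a strictly lower one when available; elevations never increase and the graph is finite, so the walk stabilises on a connected equal-elevation set all of whose neighbours are higher, i.e.\ a local minimum $C$ of $\WSO$ lying in $U$. By the second step $C$ contains some $q_i$, and since water follows \emph{all} steepest-descent neighbours and spreads across an entire local minimum, this yields $v\flowsto{\WSO}q_i$; hence $U\subseteq\WS(\WSO,Q)$. The main obstacle is exactly the second step: since the overlay may lower several watersheds at once, it could a priori manufacture a local minimum that traps the water without holding any $q_i$, and excluding this is where the precise shape of the overlay --- lowering only nodes already in some watershed while leaving all others at their maximum --- is essential; flat regions and the rule that water fills an entire local minimum will also need to be handled with some care in the termination argument.
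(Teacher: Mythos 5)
Your steps (1) and (2) are both correct: the slope comparison in step (1) does show that every $v \in U \setminus Q$ has at least one steepest-descent neighbour in $\WSO$ that lies in $U$ at elevation at most $\elev(\WSO,v)$, and the path-walking induction in step (2) correctly forces $q_{i^*}$ into any local minimum of $\WSO$ that meets $U$. The genuine gap is in step (3), and you have misplaced the difficulty: the delicate point is not step (2) but your claim that the walk ``stabilises on a connected equal-elevation set all of whose neighbours are higher.'' Nothing you prove rules out that the walk enters a flat region of $\WSO$ that is \emph{not} a local minimum --- say it cycles between two same-elevation $U$-nodes $a,b$ whose only steepest-descent neighbours in $U$ are each other, while a third same-elevation neighbour $c \notin U$, sitting at $\high(c)$, has a strictly lower neighbour and drains the plateau into territory disjoint from $U$ and from $Q$. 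Step (1) only supplies \emph{one} $U$-successor per node and says nothing about the other steepest-descent neighbours; the set $\{a,b\}$ on which the walk stabilises then has a non-higher neighbour $c$, so it is not a local minimum and step (2) does not apply to it. Excluding this configuration is essentially the content of the lemma; finiteness of the graph and monotonicity of elevations do not give it to you.

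The fix stays within your framework but requires the idea you skipped. Let $F$ be the set of nodes of $U$ reached by water from $v$ in $\WSO$, let $h^*$ be its minimum elevation and $Z \subseteq F$ its bottom level set. Step (1) shows that every $z \in Z \setminus Q$ has maximal slope $0$ in $\WSO$ (so no strictly lower neighbour) and that every same-elevation $U$-neighbour of $z$ is again reached and hence lies in $Z$. Re-running your step-(2) induction along the $R_{i^*}$-flow path from $z$ to $q_{i^*}$, with ``outside neighbours of the local minimum are strictly higher'' replaced by ``$z$ has no strictly lower neighbour and same-elevation $U$-successors stay in $Z$,'' forces $q_{i^*} \in Z$ and closes the argument. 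The paper sidesteps the plateau problem entirely with a single induction on the lexicographic pair $(\elev(R_i,u), \level(R_i,u))$: the hop count $\level(R_i,u)$ strictly decreases exactly in the flat case where the elevation does not. Some such secondary progress measure (or the level-set argument above) is indispensable; as written, your step (3) does not go through.
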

\begin{proof}
Let $u$ be a node of the terrain. We prove the lemma by induction on increasing symbolic
elevation to show that if $u$ is contained in one of the given watersheds, then
it is also contained in $\WS(\WSO,Q)$.
We define $\level(R_i,u)$ as the smallest number of edges on any path along
which water flows from $u$ to $q_i$ in $R_i$; if there is no such path, then
$\level(R_i,u) = \infty$. Now we define the \emphi{symbolic elevation} of $u$,
denoted $\symb(u)$, as follows: if $u$ is contained in any watershed
$\WS(R_i,q_i)$, then $\symb(u)$ is the lexicographically smallest tuple
$(\elev(R_i,u), \level(R_i,u))$ over all $i$ such that $u \in \WS(R_i,q_i)$;
otherwise $\symb(u) = (\high(u), \infty)$.

Now consider a node $u$ that is contained in one of the given watersheds.
The base case is that $u$ is contained in $Q$, and in this case the claim holds trivially.
Otherwise, let $R_i$ be a realization such that $u \in \WS(R_i,q_i)$ and such
that $(\elev(R_i,u), \level(R_i,u))$ is lexicographically smallest over all $1
\leq i \leq k$. By construction, we have that $\elev(R_i,u) = \elev(\WSO,u)$.
Consider a neighbour $v$ of $u$ such that $(u,v)$ is a steepest-descent edge
incident on $u$ in $R_i$, and $\level(R_i,v)$ is minimal among all such
neighbours $v$ of $u$. Since $\elev(\WSO,v) \leq \elev(R_i,v) \leq \elev(R_i,u)
= \elev(\WSO,u)$ and $\level(R_i,v) = \level(R_i,u) - 1$, it holds that $v$ has
smaller symbolic elevation than $u$. Therefore, by induction, $v \in
\WS(\WSO,Q)$. If $v$ is still a steepest descent neighbor of $u$ in \WSO, then
this implies $u \in \WS(\WSO,Q)$.  Otherwise, there is a node $\widehat{v}$
such that $\steep(\WSO,u,\widehat{v}) > \steep(\WSO,u,v) \geq 0$. There must be
an $R_j$ such that $\widehat{v} \in \WS(R_j,q_j)$, since otherwise, by
construction of the watershed-overlay, we have $\elev(\WSO,\widehat{v}) =
\high(\widehat{v}) \geq \elev(R_i,\widehat{v})$ and thus,
$\steep(R_i,u,\widehat{v}) \geq \steep(\WSO,u,\widehat{v}) > \steep(\WSO,u,v)
\geq \steep(R_i,u,v)$ and $v$ would not be a steepest descent neighbor of $u$ in
$R_i$. Moreover, we have $\steep(\WSO,u,\widehat{v}) > 0$ and, therefore,
$\elev(\WSO,\widehat{v}) < \elev(\WSO,u)$, so $\widehat{v}$ has smaller symbolic
elevation than $u$. Therefore, by induction, also $\widehat{v} \in \WS(\WSO,Q)$
and thus, $u \in \WS(\WSO,Q)$.
\end{proof}

\bigskip
The above lemma implies that for any set of nodes $Q$, the watershed-overlay
$\WSO$ of the watersheds of the elements of $Q$ in all possible
realizations $\ReT$, would realize the potential watershed of $Q$. That is, we have that
$\PoWS(Q) \subseteq \WS(\WSO,Q)$ and since $\PoWS(Q)$ is the union of all
watersheds of $Q$ in all realizations, we also have that $\WS(\WSO,Q) \subseteq
\PoWS(Q)$, which implies the equality of the two sets.
Therefore, we call $\WSO$ the \emphi{canonical realization} of the potential
watershed $\PoWS(Q)$ and we denote it with
$\CanonR(Q)$.

Note, however, that it is not immediately clear that the canonical realization always exists: the set of possible realizations is a non-discrete set, and thus the elevations in the canonical realization are defined as minima over a non-discrete set. Therefore, one may wonder if these minima always exist. Below, we will describe an algorithm that can actually compute the canonical realization of any set of nodes $Q$; from this we may conclude that it always exists.

\subsubsection{Outline of the potential watershed algorithm}

Next, we describe how to compute $\PoWS(Q)$ and its canonical realization
$\CanonR(Q)$ for a given set of nodes $Q$. Note that for all nodes $p \notin
\PoWS(Q)$, we have, by definition of the canonical realization,
$\elev(\CanonR(Q),p) = \high(p)$. The challenge is therefore to compute
$\PoWS(Q)$ and the elevations of the nodes of $\PoWS(Q)$. Below we describe an
algorithm that does this.

The idea of the algorithm is to compute the nodes of $\PoWS(Q)$ and their
elevations in the canonical realization in increasing order of elevation,
similar to the way in which \hbox{Dijkstra}'s shortest path algorithm computes
distances from the source. The complete algorithm is laid out in
\algref{compute-pws}.
The correctness and running time of the algorithm are
proved in \thmref{compute-pws}. A key ingredient of the algorithm is a
subroutine, $\expandPWSX{q'}{z'}$, which is defined as follows.

\begin{defn}
Let $\expandPWSX{q'}{z'}$ denote a function that returns for a node $q'$
and an elevation $z' \in [\low(q'),\high(q')]$ a set of pairs of nodes and
elevations, which includes the pair $(p,z)$ if and only if $p \in N({q'})$,
there is a realization $R$ with $\elev(R,q') \in [z',\high(q')]$ such that $p
\flowsto{R} q'$, and $z$ is the minimum elevation of $p$ over all such
realizations $R$.
\end{defn}

\begin{algorithm}
  \begin{algorithmic}[1]
    \STATE For all $q \in Q$: Enqueue $(q,z)$ with key $z=\low(q)$
    \WHILE{the Queue is not empty}
          \STATE     $(q',z')$ = DequeueMin()
          \IF{$q'$ is not already in the output set}
               \STATE  Output $q'$ with elevation $z'$
               \STATE  Enqueue each $(p,z) \in \expandPWSX{q'}{z'}$
          \ENDIF
    \ENDWHILE
  \end{algorithmic}
\caption{\computePWSX{Q}}
\label{compute-pws}
\alglab{compute-pws}
\end{algorithm}

\subsubsection{Expansion of a node using the slope diagram}
\seclab{slope-diagram}
\seclab{expand-pws}

Before presenting the algorithm for the expansion of a node, we discuss a
data structure that allows us to do this efficiently.

We define the \emphi{slope diagram} of a node $p$ as the set of points
$\widehat{q}_i = (\delta_i, \high(q_i))$, such that $q_i$ is a neighbor of $p$ and
$\delta_i$ is its distance to $p$ in the $(x,y)$-projection.
Let $q_1,q_2,...,$ be a subset of the neighbors of $p$ indexed such that
$\widehat{q}_1, \widehat{q}_2, ...$ appear in counter-clockwise order along the
boundary of the convex hull of the slope diagram, starting from the leftmost
point and continuing to the lowest point. We ignore neighbors that do not lie on
this lower left chain.

Let $H_i$ be the halfplane in the slope diagram that lies above the
line through $\widehat{q}_i$ and $\widehat{q}_{i+1}$. Let $U(p)$ be the
intersection of these halfplanes $H_1,H_2,...$, the halfplane right of the
vertical line through the leftmost point, and the halfplane above the
horizontal line through the bottommost point of the convex chain, see
the shaded area in \figref{slopediagram}.
Let $z_i$ be the $z$-value of the point where the line through
$\widehat{q}_i$ and $\widehat{q}_{i+1}$ intersects the vertical axis
of the slope diagram.

\begin{figure}[b]\center
\includegraphics[width=0.4\textwidth]{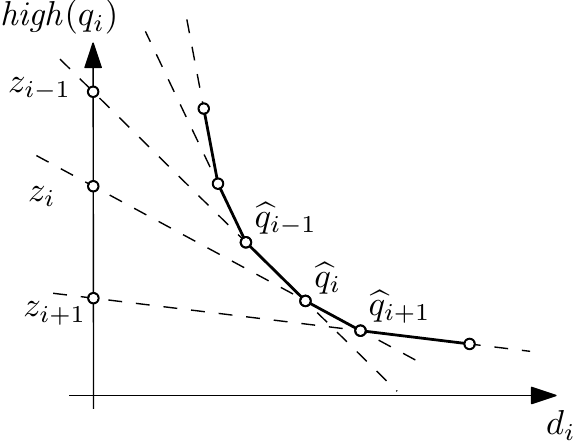}
\hfil
\includegraphics[width=0.4\textwidth]{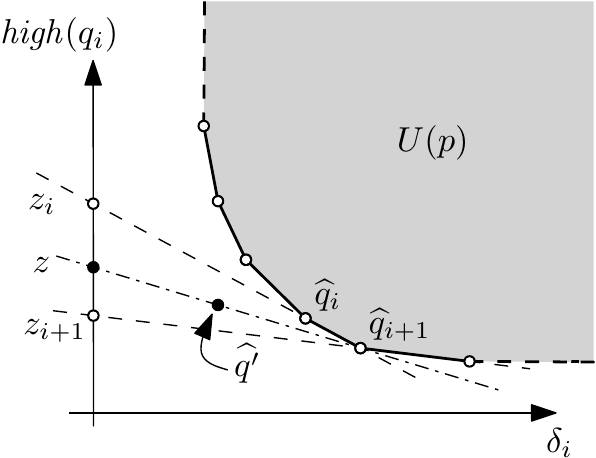}
\caption{Left: slope diagram. Right: querying the slope diagram.}
\figlab{slopediagram}
\end{figure}

Note that, out of the neighbors of $p$, each set to their highest position,
$q_i$ is a steepest descent neighbor if and only if the elevation of $p$ lies in
the interval $(z_i, z_{i-1})$.  This observation will help us to compute the
minimum elevation of~$p$ such that water flows to any particular neighbor in
$O(\log d)$ time, given that we have the slope diagram of $p$ at hand.


For a neighbor $p$ of $q'$, we can now compute the elevation of $p$ as it should be
returned by \expandPWSX{q'}{z'} by computing the lower tangent to $U(p)$ which passes
through the point $\widehat{q'}=(\delta', z')$, where $\delta'$ is the distance from $q'$ to $p$ in
the $(x,y)$-projection. This can be done via a binary search on the boundary of
$U(p)$. Intuitively, this tangent intersects the corner of $U(p)$ which
corresponds to the neighbor of $p$ that the node $q'$ has to compete with
for being the steepest-descent neighbor of $p$.
The elevation $z$ at which the tangent intersects the vertical axis, is the
lowest elevation of $p$ such that $q'$ does not lose, see the figure.
In proving the following lemma we describe the details of this procedure more specifically.

\begin{lemma}\lemlab{pre-expand-pws}
Given the slope diagrams of the neighbours of $q'$, we can compute the function \expandPWSX{q'}{z'} in time $O(d \log d')$,
where $d$ is the node degree of $q'$, $d'$ is the maximum node degree of a neighbor of $q'$, and $n$ is the number of edges of the terrain.
\end{lemma}
\begin{proof}
Let $p$ be a neighbor of $q'$ and let $z_{\min}$ be $\max(\low(p), z')$. Obviously, $z_{\min}$
is a lower bound on the elevation that $p$ could have while still allowing flow
to $q'$.  There are three cases for the outcome of a query with $\widehat{q'}$
in the slope diagram of $p$.
\begin{compactenum}[(i)]
\item If $\widehat{q'}$ lies in the interior of $U(p)$, then
$q'$ can never be a steepest descent neighbor of $p$ in a non-ambiguous
realization. As such, $p$ is not included in the result of \expandPWSX{q'}{z'}.
\item If the line through $\widehat{q'}$ and $(0,z_{\min})$ does not intersect
the interior of $U(p)$, then we return $p$ with elevation
$z_{\min}$, unless $z_{\min} > \high(p)$.
\item Otherwise, we conduct a binary search on $Z(p)$ as indicated above to find
the lowest intersection $(0,z)$ of the vertical axis and a tangent of $U(p)$ through
$\widehat{q'}$. If $z > \high(p)$, we do not include $p$ in the result,
otherwise, we return $p$ with elevation $\max(z, \low(p))$.  Note that we do not
need to remove $q'$ itself from $U(p)$ (and $Z(p)$) in this procedure, since it
will never lose if it competes with itself.
\end{compactenum}
The computations can be done in time logarithmic in the degree of $p$.
\end{proof}

\subsubsection{Correctness and running time of the complete algorithm}

\begin{theorem}\thmlab{compute-pws}
After precomputations in $O(n \log n)$ time and $O(n)$ space,
the algorithm \\ \computePWSX{Q} 
computes the potential watershed $\PoWS(Q)$ of a
set of nodes $Q$ and its canonical realization $\CanonR(Q)$
in time $O(n \log n)$, where $n$ is the number of edges in the terrain.
\end{theorem}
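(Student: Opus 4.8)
The plan is to establish both correctness and the resource bounds of \computePWSX{Q}, where correctness means that the algorithm terminates having output exactly the nodes of $\PoWS(Q)$, each tagged with its elevation in $\CanonR(Q)$ (which in particular shows the minima defining $\CanonR(Q)$ exist). Throughout I would write $c(v) := \elev(\CanonR(Q),v)$ and use the characterization, from the discussion after \lemref{watershed-overlay}, that for $v\in\PoWS(Q)$ this equals the minimum of $\elev(R,v)$ over all realizations $R$ in which $v$ flows to $Q$. The first thing I would record is a monotonicity property of the Dijkstra-style sweep: whenever a pair $(p,z)$ is produced by $\expandPWSX{q'}{z'}$, the definition of \expandPWS forces $p\flowsto{R}q'$ in a realization with $\elev(R,q')\ge z'$, and since flow in the network model requires nonnegative slope, $\elev(R,p)\ge\elev(R,q')\ge z'$, hence $z\ge z'$. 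Consequently keys leave the priority queue in nondecreasing order, which is the invariant that justifies the ``output on first dequeue'' rule.

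Next I would prove soundness: every enqueued pair $(p,z)$ satisfies $p\in\PoWS(Q)$ and $z\ge c(p)$, by induction on the order of dequeues. The initial pairs $(q,\low(q))$ with $q\in Q$ are sound because $q$ flows to itself in $\Rlow$, whence $c(q)=\low(q)$. For the inductive step, a pair $(p,z)$ enqueued while expanding a dequeued pair $(q',z')$ comes with a realization $R$ for which $p\flowsto{R}q'$, $\elev(R,q')\ge z'$, and $\elev(R,p)=z$; by the induction hypothesis $q'\in\PoWS(Q)$, so $q'$ flows to $Q$ in $\CanonR(Q)$. The crux is to glue these facts: applying \lemref{watershed-overlay} to $\WS(R,q')$ together with the watersheds $\WS(\CanonR(Q),q)$, $q\in Q$, produces a single realization whose watershed of $Q$ contains $\WS(R,q')$ and hence $p$, so $p\in\PoWS(Q)$, and in which $p$ sits at elevation at most $\elev(R,p)=z$, giving $c(p)\le z$. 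I expect this gluing to be the main obstacle, since it is exactly here that two distinct realizations — one carrying $p$ down to $q'$, the other carrying $q'$ down to $Q$ — must be merged into one, and the overlay lemma is the device that certifies the merged realization really routes $p$ all the way to $Q$.

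With soundness available, I would prove completeness by induction on the level of $p$, the minimum number of edges on a flow path from $p$ to $Q$ in $\CanonR(Q)$. If $p\in Q$ the initial enqueue of $(p,\low(p))=(p,c(p))$ settles it. Otherwise let $p_1$ be the successor of $p$ on such a shortest path, so $p_1$ is a steepest-descent neighbor of $p$ in $\CanonR(Q)$ with $c(p_1)\le c(p)$ and smaller level; by induction $p_1$ is output with key $c(p_1)$ and is therefore expanded. Since $\CanonR(Q)$ itself witnesses $p\flowsto{}p_1$ with $\elev(\CanonR(Q),p_1)=c(p_1)$, the definition of \expandPWS forces the expansion of $p_1$ to return $p$ at some elevation $z\le c(p)$; soundness gives $z\ge c(p)$, so it enqueues exactly $(p,c(p))$. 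Combining this with monotonicity rules out a first dequeue of $p$ at a strictly larger key $z_p>c(p)$: such a dequeue would require $(p,c(p))$ not yet to be in the queue, yet $p_1$ — whose key $c(p_1)\le c(p)<z_p$ is strictly smaller — would already have been dequeued and expanded, a contradiction. Hence the first dequeue of each $p\in\PoWS(Q)$ carries key exactly $c(p)$, the emitted elevations reproduce $\CanonR(Q)$, and the emitted nodes are precisely $\PoWS(Q)$.

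Finally, for the resource bounds I would argue as follows. The precomputation builds, for every node $p$, its slope diagram — the lower-left hull of the points $(\delta_i,\high(q_i))$ over the neighbors $q_i$ — in $O(d\log d)$ time for a node of degree $d$; summing and using $\sum_v d_v=O(n)$ gives $O(n\log n)$ time and $O(n)$ space. In the main loop each node is expanded at most once, when it is first output, so the total number of enqueued pairs is $O(\sum_v d_v)=O(n)$ and each priority-queue operation costs $O(\log n)$. The total expansion cost is $\sum_{q'} O(d_{q'}\log d'_{q'})$ by \lemref{pre-expand-pws}, which is at most $\log n\sum_{q'} d_{q'}=O(n\log n)$. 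Adding the $O(n\log n)$ queue overhead and the precomputation yields the claimed $O(n\log n)$ time and $O(n)$ space.
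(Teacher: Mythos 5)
Your overall architecture (monotone dequeue keys, soundness of every enqueued pair, completeness by induction on hop-distance to $Q$, and the standard resource accounting) is sensible, and the monotonicity, completeness and running-time parts are essentially fine. The gap is exactly where you predicted it would be: the gluing step in the soundness argument. \lemref{watershed-overlay}, applied to the collection $\WS(R,q')$, $\WS(\CanonR(Q),q)$ for $q\in Q$, concludes that the overlay $\WSO$ satisfies $\WS(\WSO,\{q'\}\cup Q)\supseteq \WS(R,q')$ --- the target set of the merged watershed is the union of \emph{all} the $q_i$'s, which here includes $q'$ itself. So the lemma only certifies that $p$ drains to $q'$ \emph{or} to $Q$ in $\WSO$, not that it drains to $Q$. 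In the bad case, water from $p$ reaches $q'$ in $\WSO$ and is then diverted --- because the overlay has lowered some node of $\WS(R,q')$ adjacent to $q'$ --- into a local minimum disjoint from $Q$; nothing in the lemma as stated excludes this. What you actually need is transitivity of potential flow ($p\in\PoWS(q')$ and $q'\in\PoWS(Q)$ imply $p\in\PoWS(Q)$), and that is precisely the nontrivial content the paper supplies by different means: its induction hypothesis (i) carries an explicit witness realization in which the flow path from the expanded node to $Q$ visits only already-finalized vertices, and the witness for $p$ is then obtained by a purely local modification (set $p$ to $z$ and every neighbour of $p$ off that path to its maximum elevation), so one only has to check that the old path survives and that $p$'s steepest-descent neighbour lands on it. Your route could probably be repaired --- e.g., by noting that every node of $\WS(R,q')$ has elevation at least $\elev(R,q')\geq \elev(\CanonR(Q),q')$ in $R$ and hence cannot undercut the descent from $q'$ to $Q$, and then rerunning the symbolic-elevation induction of \lemref{watershed-overlay} with target set $Q$ only --- but that is a new argument, not an application of the lemma as stated.

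A secondary point: your completeness induction is indexed by hop-distance to $Q$ along flow paths \emph{in $\CanonR(Q)$}, and your definition $c(v)=\elev(\CanonR(Q),v)$ presupposes that the minima defining $\CanonR(Q)$ are attained and that $\CanonR(Q)$ is a realization in which every node of $\PoWS(Q)$ still drains to $Q$. The paper explicitly flags that the existence of these minima over the non-discrete set of realizations is not obvious and is meant to be \emph{derived from} the correctness of this very algorithm; accordingly, its proof never invokes $\CanonR(Q)$ and phrases both halves of the induction hypothesis directly in terms of existence and non-existence of realizations, inducting on the order of extraction from the queue. If you keep your structure, you should either give an independent existence argument for $\CanonR(Q)$ (delicate, since local minima are defined by strict inequalities) or re-index the induction as the paper does.
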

\begin{proof}
The algorithm searches the graph starting from the nodes of $Q$. At each point
in time we have three types of nodes.  Nodes that have been extracted from the
priority queue have a \emph{finalized} elevation, a node that is currently in
the priority queue but was never extracted (yet) has a \emph{tentative}
elevation, other nodes have not been reached.

We show that when $(p,z)$ is first extracted from the priority queue in
\algref{compute-pws}, $p$ is indeed contained in the potential watershed of~$Q$,
and the elevation $z$ is the lowest possible  elevation of $p$ such that water
flows from $p$ to some node in $Q$. To this end we use an induction on the points
extracted, in the order in which they are extracted for the first time.

The induction hypothesis consists of two parts:
\begin{compactenum}[(i)]
\item
There exists a realization $R$ and $q \in Q$ such that $\elev(R,p) = z$, and
$R$ induces a flow path $\pi$ from $p$ to $q$ which only visits vertices
that have been extracted from the priority queue.

\item
There exists no realization $R$ and $q \in Q$ such that $\elev(R,p) < z$ and $p \flowsto{R} q$.
\end{compactenum}

If a node $q \in Q$ is extracted with $z = \low(q)$, then the claims hold
trivially. Note that the first extraction from the priority queue must be of
this type.

If $p$ is extracted from the priority queue for the first time and $p \notin Q$,
then there must be at least one node $p'$ that was extracted earlier, such that
\expandPWSX{p'}{z'}, for some elevation $z'$, resulted in $p$ having the
tentative elevation~$z$.
By induction, there exists a realization $R'$ and $q \in Q$, such that
$\elev(R',p')=z'$, there is a flow path $\pi$ from $p'$ to $q$ in $R'$, and
$\pi$ does not include~$p$.

To see (i), we construct a realization $R$ by modifying $R'$ as follows: we set
$\elev(R,p)=z$, and we set $\elev(R,r) = \high(r)$ for each neighbor $r$ of $p$
that does not lie on $\pi$. In comparison to $R'$, only $p$ and its neighbors
may have a different elevation in $R$. Since $\elev(R,p) = z \geq z'$ is still
at least as high as the elevation of any node on $\pi$, water will still flow
along the path $\pi$ from $p'$ to $q$. By the definition of \expandPWS, none of
the neighbors of $p$ that are set at their highest elevation can out-compete
$p'$ as a steepest-descent neighbour of $p$. Therefore, the steepest-descent
neighbour of $p$ in $R'$ must be one of the nodes on $\pi$. Thus, water from
$p$ will flow onto $\pi$, and thus, to~$q$.

\begin{wrapfigure}[8]{r}{0.3\textwidth}
\centering
\includegraphics{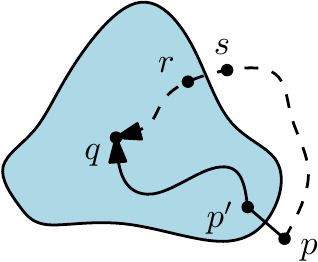}
\end{wrapfigure}

Next we show (ii).  Suppose, for the sake of contradiction, there is a realization $R$ such that $\elev(R,p) < z$ and there is a flow path from $p$ to a node $q \in Q$. Consider two consecutive nodes $r$ and $s$ on this path, such that $r$ has
not been extracted before but $s$ has been previously extracted (it may be that $r = p$ and/or $s \in Q$). Note that flow paths have to be monotone in the elevation. We argue that this path cannot stay below $z$ in any realization. Since $r$ is a neighbor of $s$, it has been added to the priority queue during the expansion of $s$. Let the tentative elevation of $r$ that resulted from this expansion be $z_r$.  By induction, since the elevation of $s$ is finalized, $z_r$ is a lower bound on the elevation of $r$ for any flow path that follows the edge $(r,s)$ and then continues to a node in $Q$ in any realization. However, $z_r \geq z$, since $r$ was not extracted from the priority queue before~$p$. Therefore, a path from $p$ to $q$ that contains $r$ with $\elev(R,p) < z$ cannot exist. This proves (ii).

It follows that the algorithm outputs all nodes of $\PoWS(Q)$ together with their elevations in $\CanonR(Q)$.

As for the running time, computing and storing $U(p)$ and $Z(p)$ for a node $p$
of degree $d$ takes $O(d \log d)$ time and $O(d)$ space. Since the sum of all
node degrees is $2n$, computing and storing $U(p)$ and $Z(p)$ for all nodes $p$ thus takes
$O(n \log d_{\max})$ time and $O(n)$ space in total, where $d_{\max}$ is the
maximum node degree in the terrain. While running algorithm \computePWSX{Q},
each node is expanded at most once. By \lemref{pre-expand-pws},
\expandPWSX{q'}{z'} on a node $q'$ of degree $d$ takes time $O(d \log
d_{\max})$. Thus, again using that all nodes together have total degree $2n$,
the total time spent on expanding is $O(n \log d_{\max}) = O(n \log n)$.  Each
extraction from the priority queue takes time $O(\log n)$ and there are at most
$O(n)$ nodes to extract.  Therefore \computePWS takes time $O(n\log n)$ overall.
\end{proof}

\bigskip
For grid terrains, $d_{\max} = O(1)$, and thus, the slope diagram computations
take only $O(1)$ time per expansion. In fact, since we only need to expand nodes
that are in $\PoWS(Q)$, we could actually compute $\PoWS(Q)$ in $O(k \log k)$
time, where $k = |\PoWS(Q)|$.
Alternatively, we can use the techniques from Henzinger et
al.~\cite{hkrs-fspapg-97} for shortest paths to overcome the priority queue
bottleneck, and obtain the following result (details in Appendix~\ref{sec:lineartime}):

\begin{theorem}\thmlab{compute-pws-grid}
The canonical realization of the potential watershed 
of a set of cells $Q$ in an
imprecise grid terrain of $n$ cells can be computed in $O(n)$ time.
\end{theorem}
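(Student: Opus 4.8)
The plan is to eliminate the two sources of superlinearity in the $O(n\log n)$ bound of \thmref{compute-pws}: the per-node slope-diagram work and the priority queue. First I would observe that in a grid terrain every node has degree $O(1)$, so each slope diagram has constant size; computing all of them takes $O(n)$ time, and by \lemref{pre-expand-pws} every call to \expandPWSX{q'}{z'} costs $O(d \log d') = O(1)$. Hence the total time the algorithm \computePWSX{Q} spends on expansions is $O(n)$, and the only remaining bottleneck is the priority queue, which in the naive implementation of \algref{compute-pws} is responsible for the $O(\log n)$ factor across $O(n)$ extract-min and decrease-key operations.

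To remove this factor, I would recast \computePWSX{Q} as a label-setting, Dijkstra-like computation and then invoke the planar single-source shortest-path machinery of Henzinger et al.\ \cite{hkrs-fspapg-97}. Concretely, the elevation assigned to a node plays the role of a shortest-path distance: nodes are finalized in nondecreasing order of elevation, and each node is finalized exactly once. The property that substitutes for nonnegative edge weights is that, whenever $q'$ is expanded at finalized elevation $z'$, the tentative elevation $z$ that \expandPWSX{q'}{z'} assigns to a neighbor $p$ satisfies $z \ge z'$; this holds because flow paths are monotone in elevation, so $p$ can only carry water to $q'$ if it lies at least as high as $q'$. Together with the $O(1)$ cost per relaxation established above, this is exactly the setting their technique is designed for.

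The application of \cite{hkrs-fspapg-97} then proceeds by replacing the single global priority queue with the recursive region decomposition and hierarchical bucketing structure they use for planar graphs, which supports the extract-min/relaxation sequence of a monotone label-setting algorithm in amortized $O(1)$ time per operation, hence $O(n)$ in total. One point that must be handled with care is that the D-8 grid is not, strictly speaking, planar: the two diagonals of each $2\times 2$ block of cells cross. However, the graph still has maximum degree $8 = O(1)$ and every edge is crossed by only $O(1)$ others, so it admits $O(\sqrt{n})$ separators and an $r$-division of the same quality as a genuinely planar graph; the decomposition underlying \cite{hkrs-fspapg-97} therefore applies (alternatively, one may planarize by replacing each crossing with a degree-four dummy vertex, increasing the size only by a constant factor).

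The step I expect to be the main obstacle is verifying that our generalized relaxation---which is not the addition of a fixed edge length but the geometric tangent computation of \expandPWS---genuinely fits the Henzinger et al.\ framework. What has to be checked is that their analysis relies only on the monotonicity of the extracted labels and on $O(1)$ work per examined edge, rather than on the additive algebra of distances, and that their relative, recursively defined bucketing copes with an arbitrary spread of elevation values. Granting this, the algorithm computes $\PoWS(Q)$ together with all elevations of its nodes in $\CanonR(Q)$ in $O(n)$ time.
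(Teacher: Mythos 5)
Your proposal is correct and follows essentially the same route as the paper: the appendix likewise recasts \computePWSX{Q} as a generalized single-source shortest-path computation (labels are canonical elevations, relaxations are the $O(1)$-time \expandPWS calls, and correctness needs only that the relaxation sequence contains each monotone ``shortest'' path as a subsequence) and then applies the hierarchical priority-queue technique of Henzinger et al.\ over an $r$-division of the grid. The only cosmetic difference is that the paper builds the required recursive decomposition directly from nested $\sqrt{r_i}\times\sqrt{r_i}$ square blocks (noting this works even for the non-planar D-8 adjacency graph), rather than appealing to separators or planarization via dummy vertices.
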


\subsection{Potential downstream areas}
\seclab{deltas}

Similar to the potential watershed of a set $Q$, we can define the set of
points that potentially \emph{receive} water from a node in $Q$. Let
\[\PoDel(Q) =  \bigcup_{R \in \ReT} \bigcup_{q \in Q} \{ p : p \flowsto{R} q \}.\]

Naturally, a canonical realization for this set does not necessarily exist, however, it can be computed in a similar way as described in \secref{potential-ws} using a
priority queue that processes nodes in decreasing order of their maximal
elevation such that they could still receive water from a node in $Q$. The
algorithm is the same as \algref{compute-pws}, except that in the first line
the nodes are enqueued with their highest possible elevation, in line 3 we dequeue
the current node with the largest key and we use the following subroutine in
line 6.

\begin{defn}
Let $\expandPDX{q'}{z'}$ denote a function that returns for a node $q'$
and an elevation $z' \in [\low(q'),\high(q')]$ a set of pairs of nodes and elevations, which includes the pair $(p,z)$ if and only if $p \in N({q'})$, there is a realization $R$ with $\elev(R,q') \in [\low(q'),z']$ such that $q' \flowsto{R} p$, and $z$ is the maximum elevation of $p$ over all such realizations $R$.
\end{defn}

\begin{figure}[htb]\center\figlab{ch-expand-pd}
\includegraphics{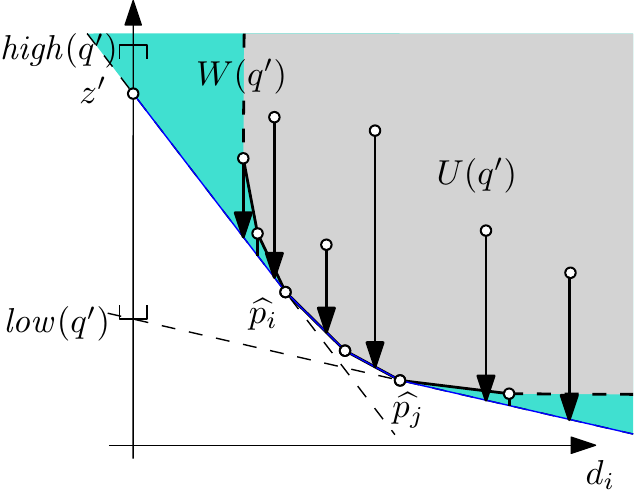}
\caption{Computations in the slope diagram}
\end{figure}

\begin{lemma}\lemlab{pre-expand-pd}
We can compute the function \expandPDX{q'}{z'} in $O(d \log d)$ time,
where $d$ is the node degree of $q'$.
\end{lemma}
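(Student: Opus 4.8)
The plan is to mirror the development in \lemref{pre-expand-pws}, but with the roles reversed: since we now ask where water \emph{leaves} $q'$, the relevant competition is among the neighbours of $q'$ for the title of steepest-descent neighbour of $q'$. Accordingly, I would work in a single slope diagram, namely that of $q'$ itself, consisting of the points $\widehat{r} = (\delta_r, \high(r))$ over all neighbours $r$ of $q'$, where $\delta_r$ is the $(x,y)$-distance from $q'$ to $r$ (see \figref{ch-expand-pd}). Building its lower-left convex hull $U(q')$ takes $O(d \log d)$ time. A neighbour $p$ at distance $\delta_p$ and elevation $z$ is a steepest-descent neighbour of $q'$ when $q'$ sits at elevation $h$ precisely when the descent slope $(h-z)/\delta_p$ is at least the descent slope to every competitor, which is exactly the condition that the point $(\delta_p, z)$ lies on or below the lower tangent from $(0,h)$ to the hull of the competitors. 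Hence the maximum admissible elevation of $p$ for a fixed source height $h$ is the height, at $x=\delta_p$, of that tangent line.

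The essential new ingredient, and the reason \expandPDX{q'}{z'} is not a verbatim copy of the \expandPWS computation, is that the source elevation is not fixed by the query but ranges over $[\low(q'), z']$. I would therefore write the value to be returned for a neighbour $p$ as
\[
z \;=\; \max_{h \in [\low(q'),\, z']} \bigl(h - \delta_p \cdot s^\ast(h)\bigr),
\]
where $s^\ast(h) = \max_r (h-\high(r))/\delta_r$ is the steepest possible descent slope from a source at height $h$ to the competitors---a function that depends only on $U(q')$ and not on $p$. Since $s^\ast$ is convex and piecewise linear (a pointwise maximum of linear functions, one per hull edge), the objective $h - \delta_p\, s^\ast(h)$ is concave in $h$, so its maximum over the interval is attained either at an endpoint or at the breakpoint where the slope $1/\delta_r$ of the active hull edge crosses $1/\delta_p$, i.e.\ where the steepest-descent distance passes through $\delta_p$. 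Because the distances $\delta_r$ vary monotonically along the hull, this breakpoint---and thus $z$---can be located by a single binary search on $U(q')$, after which $h$ is clamped to $[\low(q'), z']$ to handle the cases in which the unconstrained optimum falls outside the interval. Performing this $O(\log d)$ query for each of the $d$ neighbours gives the claimed $O(d \log d)$ bound, dominated by the hull construction.

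It remains to dispatch the boundary cases, which is where I expect most of the care to go. The returned elevation must be capped at $\high(p)$ and raised to $\low(p)$, and $p$ must be dropped from the result whenever the forced value would exceed $\high(p)$, since then no realization lets $q'$ flow to $p$; the non-negativity requirement $z \le h$ on the descent slope must be respected; and, as in \lemref{pre-expand-pws}, I would argue that $p$ need not be explicitly removed from $U(q')$ before querying for it, since a node never loses the steepest-descent competition against itself and the cap at $\high(p)$ absorbs the only configuration in which $p$'s own point is the tangent point. The main obstacle, compared with \expandPWS, is exactly this extra optimization over the free source height $h$: the proof must establish the concavity of the objective and verify that a single binary search correctly identifies the optimizing $h$ (or the binding endpoint), so that the per-neighbour cost stays at $O(\log d)$ rather than degrading to $O(\log^2 d)$ through a ternary search with nested hull evaluations.
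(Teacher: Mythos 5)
Your core geometric idea is sound and, once unwound, coincides with the paper's. The function $h \mapsto h - \delta_p\, s^\ast(h)$ is concave and piecewise linear; its unconstrained maximizer is the $h$ whose lower tangent contains the hull edge spanning $x=\delta_p$, where the value equals the height of $U(q')$ at $x=\delta_p$; and clamping $h$ to an endpoint of $[\low(q'),z']$ replaces the hull by the tangent line from $(0,z')$ or from the lower anchor. The upper envelope of these pieces is exactly the region $W(q')$ that the paper builds as an intersection of halfplanes (the two tangent halfplanes plus the hull halfplanes between the two tangent points), and your per-neighbour answer is the height of its boundary at $x=\delta_p$. The only algorithmic difference is that you answer each neighbour by an independent $O(\log d)$ binary search, whereas the paper sorts the neighbours by distance and walks once along the boundary of $W(q')$, answering all queries in $O(d)$ after the $O(d\log d)$ sort; both meet the claimed bound. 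One point you defer --- non-negativity of the descent slope --- is precisely what the paper's anchor $\max(\low(q'), z_0)$ for the second tangent handles: without replacing $s^\ast(h)$ by $\max(0,s^\ast(h))$, your formula returns a value above $h$ whenever $q'$ at height $h$ can lie below all its other neighbours, i.e.\ it would let water flow uphill to $p$.

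The one genuine error is in your boundary-case handling, where the tests from \expandPWS are carried over without being reversed. Here $z$ is the \emph{maximum} elevation of $p$ at which $q'$ still drains to $p$, and lowering $p$ only helps. So if the computed threshold exceeds $\high(p)$ you must \emph{keep} $p$, capped at $\high(p)$; the case in which $p$ must be dropped is when the threshold falls below $\low(p)$, because then $p$ cannot be placed low enough to receive the water. ``Raising'' the value to $\low(p)$ in that situation would wrongly report $p$ as reachable at an elevation where no flow occurs. As written, your rule both discards nodes that belong in the output and admits nodes that do not; the fix is a one-line sign flip, but it is needed for correctness.
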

\begin{proof}
Consider the slope diagram of $q'$ as defined in \secref{slope-diagram}.
Let $z_0$ be $\min \high(p)$ over all neighbours $p$ of $q'$; note that this is the vertical coordinate of the lowermost point of $U(q')$.
Let $\widehat{q'}=(0,z')$ and consider its lower tangent to $U(q')$. Let
$\widehat{p_i}$ be the corner of $U(q')$ that intersects the tangent.
Similarly, let $\widehat{p_j}$ be the corner of $U(q')$ that intersects the
tangent through $(0,\max(\low(q'),z_0))$.
Let $W(q')$ be the intersection of the halfplanes
above these two tangents and the halfplanes $H_i,\dots,H_j$ as defined in
\secref{slope-diagram}.
Clearly, a neighbor of $q'$ can have a steepest descent edge from $q'$,
for some elevation of $q'$ in $[\low(p),z']$, if and only if its representative in the slope diagram.
lies below $W(q')$ or on the boundary of $W(q')$.
To compute the neighbors of $q'$ and their elevations as they should be returned
by \expandPDX{q'}{z'}, we test each neighbor $p$ of $q'$ as follows.
We find the point $\widehat{p'} = (|pq'|, z)$ that is the projection from $\widehat{p}$ down onto the boundary of $W(q')$.
If $z \geq \low(p)$, we return $(p,z)$, otherwise we do not include $p$ in the result.

The slope diagram with $W(q')$ can be computed $O(d \log d)$ time. The neighbors
$p$ of $q'$ can be sorted by increasing distance from $q'$ in the
$xy$-projection in $O(d \log d)$ time; after that, the projections of all points
$\widehat{p}$ can be computed in $O(d)$ time in total by handling them in order
of increasing distance from $q'$ and walking along the boundary of $W(q')$
simultaneously.  \end{proof}



\begin{theorem}
Given a set of nodes $Q$ of an imprecise terrain, we can compute the set $\PoDel(Q)$ in time $O(n \log n)$, where $n$ is the number of edges in the terrain.
\end{theorem}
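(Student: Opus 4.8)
The plan is to follow the correctness and running-time proof of \thmref{compute-pws} almost line for line, reversing the direction of flow and exchanging the roles of ``lowest'' and ``highest'' throughout. The modified algorithm enqueues every $q \in Q$ with key $\high(q)$, repeatedly extracts the node of \emph{largest} key, and on the first extraction of a pair $(p,z)$ it outputs $p$ at elevation $z$ and enqueues the pairs produced by \expandPDX{p}{z}. The statement I would prove is the dual of the one in \thmref{compute-pws}: when $(p,z)$ is first extracted, $p$ belongs to $\PoDel(Q)$ and $z$ is the \emph{maximum} elevation that $p$ may have in any realization in which water from some $q \in Q$ still reaches $p$. The claim that the algorithm enumerates exactly $\PoDel(Q)$ together with these maximal elevations, and the claimed running time, then both follow.

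I would set up a two-part induction on the order of extraction, now taken in \emph{decreasing} order of the key $z$. Part (i): there is a realization $R$ and a node $q \in Q$ with $\elev(R,p)=z$ that induces a flow path carrying water from $q$ down to $p$ and visiting only nodes that have already been extracted. Part (ii): there is no realization $R$ and $q \in Q$ with $\elev(R,p) > z$ and $q \flowsto{R} p$. The base case is a node $q \in Q$ extracted at $z = \high(q)$, for which both parts are immediate; since the queue is initialised with the keys $\high(q)$, the first extraction is necessarily of this form.

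For part (i) I would reuse the construction of \thmref{compute-pws} with the direction reversed: let $p'$ be the already-extracted upstream node whose expansion gave $p$ its key, and take the realization $R'$ and path $\pi$ guaranteed for $p'$ by induction. Setting $\elev(R,p)=z$ and raising the neighbours of $p'$ that are off $\pi$ to their top elevations makes $p$ a steepest-descent neighbour of $p'$ without disturbing the flow that reaches $p'$ along $\pi$; the existence of such a realization with $p$ at its maximal elevation $z$ is exactly what \expandPDX{p'}{z'} certifies via \lemref{pre-expand-pd}. For part (ii) I would argue by contradiction: given a realization with $\elev(R,p) > z$ and a flow path from some $q \in Q$ to $p$, I trace the path from $q$ and locate the first edge $(s,r)$ leaving an already-extracted node $s$ for a not-yet-extracted node $r$. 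Since $s$ has been finalised, by induction $\elev(R,s) \le z_s$, so $s \flowsto{R} r$ forces $\elev(R,r)$ to be at most the tentative key $z_r$ that $s$'s expansion assigned to $r$; because $r$ was not extracted before $p$ in a max-priority queue we have $z_r \le z$; and because flow paths are monotone in elevation and $r$ precedes $p$ on the path, $\elev(R,p) \le \elev(R,r) \le z_r \le z$, contradicting $\elev(R,p) > z$.

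The running-time analysis is identical to \thmref{compute-pws}: precomputing the slope diagrams costs $O(n \log n)$ time and $O(n)$ space, each node is expanded at most once, a single expansion of a degree-$d$ node costs $O(d \log d)$ by \lemref{pre-expand-pd}, and summing over the total degree $2n$ together with the $O(n \log n)$ spent on priority-queue operations yields $O(n \log n)$ overall. I expect the only genuinely delicate point to be part (ii): one must check that reversing the flow direction preserves the monotonicity argument, and in particular that the value returned by \expandPD is a valid \emph{upper} bound on the elevation of $r$ over all realizations in which $s$ still sends water to $r$ while itself lying at elevation at most $z_s$.
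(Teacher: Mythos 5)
Your overall plan---the max-priority-queue dual of \thmref{compute-pws}, the two-part induction in decreasing key order, the contradiction argument for part (ii), and the running-time accounting---is exactly the paper's, and your part (ii) and timing analysis match what the paper does. The genuine gap is in part (i), which you treat as a routine mirror image of the potential-watershed construction but which the paper explicitly flags as requiring a different argument. Two things go wrong with ``setting $\elev(R,p)=z$ and raising the off-path neighbours of $p'$''. First, the witness realization promised by $\expandPDX{p'}{z'}$ only guarantees \emph{some} elevation $z''\in[\low(p'),z']$ of $p'$ at which water flows from $p'$ to $p$ with $p$ at its maximal elevation $z$; in general $z''<z'$ is forced (lowering $p'$ helps a comparatively distant neighbour $p$ win the steepest-descent competition against nearer competitors), so the construction must also reset $\elev(R,p')=z''$, which yours omits---with $p'$ left at $z'$, the node $p$ at elevation $z$ need not be a steepest-descent neighbour of $p'$ at all.

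Second, the claim that the modification leaves the flow along $\pi$ undisturbed is false in the downstream direction. In \thmref{compute-pws} the newly placed node sits at elevation at least that of every node of $\pi$, so it cannot attract flow away from the path; here $p$ and $p'$ are \emph{lowered} below the elevations of the nodes of $\pi$, so an intermediate vertex of $\pi$ adjacent to $p$ or $p'$ may acquire one of them as a new steepest-descent neighbour and leave $\pi$ early. The paper repairs this by observing that any such diversion still delivers the water to $p$ or to $p'$, and that from $p'$ the water must continue to $p$, because the on-path neighbours of $p'$ have elevation at least $z'\ge z''$ and the raised off-path neighbours cannot out-compete $p$ by the definition of \expandPD. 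You also misplace the delicacy at the end of your proposal: part (ii) really is completely analogous to the potential-watershed case, whereas part (i) is where the asymmetry bites and where your argument, as written, does not go through.
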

\begin{proof}
The algorithm searches the graph starting from the nodes of $Q$. As in the algorithm for potential watersheds, nodes that have been extracted from the priority queue have a \emph{finalized} elevation; nodes that are currently in the priority queue but were never extracted (yet) have \emph{tentative} elevations. However, this time these elevations are not to be understood as elevations of the nodes in a single realization, but simply as the highest known elevations so that the nodes may be reached from $Q$.

The induction hypothesis is symmetric to the hypothesis used for potential watersheds: we show that when $(p,z)$ is first extracted from the priority queue, $p$ is indeed contained in the potential downstream area of~$Q$, and the elevation $z$ is the highest possible elevation of $p$ such that water flows from some node in $Q$ to $p$. Again, the induction is on the points extracted, in the order in which they are extracted for the first time.

The induction hypothesis consists of two parts:
\begin{compactenum}[(i)]
\item
There exists a realization $R$ and $q \in Q$ such that $\elev(R,p) = z$, there is a flow path $\pi$ from $q$ to $p$ in $R$, and $\pi$ only visits vertices that have been extracted from the priority queue.

\item
There exists no realization $R$ and $q \in Q$ such that $\elev(R,p) > z$ and $q \flowsto{R} p$.
\end{compactenum}

If a node $q \in Q$ is extracted with $z = \high(q)$, then the claims hold trivially. Note that the first extraction from the priority queue must be of this type.

If $p$ is extracted from the priority queue for the first time and $p \notin Q$, then there must be at least one node $p'$ that was extracted earlier, such that \expandPDX{p'}{z'}, for some elevation $z'$, resulted in $p$ having the tentative elevation~$z$. By induction, there exists a realization $R'$ and $q \in Q$, such that $\elev(R',p')=z'$, there is a flow path $\pi$ from $q$ to $p'$ in $R'$, and $\pi$ does not include $p$.

So far the proof is basically symmetric to that of \thmref{compute-pws}. However, to see (i), we need a different construction. Let $z'' \leq z'$ be the elevation such that water flows from $p'$ to $p$ in the realization $R''$ with $\elev(R'',p') = z''$, $\elev(R'',p) = z$, and $\elev(R'',p'') = \high(p'')$ for all other nodes $p''$. Note that $z''$ exists by definition of \expandPD. We now construct a realization $R$ by modifying $R'$ as follows: we set $\elev(R,p') = z''$, we set $\elev(R,p) = z$, and we set $\elev(R,r) = \high(r)$ for each neighbor $r$ of $p'$ such that $r \neq p$ and $r$ does not lie on $\pi$. In comparison to $R'$, only two nodes in $R$ may have lower elevation, namely $p$ and $p'$. Therefore, water will still flow along the path $\pi$ from $q$ until it either reaches $p'$, or a vertex that now has $p$ or $p'$ as a new steepest-descent neighbour. Thus, in any case, there is a flow path from $q$ to either $p$ or $p'$. If the flow path reaches $p'$, then, by definition of \expandPD, none of the neighbours of $p'$ that are set at their highest elevation can out-compete $p$ as a steepest-descent neighbour of $p'$. Of course, the neighbours of $p'$ that lie on $\pi$ cannot out-compete $p$ either, since these neighbours have elevation at least as high as $p'$. Therefore, $p$ must be a steepest-descent neighbour of $p'$ in $R'$, and water from $p'$ will flow to $p$. Thus, in any case, water from $q$ will reach $p$ in $R'$ along a path that is a prefix of $\pi$, followed by an edge to $p$. This proves part (i) of the induction hypothesis.

The proof of part (ii) is completely analogous to the proof of \thmref{compute-pws}.

It follows that the algorithm outputs all nodes of $\PoDel(Q)$. The running time analysis is analogous to \thmref{compute-pws}.
\end{proof}

\subsection{Persistent watersheds}
\seclab{pers-ws}

In this section we will give a definition of a minimal watersheds, and explain how
to compute it.
Recall that the potential (maximal) watershed of a node set $Q$ is defined as the set of
nodes that have some flow path to a node in $Q$.
We can write this as follows
\[\PoWS(Q)= \left\{p: ~\exists~ \pi \in \FP(\ReT), \pi \ni p ~\exists~q \in Q: p
\flowsvia{\pi} q  \right\}.\]
An analogous definition to this would be
\[\CoWS(Q)= \left\{p: ~\forall~ \pi \in \FP(\ReT), \pi \ni p ~\exists~q \in Q: p
\flowsvia{\pi} q  \right\}.\]
This is the set of nodes $p$ from which water flows to $Q$ via \emph{any} induced
flow
path that contains~$p$.  We call this the \emphi{core watershed} of $Q$.

However, this definition seems a bit too restrictive. Consider the case of a measuring
device with a constant elevation error, used to sample points in a gently
descending valley. It is possible that, by increasing the density of measurement
points, we can create a region in which imprecision intervals of neighbouring
nodes overlap in the vertical dimension, and thus each node could become a local
minimum in some realization. Thus, water flowing down the valley could,
theoretically, ``get stuck'' at any point, and thus, the minimum watershed of
the point $q$ at the bottom of the valley would contain nothing but $q$ itself,
see \figref{silly:cws}.
\footnote{Interestingly, there are some parallels to observations made in the \textsc{gis}
literature. Firstly, Hebeler et al.~\cite{hebeler20094} observe that the watershed is
more sensitive to elevation error in ``flatlands''.  Secondly,
simulations have shown that also potential local minima or ``small sub-basins'' can
severely affect the outcome of hydrological computations~\cite{lindsay08}.}

\begin{figure}[tb]
\centering
\includegraphics{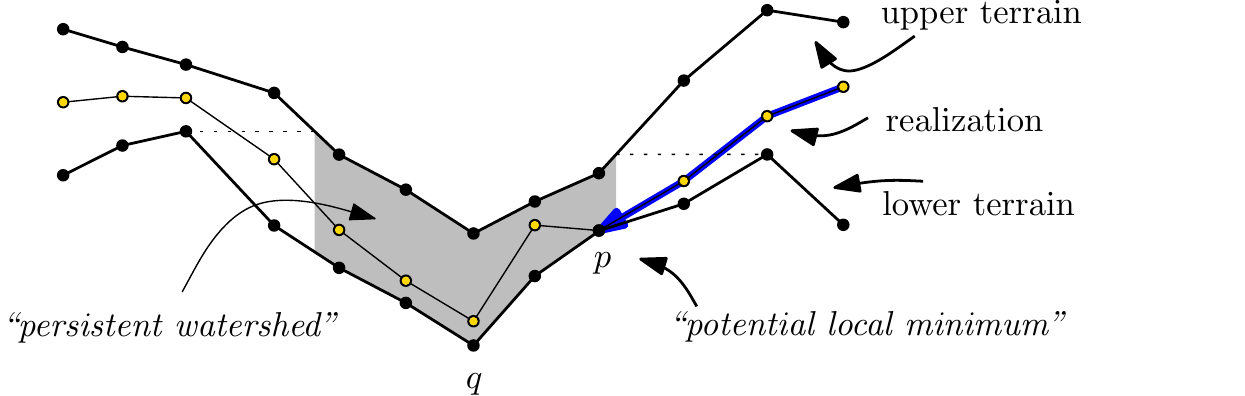}
\caption{An example of a 1.5 dimensional imprecise terrain, where the core
watershed can be arbitrarily reduced by oversampling. The node $p$ cannot be in the
core watershed of any other node.}
\figlab{silly:cws}
\end{figure}

Nevertheless, it seems clear that any water flowing in the valley must
eventually reach $q$ (possibly after flooding some local minima in the valley),
since the water has nowhere else to go.
This leads to an alternative definition of a minimal watershed, after we rewrite
the definition of the core watershed slightly.
Observe that the following holds for the complement of the core watershed.
\[\compl{\CoWS(Q)}= \left\{p: ~\exists~ \pi \in \FP(\ReT), \pi \ni p
~\lnot\exists~ q \in Q:  p \flowsvia{\pi} q  \right\}\]

Thus, the core watershed of $Q$ is the complement of the set of nodes $p$,
for which it is possible that water follows a flow path from $p$ that does \emph{not} lead to $Q$.
Assume there exists a suitable set of \emph{alternative
destinations} $S$, such that we can rewrite the above equation as follows
\[\compl{\CoWS(Q)}= \left\{p: ~\exists~ \pi \in \FP(\ReT), \pi \ni p ~\exists~s
\in S: (p \flowsvia{\pi} s) \wedge (\pi[p,s] \cap Q = \emptyset) \right\}.\]

Note that the right hand side is equivalent to the set
\begin{equation} \Eqlab{q:avoid:ws}
 \AvWS{Q}(S) := {\bigcup_{\pi \in \FP(\ReT) } \bigcup_{s \in S } \{p: (p \flowsvia{\pi} s)
\wedge (\pi[p,s] \cap Q = \emptyset) \}}
\end{equation}

We call the set in \Eqref{q:avoid:ws} the $Q$\emphi{-avoiding
potential watershed} of a set of nodes $S$ and we denote it with $\AvWS{Q}(S)$.
This is the set of nodes that have a potential flow path to a node $s \in S$ that does not
pass through a node of $Q$ before reaching $s$.
Note that it is still possible for those flow paths to intersect
$Q$, as long as this happens outside the subpath between $p$ and $s$.

It remains to identify the set of alternative destinations $S$.
Since every flow path ends in a local minimum, the set of potential local minima
clearly serves as such a set of destinations. 
Let $V^{\setminus Q}_{\min}$ be the union of all sets $r$ such that there exists
a realization in which all nodes of $r$ have the same elevation, $r$ is a local
minimum, and $r \cap Q = \emptyset$. 
However, it is also safe to include the nodes that do not have any flow path to $Q$,
which is the complement of the set $\PoWS(Q)$. 
It follows for the core watershed:


\[ \CoWS(Q) = \compl{~\AvWS{Q}(~V^{\setminus Q}_{\min} \cup \compl{~\PoWS(Q)} ~)~}\]

Note that we can rewrite this as follows:
\[ \CoWS(Q) = \compl{~\AvWS{Q}(~\compl{~\PoWS(Q)}~)~} ~\setminus~ \AvWS{Q}(~
V^{\setminus Q}_{\min} \cap  {\PoWS(Q)} ~) \]

Based on the above considerations we suggest the following alternative
definition of a minimal watershed.
\begin{defn}\deflab{persistent}
The \emphi{persistent watershed} of a set of nodes $Q$ is defined as
\[\PsWS(Q) := \compl{~\AvWS{Q}(~\compl{~\PoWS(Q)} ~)~} .\]
\end{defn}
The shaded area in \figref{silly:cws} indicates what would be the persistent watershed of $q$ in this case: these are the nodes that can never be high enough so that water from those nodes could escape from the potential watershed of $q$.

To compute the persistent watershed efficiently, all we need are efficient
algorithms to compute potential watersheds and $Q$-avoiding potential
watersheds.
We have already seen how to compute $\PoWS(Q)$ efficiently in
\secref{potential-ws}.  Note that the $Q$-avoiding potential watershed of $S$ is
different from the potential watershed of $S$ in the terrain $T'$ that is
obtained by removing the nodes $Q$ and their incident edges from $T$.  The next
lemma states that we can also compute $Q$-avoiding potential watersheds
efficiently.

\begin{lemma}\lemlab{compute-apws}
There is an algorithm which outputs the $Q$-avoiding potential watershed of $S$
and takes time $O(n \log n)$, where $n$ is the number of edges of the terrain.
\end{lemma}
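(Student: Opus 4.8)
The plan is to reuse the potential-watershed machinery of \secref{potential-ws} almost unchanged: I would run the Dijkstra-like sweep of \algref{compute-pws} together with its subroutine \expandPWS, but seed the priority queue with the nodes of $S$ (each with key $\low(s)$) rather than with $Q$, and forbid the sweep from ever outputting or expanding a node of $Q$. Concretely, whenever a call to \expandPWS returns a neighbour lying in $Q$ that pair is discarded, and nodes of $S$ that happen to lie in $Q$ are never enqueued, since no $Q$-avoiding prefix can reach them. The essential point is that the precomputed slope diagrams are \emph{not} altered: every node still carries all of its neighbours, each at its highest admissible elevation. Hence when \expandPWS asks at what elevation a node $p$ first sends water to an in-watershed neighbour $q'$, it still makes $q'$ compete against every other neighbour of $p$ at its \emph{highest} position---including any neighbour in $Q$. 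This is exactly what we want, and it is what separates $\AvWS{Q}(S)$ from the potential watershed of $S$ computed in the terrain $T'$ obtained by deleting $Q$: in $T'$ a deep vertex $q \in Q$ adjacent to $p$ disappears from the competition, so $p$ could enter the watershed through $q'$ even when, in $T$, water from $p$ is unavoidably diverted into $q$. Keeping $q$ in the slope diagram rules out precisely such spurious nodes, so that $\AvWS{Q}(S)$ is contained in the potential watershed of $S$ on $T'$, while the two sets differ in general.

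For correctness I would replay the two-part invariant of \thmref{compute-pws} verbatim, everywhere reading ``flow path to $Q$'' as ``flow path whose prefix from the node to some $s \in S$ avoids $Q$''. When $(p,z)$ is first dequeued I claim (i) there is a realization $R$ with $\elev(R,p)=z$ carrying a flow path whose prefix from $p$ to some $s \in S$ avoids $Q$ and visits only already-extracted nodes, and (ii) no realization with a smaller elevation of $p$ admits any such $Q$-avoiding prefix. The base case $p = s \in S$, $z = \low(s)$, is immediate. For (i) I take the predecessor $p'$ whose expansion gave $p$ the key $z$, obtain by induction its realization $R'$ and its $Q$-avoiding prefix $\pi'$ from $p'$ to $s$, and build $R$ from $R'$ by setting $\elev(R,p)=z$ and raising every neighbour of $p$ that is off $\pi'$---in particular every neighbour in $Q$, since $\pi'$ avoids $Q$---to its highest elevation. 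By the defining property of \expandPWS the intended successor of $p$ then out-competes all those raised neighbours, so water leaves $p$ onto $\pi'$; as $p \notin Q$ and $\pi'$ avoids $Q$, the extended prefix is $Q$-avoiding. Part (ii) is the usual Dijkstra monotonicity argument: a hypothetical $Q$-avoiding prefix from $p$ staying strictly below $z$ would contain a first not-yet-extracted node $r$ whose successor was already extracted; because the prefix avoids $Q$ we have $r \notin Q$, so $r$ was genuinely enqueued, with a key $z_r \ge z$ (otherwise it would have preceded $p$), and elevation-monotonicity of flow paths forces $\elev(R,p) \ge z_r \ge z$, a contradiction.

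The step I expect to demand the most care is the pair of claims underlying the remark above: that retaining the $Q$-vertices at their upper bounds in the slope diagrams is simultaneously \emph{sound}---the elevation returned by \expandPWS is realised by an honest realization in which the chosen successor beats the $Q$-vertices even at their highest positions---and \emph{complete}---to route $p$ downhill while staying off $Q$ one is always free to push its $Q$-neighbours up to their upper bounds, which is exactly the configuration the slope diagram encodes, so no realization can achieve a lower threshold. Once this is pinned down, no resource bound changes: the slope diagrams are the same objects precomputed once in $O(n \log n)$ time and $O(n)$ space (\thmref{compute-pws}), each of the $O(n)$ nodes is expanded at most once, a degree-$d$ expansion costs $O(d \log n)$ by \lemref{pre-expand-pws}, and the $O(n)$ priority-queue operations contribute $O(n \log n)$, giving the claimed $O(n \log n)$ bound.
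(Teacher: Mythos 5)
Your proposal is correct and is essentially the paper's own proof: the paper likewise runs \algref{compute-pws} seeded with $S$, simply discards any node of $Q$ when it would be processed (so $Q$-nodes are never expanded or output), and crucially leaves the $Q$-nodes in the slope diagrams so that they still compete for being the steepest-descent neighbour, then observes that the invariant of \thmref{compute-pws} carries over. The only cosmetic difference is that you filter $Q$-nodes at enqueue time rather than at extraction time, which has the same effect.
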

\begin{proof}
We modify the algorithm to compute the potential watershed of $S$ as shown in
Algorithm~\ref{compute-pws}, such that, each time the algorithm extracts a node
from the priority queue, this node is discarded if it is contained in $Q$.
Instead, the algorithm continues with the next node from the priority queue.
Clearly, this algorithm does not follow any potential flow paths that flow
through $Q$. However, the nodes of $Q$ are still being considered by
the neighbors of its neighbors as a node they have to compete against for being
the steepest descent neighbor.  It is easy to verify that the proof of
\thmref{compute-pws} also holds for the computation of $Q$-avoiding potential
watersheds.
\end{proof}

\bigskip
By applying \thmref{compute-pws} and \lemref{compute-apws}, we obtain:

\begin{theorem}
We can compute the persistent watershed $\PsWS(Q)$ of $Q$ in time $O(n\log n)$, where $n$ is the number of edges of the terrain.
\end{theorem}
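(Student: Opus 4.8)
The plan is to read the definition $\PsWS(Q) = \compl{\AvWS{Q}(\compl{\PoWS(Q)})}$ literally, as a composition of three operations, each of which we already know how to carry out efficiently: computing a potential watershed, computing a $Q$-avoiding potential watershed, and taking set complements. Since complementation of an explicitly stored vertex set costs only $O(n)$ time, the running time will be dominated by the two watershed computations, each of which runs in $O(n\log n)$ time, so correctness follows directly from \defref{persistent} while the bound follows by summing the costs.

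Concretely, I would proceed in four steps. First, run the algorithm of \thmref{compute-pws} on $Q$ to obtain the potential watershed $\PoWS(Q)$ as an explicit set of nodes, in $O(n\log n)$ time. Second, form its complement $\compl{\PoWS(Q)}$ by a single linear scan over all nodes, marking those not output in the first step; this takes $O(n)$ time. Third, use this complement as the source set $S$ and invoke the $Q$-avoiding potential watershed algorithm of \lemref{compute-apws} to compute $\AvWS{Q}(\compl{\PoWS(Q)})$, again in $O(n\log n)$ time. Fourth, take the complement of the result by another linear scan, yielding $\PsWS(Q)$. Adding the four step costs gives the claimed $O(n\log n)$ bound.

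The points that warrant care, rather than being genuine obstacles, are the bookkeeping ones. The $Q$-avoiding watershed routine must accept an arbitrary, possibly large, source set $S = \compl{\PoWS(Q)}$; but since that routine (like \algref{compute-pws}) merely enqueues all source nodes with their lowest elevations at the start and then processes each node at most once, its running time is $O(n\log n)$ independent of $|S|$, so no blowup arises from the size of the source set. I would also share the $O(n\log n)$ slope-diagram precomputation from \thmref{compute-pws} between the potential-watershed run and the $Q$-avoiding run, so that it is paid only once. Finally, reducing each complement to an $O(n)$ scan relies on keeping every intermediate watershed as an explicit node set rather than an implicit description, which both subroutines already deliver as output. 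With these points dispatched, the theorem follows immediately by chaining \thmref{compute-pws}, \lemref{compute-apws}, and the two complementations exactly as \defref{persistent} prescribes.
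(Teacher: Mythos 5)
Your proposal is correct and matches the paper's own (very brief) argument exactly: the paper likewise obtains the theorem by chaining \thmref{compute-pws} and \lemref{compute-apws} with two linear-time complementations, following \defref{persistent} literally. The bookkeeping points you raise (explicit node sets, source-set size, shared precomputation) are all sound and consistent with how those subroutines are stated.
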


\section{Regular terrains}
\seclab{regular}

We extend the results on imprecise watersheds in the network model for a certain
class of imprecise terrains, which we call ``regular''. We will first define
this class and characterize it.  To this end we will introduce the notion of
imprecise minima (see \defref{fuzzybottom}), which are the ``stable'' minima of an imprecise terrain,
regular or non-regular.  In \secref{regularize} we will describe how to compute
these minima and how to turn a non-regular terrain into a regular terrain.
In the remaining sections, we discuss nesting properties and fuzzy boundaries of
imprecise watersheds. Furthermore, we observe that regular terrains have a
well-behaved ridge structure, that delineates the main watersheds.

The main focus of this section is on the extension of the results in section
\secref{network}. Some of the concepts introduced here could also be applied to
the surface model, however, we confine our discussion to the network model.

\subsection{Characterization of regular terrains}

We first give a definition of a proper minimum in an imprecise terrain.

\begin{defn}\deflab{fuzzybottom}
A set of nodes $S$ in an imprecise terrain $T$ is an \emphi{imprecise minimum}
if $S$ contains a local minimum in every realization of $T$, and no proper
subset of $S$ has this property.
\end{defn}

Now a regular imprecise terrain is defined as follows:

\begin{defn}\deflab{regular}
An imprecise terrain $T$ is a \emphi{regular imprecise terrain}, if every local
minimum of the lowermost realization $\Rlow$ of $T$ is an imprecise minimum of
$T$.
\end{defn}

Any imprecise minimum $S$ on a regular terrain is a minimum in $\Rlow$. Indeed,
if $S$ would not be a minimum on $\Rlow$, then, by \defref{fuzzybottom}, it
would contain a proper subset $S'$ that is a minimum on $\Rlow$ while $S'$ is
not an imprecise minimum of $T$ --~but this would contradict \defref{regular}. Now, we observe:

\begin{observation}
\obslab{flatbottom}
Let $S$ be an imprecise minimum on a regular terrain. Then each node $s \in S$
has the same elevation lower bound $\low(s)$. Furthermore, for each subset $S'
\subset S$ we have $\PoWS(S') = \PoWS(S)$ and $\WS(\Rlow,S') = \WS(\Rlow,S)$.
\end{observation}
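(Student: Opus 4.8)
The plan is to prove the three claims of \obsref{flatbottom} in order, leaning on the characterization established just before the observation, namely that any imprecise minimum $S$ on a regular terrain is already a local minimum in $\Rlow$.

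\textbf{Equal lower bounds.} First I would observe that since $S$ is a local minimum in $\Rlow$, by the definition of local minimum in \secref{definitions} all nodes of $S$ have the same elevation in $\Rlow$, and that elevation is exactly $\low(s)$ for each $s \in S$ (since $\Rlow$ assigns every node its lower bound). Hence all nodes $s \in S$ share a common value $\low(s)$; call it $\ell$. This disposes of the first sentence.

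\textbf{Potential watersheds agree.} For the claim $\PoWS(S') = \PoWS(S)$ for every $S' \subset S$, the inclusion $\PoWS(S') \subseteq \PoWS(S)$ is immediate from the definition, since $S' \subseteq S$ implies $\bigcup_{q \in S'}\WS(R,q) \subseteq \bigcup_{q \in S}\WS(R,q)$ for every realization $R$. For the reverse inclusion it suffices to show that every node of $S$ lies in $\PoWS(S')$, because any potential flow path reaching $S$ must enter $S$ at some node, and if that node is already in $\PoWS(S')$ the whole upstream portion of the path is too. So fix $s \in S$ and any $s' \in S'$; I would exhibit a realization (or invoke stability of flow paths via \lemref{all:stable}) in which water flows from $s$ to $s'$ entirely within $S$. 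Because $S$ is connected and flat at elevation $\ell$ in $\Rlow$, while all of $N(S)$ is strictly higher, I can choose elevations inside $S$ to create a descending path from $s$ to $s'$ that never leaves $S$ --- raising the intermediate nodes of $S$ infinitesimally along a chosen connecting path while keeping $s'$ lowest. This gives $s \in \WS(R,s') \subseteq \PoWS(S')$, and therefore $\PoWS(S) \subseteq \PoWS(S')$.

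\textbf{Lowermost watersheds agree.} The identity $\WS(\Rlow,S') = \WS(\Rlow,S)$ follows by the same reasoning specialized to the single realization $\Rlow$: the inclusion $\subseteq$ is trivial, and for $\supseteq$ I note that in $\Rlow$ the set $S$ is a local minimum at the common elevation $\ell$, so any water reaching any node of $S$ spreads to \emph{all} of $S$ (this is exactly the ``eventually reach all the nodes of $P$'' clause in the flow definition in \secref{def:network:flow}). Hence $s \flowsto{\Rlow} s'$ for all $s, s' \in S$, giving $\WS(\Rlow, s) = \WS(\Rlow, s')$ for any two nodes of $S$, and the claim follows by taking unions.

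The step I expect to be the main obstacle is the reverse inclusion for potential watersheds, specifically constructing the realization that routes water from an arbitrary $s \in S$ to a prescribed $s' \in S'$ without leaving $S$. The subtlety is that the elevation intervals of the nodes in $S$ need only share the lower bound $\ell$, not the whole range, so I must verify that the perturbation which makes $s'$ the unique sink along a path inside $S$ stays within each node's allowed interval and does not inadvertently create a steepest-descent edge leading out of $S$; the fact that $N(S)$ is \emph{strictly} higher than $\ell$ in $\Rlow$ is what guarantees enough room to keep the flow confined to $S$ for sufficiently small perturbations, so appealing to \lemref{all:stable} makes this rigorous.
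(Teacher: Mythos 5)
The paper offers no proof of this observation at all: it is stated as an immediate consequence of the preceding remark that every imprecise minimum $S$ of a regular terrain is a local minimum of $\Rlow$, which is also the starting point of your argument. Your first and third parts are correct and are exactly what the paper relies on: all nodes of a local minimum have equal elevation by definition, so the $\low(s)$ coincide; and in $\Rlow$ water reaching any node of the local minimum $S$ floods all of $S$, so $\WS(\Rlow,s)=\WS(\Rlow,s')$ for all $s,s'\in S$.

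The gap is in the reverse inclusion $\PoWS(S)\subseteq\PoWS(S')$. You reduce it to showing $S\subseteq\PoWS(S')$ and then assert that if a potential flow path from $p$ reaches a node $s_0\in S$ and $s_0\in\PoWS(S')$, then $p\in\PoWS(S')$. That is precisely the nesting/transitivity property of potential watersheds, and the paper explicitly notes in \secref{nesting} that potential watersheds do \emph{not} satisfy it: the realization witnessing $p\flowsto{R}s_0$ and the one witnessing $s_0\in\PoWS(S')$ may require incompatible elevations, so you cannot simply concatenate them. The fix is to work with a single realization: starting from $R$ with $p\flowsto{R}q$ for some $q\in S$, lower every node of $S$ to the common value $\ell=\low(s)$ and leave all other elevations of $R$ unchanged. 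Since only nodes of $S$ change, and they only decrease, the only way the flow path from $p$ can be diverted is into $S$ at an earlier node, so water from $p$ still reaches some node of $S$; and now $S$ is a flat connected set at elevation $\ell$ with $N(S)$ strictly higher (its neighbours satisfy $\low(t)>\ell$ because $S$ is a local minimum of $\Rlow$), hence a local minimum, so the water floods all of $S$ and in particular reaches $s'\in S'$. This also removes the need for your ``infinitesimal raising'' along a path inside $S$, which could fail for a node with $\high(s)=\low(s)$ and is unnecessary once the flooding rule for local minima is invoked. (Alternatively one can glue the two witnessing realizations via \lemref{watershed-overlay} and observe that $S$ is a local minimum in the overlay, but some such argument must be supplied; the bare transitivity claim is false in general.)
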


%
%

We derive a characterization of imprecise minima. For this, we introduce proxies.

\begin{defn}\deflab{proxy}
A \emphi{proxy} of an imprecise minimum $S$ is a node $p \in S$, such that there are no realizations $R$ and nodes $q \notin S$ such that $p \flowsto{R} q$.
\end{defn}

Thus, water that arrives in a proxy of an imprecise minimum $S$, can never leave
$S$ anymore. This implies that the proxy is not in the potential watershed of
any set of nodes that lies entirely outside $S$. The following lemma states that
every imprecise minimum contains a proxy.

\begin{lemma}\lemlab{fb:characterization}
Let the \emphi{bar} of a set $S$ be $\clearance(S) = \min_{s \in S} \high(s)$.
A set $S$ is an imprecise minimum if and only if (i) $\clearance(S) < \min_{t \in N(S)} \low(t)$ and (ii) no proper subset $S'$ of $S$ has this property.
Every imprecise minimum has a proxy.
\end{lemma}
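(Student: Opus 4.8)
The plan is to factor the biconditional through the auxiliary property $P(S)$: \emph{in every realization $R\in\ReT$, $S$ contains a local minimum}. I would prove two facts and then assemble them: \textbf{Fact 1}, that (i) implies $P(S)$; and \textbf{Fact 2}, that conversely, whenever $P(S)$ holds, some subset of $S$ already satisfies (i). Granting these, the equivalence is pure logic. For ``$\Rightarrow$'', if $S$ is an imprecise minimum then $P(S)$ holds and is $\subseteq$-minimal; Fact 2 yields $S_0\subseteq S$ with (i), Fact 1 gives $P(S_0)$, and minimality forces $S_0=S$, so (i) holds; a proper subset with (i) would, by Fact 1, have $P$, contradicting minimality, so (ii) holds. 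For ``$\Leftarrow$'', (i) gives $P(S)$ by Fact 1, and if some proper $S'\subsetneq S$ had $P(S')$ then Fact 2 would produce a subset of $S'$ (hence a proper subset of $S$) satisfying (i), contradicting (ii); thus $S$ is minimal for $P$, i.e.\ an imprecise minimum.

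\textbf{Fact 1} is the easy half and also yields the proxy. Condition (i) provides a node $s_0\in S$ with $\high(s_0)=\clearance(S)<\low(t)$ for every $t\in N(S)$, so in \emph{any} realization $s_0$ lies strictly below the whole boundary $N(S)$. Since flow is non-increasing in elevation, the steepest-descent path from $s_0$ can never reach a node of $N(S)$ and so stays inside $S$; it ends in a local minimum, which is contained in $S$ because a node of that minimum adjacent to the outside would be a node of $N(S)$ at elevation at most $\high(s_0)$, contradicting (i). This proves $P(S)$, and the same observation shows $s_0$ is a proxy: water from $s_0$ never rises to $\min_{t\in N(S)}\low(t)$, hence never crosses to a node outside $S$ in any realization.

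\textbf{Fact 2} is the crux. I would prove the contrapositive constructively: if \emph{no} subset of $S$ satisfies (i), I exhibit a realization in which $S$ contains no local minimum. Put every boundary node $t\in N(S)$ at $\low(t)$, and for $v\in S$ define the bottleneck value $\xi(v)=\max(\low(v),\min_{w\sim v}\xi(w))$, with $\xi(t):=\low(t)$ for $t\in N(S)$ --~this is the smallest elevation at which $v$ admits a non-increasing escape path to $N(S)$, computable by a Dijkstra-like sweep from the boundary. If $\xi(v)\le\high(v)$ for all $v\in S$, then setting $\elev(R,v)=\xi(v)$ is a valid realization carrying a drainage forest (each $v$ points to the neighbour that finalized it, whose elevation is no larger) rooted in $N(S)$; a local minimum $L\subseteq S$ is then impossible, since each $v\in L$ would have its forest-parent at elevation at most $\elev(R,v)$, forcing that parent into $L$ and contradicting that the forest roots outside $S$. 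So it remains to show that a node with $\xi(v_0)>\high(v_0)$ forces a subset satisfying (i).

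For that final step --~and this interaction is where I expect to spend the most care~-- set $\theta=\high(v_0)$ and let $X=\{v\in S:\xi(v)>\theta \text{ and every neighbour } w \text{ of } v \text{ has } \xi(w)>\theta\}$. Because $\low(v_0)\le\high(v_0)=\theta<\xi(v_0)$, the recursion forces all neighbours of $v_0$ to have $\xi(w)>\theta$, so $v_0\in X$ and $\clearance(X)\le\theta$. For any $t\in N(X)$ one checks $\low(t)>\theta$: a boundary neighbour satisfies $\low(t)=\xi(t)>\theta$, and an interior neighbour $t\in S\setminus X$ with $\xi(t)>\theta$ must, failing the all-neighbours clause, have a neighbour of smaller $\xi$, which collapses the recursion to $\xi(t)=\low(t)>\theta$. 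Hence $\clearance(X)\le\theta<\min_{t\in N(X)}\low(t)$, i.e.\ $X$ satisfies (i). The delicate point is precisely the treatment of interior neighbours with small $\low(t)$: insisting in the definition of $X$ that \emph{all} neighbours stay above $\theta$ is exactly what peels off the ``escaping fringe'' of such nodes and makes the boundary of $X$ clean, and getting this set right is the heart of the proof.
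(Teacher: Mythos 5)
Your proof is correct, but the heart of it --- establishing that a set which contains a local minimum in every realization must have a subset satisfying (i) --- is argued quite differently from the paper. The paper only needs this for sets that are already $\subseteq$-minimal: it builds the single realization $R$ that puts every $r \in S$ at $\max(\clearance(S), \low(r))$ and every $t \in N(S)$ at $\low(t)$, and shows that no proper subset of $S$ can be a local minimum of $R$ (if one were, deleting its neighbourhood would split off a component $S''$ containing a node at elevation $\clearance(S)$ whose neighbourhood lies strictly higher, so $S''$ would satisfy (i) and violate minimality of $S$); hence $S$ is a local minimum of $R$ as a whole, which is exactly (i). Your Fact~2 proves the stronger, minimality-free statement via the minimax ``escape elevation'' $\xi$ and the extraction of the trapped set $X$. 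This buys a cleaner assembly --- the backward direction no longer has to pass to a smallest subset with the property and re-invoke the forward direction, since Fact~2 hands you the offending subset directly --- and it is essentially the flooding idea behind the Gray et al.\ regularization algorithm of \secref{regularize}. The price is extra machinery: the recursion $\xi(v)=\max(\low(v),\min_{w\sim v}\xi(w))$ is circular as written and must be grounded as the minimax-over-escape-paths value (your Dijkstra sweep), and the acyclicity of the drainage forest needs the finalization order to break ties. Your handling of $X$ --- requiring \emph{all} neighbours of a member to have $\xi$ above the threshold $\theta$, so that an interior neighbour outside $X$ collapses to $\xi(t)=\low(t)>\theta$ --- is the delicate step and it checks out. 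Fact~1 and the choice of proxy coincide with the paper's argument.
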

\begin{proof}
First observe that condition (i) implies that $S$ contains a local minimum in any realization.

If $S$ is an imprecise minimum, then, by definition, it contains a local minimum in any realization and no proper subset $S'$ of $S$ has this property. We argue that this implies (i) and (ii) for $S$.

To prove (i), consider the following realization $R$: For all nodes $r \in S$ we
set $\elev(R,r) = \max(\clearance(S), \low(r))$, and for all nodes $t \in N(S)$
we set $\elev(R,t) = \low(t)$. Now suppose, for the sake of contradiction, that
there is a proper subset $S'$ of $S$ that is a local minimum in $R$. Like all
nodes of $S$, the local minimum $S'$ must have elevation at least
$\clearance(S)$; each node $t \in N(S')$ must be set at a higher elevation
$\low(t)$. If we would remove the nodes of $N(S')$ from $S$, the imprecise
minimum $S$ would be separated into several components, including at least one component $S''$ that contains a node $s$ with $\high(s) = \clearance(S)$. This component $S''$ is a proper subset of $S$. Its neighbourhood $N(S'')$ consists of nodes from $N(S)$ and $N(S')$, all of which have an elevation lower bound strictly above $\clearance(S) = \min_{s \in S''} \high(s)$. Thus $S'' \subset S$ meets condition (i) and contains a local minimum in any realization, contradicting the assumption that $S$ is an imprecise minimum. If follows that no proper subset $S'$ of $S$ is a local minimum in $R$; therefore $S$ must be a local minimum as a whole, which implies (i).

To prove (ii), assume, for the sake of contradiction, that $S$ contains a proper subset $S'$ such that $\clearance(S') < \min_{t \in N(S')} \low(t)$. Thus, $S'$ would contain a local minimum in any realization, and $S$ would not be an imprecise minimum; hence (ii) must hold for $S$.

Now we argue that, if (i) and (ii) are met, then $S$ is an imprecise minimum. Recall that if condition (i) is met, then $S$ contains a local minimum in any realization. Now assume, for the sake of contradiction, that there exists a proper subset $S'$ that always contains a local minimum. Let $S'$ be a smallest such subset of $S$. We have that $S'$ is an imprecise minimum, and therefore, as we proved above, it holds that $\clearance(S') < \min_{t \in N(S')} \low(t)$, which contradicts that condition (ii) holds for $S$. Hence, there is no proper subset $S'$ of $S$ that always contains a local minimum; therefore $S$ is an imprecise minimum.

As a proxy of an imprecise minimum $S$, we take any node $s$ such that $\high(s)
< \min_{t \in N(S)} \low(t)$. By part (i) of the lemma, such a node $s$ always
exists. Since $s$ lies below any node of $N(S)$ in any realization, there are no
realizations $R$ and nodes $q \notin S$ such that $s \flowsto{R} q$; thus $s$ is
a proxy of $S$.  \end{proof}

\subsection{Computing proxies and regular terrains}
\seclab{regularize}

Any imprecise terrain can be turned into a regular imprecise terrain by raising
the lower bounds on the elevations such that local minima that violate the
regularity condition are removed from $\Rlow$. Indeed, in hydrological
applications it is common practice to preprocess terrains by removing local
minima before doing flow computations~\cite{tarboton1997new}. To do so while
still respecting the given upper bounds on the elevations, we can make use of
the algorithm from Gray et al.~\cite{gkls-rleit-10}. The original goal of this
algorithm is to compute a realization of a surface model that minimizes the
number of local minima in the realization, but the algorithm can also be
applied to a network model. It can easily be modified to output a
proxy for each imprecise minimum of a terrain. Moreover, the realization $M$
computed by the algorithm has the following convenient property: if we change
the imprecise terrain by setting $\low(v)$ to $\elev(M,v)$ for each node, we
obtain a regular imprecise terrain.

\paragraph{The algorithm}
The algorithm proceeds as follows. We will sweep a horizontal plane upwards.
During the sweep, any node is in one of three states. Initially, each node is
\emph{undiscovered}. Once the sweep plane reaches $\low(v)$, the state of the
node changes to \emph{pending}. Pending nodes are considered to be at the level
of the sweep plane, but they may still be raised further. During the sweep, we
will always maintain the connected components of the graph induced by the nodes
that are currently pending; we call this graph $G_P$. As soon as it becomes
clear that a node cannot be raised further or does not need to be raised
further, its final elevation on or below the sweep plane is decided and the node
becomes \emph{final}. More precisely, the algorithm is driven by two types of
events: we may reach $\low(v)$ for some node $v$, or we may reach $\high(v)$ for
some node $v$. These events are handled in order of increasing elevation;
$\low(v)$-events are handled before $\high(v)$-events at the same elevation. The
events are handled as follows:\begin{itemize}
\item reaching $\low(v)$: we make $v$ pending, and find the component $S$ of $G_P$ that contains $v$. If $v$ has a neighbour that is final, we make all nodes of $S$ final at elevation $\low(v)$.
\item reaching $\high(v)$: if $v$ is final, nothing happens; otherwise we report  $v$ as a proxy, we find the connected component $S$ of $G_P$ that contains $v$, and we make all nodes of $S$ final at elevation\footnote{This is a small variation: the algorithm as described originally by Gray et al.\ would make the elevations final at $\high(v) = \min_{s \in S} \high(s)$. However, in the current context we prefer to make the elevations final at $\max_{s \in S} low(s)$, to maintain as much of the imprecision in the original imprecise terrain as possible.} $\max_{s \in S} \low(s)$.
\end{itemize}
Gray et al.\ explain how to implement the algorithm to run in $O(n \log n)$
time, ~\cite{gkls-rleit-10}.

\begin{lemma}
\lemlab{findtheproxies}
Given an imprecise terrain $T$, (i) all nodes reported by the above algorithm are proxies of imprecise minima, and (ii) the algorithm reports exactly one proxy of each imprecise minimum of $T$.
\end{lemma}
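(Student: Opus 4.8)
The plan is to prove both parts by analyzing, for each reported node, the \emph{pending component} it belongs to at the moment it is reported, and showing that this component is precisely an imprecise minimum of which the reported node is a proxy. This reduces everything to facts about the state of $G_P$ at report time, for which I will use the characterization in \lemref{fb:characterization}.

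For part (i), suppose the algorithm reports $v$ when the sweep reaches $\high(v)$, and let $S$ be the connected component of $G_P$ containing $v$ at that instant. I would establish three facts about $S$. First, $v$ realizes the bar of $S$, i.e.\ $\clearance(S) = \high(v)$: any node $u \in S$ with $\high(u) < \high(v)$ would have triggered its own $\high(u)$-event earlier while pending, which would have reported $u$ and finalized its whole component --- contradicting that $u$ is still pending. Second, every neighbour $t \in N(S)$ is still \emph{undiscovered}, hence $\low(t) > \high(v)$ and so $\min_{t \in N(S)} \low(t) > \clearance(S)$: a final neighbour would have forced $S$ itself final (via a $\low$- or $\high$-event on the shared pending component), while a pending neighbour would lie in $S$ by definition of a connected component. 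Third, $S$ is inclusion-minimal among sets satisfying condition (i) of \lemref{fb:characterization}: for any proper nonempty $S' \subsetneq S$, connectivity of $S$ gives an edge to some $t^* \in S \setminus S'$, and since $t^*$ is pending we have $\low(t^*) \le \high(v) \le \clearance(S')$, so $S'$ fails $\clearance(S') < \min_{t \in N(S')} \low(t)$. By the second and third facts and \lemref{fb:characterization}, $S$ is an imprecise minimum; since $\high(v) = \clearance(S) < \min_{t \in N(S)} \low(t)$, the proxy construction in that lemma shows $v$ is a proxy of $S$.

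For part (ii), I would first note that each report finalizes its whole component, so the components of distinct reports are disjoint, giving distinct imprecise minima and hence ``at most one report per imprecise minimum''. The substantive direction is that \emph{every} imprecise minimum $M$ is reported exactly once. Writing $h = \clearance(M)$, the key observation is that every $t \in N(M)$ has $\low(t) > h$, so all external neighbours of $M$ stay undiscovered until the sweep passes $h$; consequently no pending component that meets $M$ can cross $N(M)$ below height $h$. Using this I would show that no node of $M$ becomes final before some node of $M$ is reported: the first event finalizing an $M$-node cannot be a $\high$-event below $h$ (its component would lie inside $M$, contradicting $\clearance(M)=h$), and cannot be a $\low$-event finalizing via an already-final neighbour (that neighbour would be an earlier-finalized node of $M$, or an undiscovered node of $N(M)$ --- both impossible); so it must be the $\high$-event of the node of $M$ attaining $\high(\cdot)=h$, which reports it. Its pending component $C \subseteq M$ is, by part (i), an imprecise minimum, so minimality of $M$ forces $C = M$; this report finalizes all of $M$, so no second $M$-node is ever reported.

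I expect the main obstacle to be this ``at least once'' direction: carefully ruling out that a node of $M$ is finalized prematurely, either by a stray $\high$-event originating outside $M$ or by a $\low$-event chaining off a previously finalized node. The clean way to handle this is the invariant that, below the bar height $h$, any connected set of pending nodes meeting $M$ is entirely contained in $M$ (because $N(M)$ is undiscovered), combined with the fact that the very first finalization of the whole sweep must originate from a $\high$-event; this lets me trace every finalization below $h$ back to an outside-$M$ component that cannot touch $M$. A minor but necessary bookkeeping point is the tie-breaking rule that $\low$-events precede $\high$-events at equal elevation, which I would use to guarantee that all nodes of $M$ with $\low(\cdot) \le h$ are pending exactly when the decisive $\high$-event at height $h$ is processed.
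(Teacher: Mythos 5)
Your proof is correct, and it rests on the same foundations as the paper's: both analyze the pending component $S$ of $G_P$ at the instant a node $v$ is reported and match it against the characterization of imprecise minima in \lemref{fb:characterization}. The decomposition, however, is genuinely different. The paper proves (ii) first and then lets (i) lean on it --- its minimality step for $S$ argues that a proper subset satisfying condition (i) of \lemref{fb:characterization} would already have had a proxy reported ``by the analysis given above.'' You instead make part (i) fully self-contained via a direct argument: any proper $S' \subsetneq S$ has a neighbour $t^* \in S \setminus S'$ that is already pending, so $\low(t^*) \le \high(v) \le \clearance(S')$ and condition (i) fails for $S'$. This pays off in part (ii): the paper must separately verify that every node $s$ of an imprecise minimum $M$ satisfies $\low(s) \le \clearance(M)$ (so that all of $M$ is pending at the decisive event and the finalized component is exactly $M$), whereas you obtain $C = M$ for free from part (i) together with the minimality of $M$ in \defref{fuzzybottom}. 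Your explicit case analysis ruling out premature finalization of $M$-nodes (no $\high$-event at or below the bar can touch $M$ from outside, and no $\low$-event can chain off a final neighbour of an $M$-node) is also more detailed than the paper's, which simply asserts that the decisive $\high$-event is the first to finalize anything in the component. Two small points of hygiene: the containment ``any pending component meeting $M$ lies inside $M$'' is needed at height \emph{exactly} $h = \clearance(M)$, not only below it --- it does hold there because $\low(t) > h$ strictly for all $t \in N(M)$, but you should say so; and the strict inequality $\low(t) > \high(v)$ for undiscovered neighbours in part (i) uses the same tie-breaking rule ($\low$-events before $\high$-events at equal elevation) that you invoke only for part (ii).
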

\begin{proof}
We first prove the second part, and then the first part of the lemma.

(ii) Let $S$ be an imprecise minimum. Let $v$ be the node in $S$ which was the
first to have its $\high(v)$-event processed. By \lemref{fb:characterization},
$v$ is a proxy of $S$ and we have $\high(v) < min_{t \in N(S)} \low(t)$.  Hence,
when $\high(v)$ is processed, the component of $G_P$ that contains $v$ does not
contain any nodes outside $S$, and the $\high(v)$-event is the first event to
make any nodes in this component final. Thus, $v$ is reported as a proxy.
Furthermore, no node $s \in S$ can have $\low(s) > \high(v)$, otherwise
$\clearance(S \setminus \{s\}) = \high(v) < \min_{t \in \{s, N(S)\}} \low(t) \leq \min_{t \in N(S \setminus \{s\})} \low(t)$,
and thus, by
\lemref{fb:characterization}, $S$ would not be an imprecise minimum. Hence, when
the $\high(v)$-event is about to be processed, all nodes of $S$ have been
discovered and are currently pending. The $\high(v)$-event makes all nodes of
$S$ final; thus, any $\high(s)$-events for other nodes $s \in S$ will remain
without effect and no more proxies of $S$ will be reported.

(i) Let $v$ be a node that is reported as a proxy in a $\high(v)$-event. We
claim that the connected component $S$ of $G_P$ that contains $v$ at that time,
is an imprecise minimum. Indeed, by definition of $G_P$, all nodes of $S$ are
pending, and thus $\high(v) = \min_{s \in S} \high(s) = \clearance(S)$.
Furthermore, because $S$ is a connected component of $G(P)$, all nodes $t \in
N(S)$ must be either undiscovered or final. In fact, the algorithm maintains the
invariant that no neighbor of a finalized node is pending; since all nodes in
$S$ are pending, all nodes $t \in N(S)$ must be undiscovered. Therefore
$\high(v) \leq \min_{t \in N(S)} \low(t)$. Because all $\low(t)$-events at the
same elevation as $\high(v)$ are processed before the $\high(v)$-event is
processed, we actually have a strict inequality: $\high(v) < \min_{t \in N(S)}
\low(t)$. It follows that $S$ satisfies condition (i) of
\lemref{fb:characterization}. Furthermore, no proper subset $S'$ of $S$ has this
property, otherwise, by the analysis given above, a proxy for $S'$ would have
been reported already and the nodes from $S'$ would have been removed from $G_P$
at that time. Hence, $S$ also satisfies condition (ii) of
\lemref{fb:characterization}, and $S$ is an imprecise minimum, with $v$ as a
proxy.  \end{proof}

\begin{lemma}
\lemlab{regularize}
Let $M$ be the realization of a terrain $T$ as computed by the algorithm described above. Let $T'$ be the imprecise terrain that is obtained from $T$ by setting $\low(v) = \elev(M,v)$ for each vertex $v$. The terrain $T'$ is a regular imprecise terrain.
\end{lemma}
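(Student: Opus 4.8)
The plan is to unfold \defref{regular}: since setting $\low(v) = \elev(M,v)$ for every node makes the lowermost realization $\Rlow$ of $T'$ coincide with $M$ itself, it suffices to show that every local minimum of $M$ is an imprecise minimum of $T'$. I would split this into two parts. First, I would characterize the local minima of $M$ as exactly the node sets that the algorithm finalizes together in a single $\high(v)$-event, i.e.\ the imprecise minima of $T$ found by \lemref{findtheproxies}. Second, I would verify that each such set also satisfies the characterization of imprecise minima in \lemref{fb:characterization}, now with respect to the raised lower bounds of $T'$.

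For the first part, the easy direction is to check that a component $S$ finalized in a $\high(v)$-event is a local minimum of $M$: by \lemref{findtheproxies}(i) it is an imprecise minimum of $T$ with $\clearance(S) = \high(v) < \min_{t \in N(S)} \low(t)$, all of $S$ sits at the common elevation $\max_{s \in S}\low(s) \le \high(v)$ in $M$, and every neighbour in $N(S)$ is finalized strictly higher. The converse is the crux. Given a local minimum $L$ of $M$ at elevation $z$, I would trace from any node of $L$ a \emph{non-ascending drainage path} in $M$: within a flat region finalized by a $\low(v)$-event one walks to the node $v$ whose final neighbour triggered finalization and steps down to that neighbour, repeating until the chain of triggering neighbours bottoms out --- which it must, at a component finalized by a $\high(v)$-event, since the first finalization in the sweep cannot be triggered by an already-final node. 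Because $L$ is a local minimum, every neighbour outside $L$ lies strictly above $z$, so a non-ascending path that starts in $L$ can never exit $L$; therefore the $\high(v)$-component $S^\ast$ that the path reaches shares a node with $L$ and sits at the same elevation $z$. Since two local minima of $M$ that share a node must coincide, $L = S^\ast$ is a $\high(v)$-component, as claimed.

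For the second part, I would apply \lemref{fb:characterization} in $T'$, where $\clearance$ is unchanged (the upper bounds are untouched) while $\low(t)$ is replaced by $\elev(M,t)$. Condition (i) follows because $\clearance(S) = \high(v)$, whereas every $t \in N(S)$ has $\elev(M,t) \ge \low(t) > \high(v)$. For condition (ii), any proper subset $S' \subsetneq S$ has a neighbour inside $S \setminus S'$ whose lower bound in $T'$ equals the common bottom $\max_{s \in S}\low(s) \le \high(v) \le \clearance(S')$, so $S'$ fails condition (i); hence no proper subset qualifies, and $S$ is an imprecise minimum of $T'$.

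The main obstacle is the converse direction of the first part: ruling out that a node finalized in a $\low(v)$-event can belong to a local minimum of $M$. The delicate point is that such a node may lie in a large flat region, so its ``drain'' can be arbitrarily far away; the drainage-path argument above is what makes this rigorous, since it shows that any genuine escape route must eventually cross the boundary of $L$ downward, which a local minimum forbids.
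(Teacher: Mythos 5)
Your proof is correct, and its skeleton matches the paper's: both arguments come down to showing that a local minimum $S$ of $M$ must have its elevations finalized in a $\high(v)$-event at a moment when every node of $N(S)$ is still undiscovered, so that $\low(t) > \high(v) = \clearance(S)$ for all $t \in N(S)$. You diverge in the two sub-steps. For the crux --- excluding the $\low(v)$-event case --- the paper argues in one sentence that the triggering final neighbour would lie outside $S$ at elevation at most $\elev(M,v)$, contradicting local minimality; your drainage-path argument instead follows the chain of triggering neighbours until it bottoms out in a $\high$-component, which is longer but also covers the configuration that the paper's sentence elides, namely a triggering neighbour that lies inside the flat region (then no contradiction arises locally and one really does have to walk the chain, using that a non-ascending path cannot leave $L$ and that two local minima of $M$ sharing a node coincide). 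For minimality, the paper simply notes that $M$ is itself a realization of $T'$ in which $S$ is a local minimum, so no proper subset of $S$ can contain a local minimum in every realization of $T'$; you instead verify condition (ii) of \lemref{fb:characterization} by exhibiting, for every proper subset $S'$, a neighbour in $S \setminus S'$ whose raised lower bound does not exceed $\clearance(S')$. Both routes are sound; the paper's is more economical, while yours is more self-contained in that it leans only on the characterization in \lemref{fb:characterization} and on the event analysis already carried out for \lemref{findtheproxies}.
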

\begin{proof}
Note that $M$ is the lowermost realization of $T'$. Consider any local minimum $S$ of $M$. Observe that the algorithm cannot have finalized the elevations of the last pending vertices of $S$ in a $\low(v)$-event, because then we would have $v \in S$ and $v$ must have a neighbor $t \notin S$ that was finalized before $v$; hence $\elev(M,t) \leq \elev(M,v)$ and $S$ would not be a local minimum. Therefore, the algorithm must have finalized the last elevations of the vertices of $S$ in a $\high(v)$-event for a vertex $v \in S$. Furthermore, each vertex $t \in N(S)$ must have been undiscovered at that time; otherwise $t$ would have become part of the same component as the vertices of $S$ before its elevations were finalized, or $t$ would have been finalized before $v$: in both cases $S$ would not be a local minimum. Hence we have $\low(t) > \high(v)$ for each vertex $t \in N(S)$, and thus, $S$ is a local minimum in every realization of $T$ or $T'$. Furthermore, no proper subset of $S'$ of $S$ contains a local minimum in every realization of $T'$, since in particular, in $M$ the set $S$ is a local minimum and therefore no proper subset $S'$ of $S$ is a local minimum. Thus, by \defref{fuzzybottom} and \defref{regular}, $T'$ is a regular terrain.
\end{proof}


\subsection{Nesting properties of imprecise watersheds}
\seclab{nesting}
To be able to design data structures that store imprecise watersheds and answer
queries about the flow of water between nodes efficiently, it would be convenient if
the watersheds satisfy the following \emphi{nesting condition}: if $p$ is
contained in the watershed of~$q$, then the watershed of $p$ is contained in the
watershed of~$q$.
Clearly, potential watersheds do not satisfy this nesting condition, while core
watersheds do.
However, in general, persistent watersheds, are not nested in this way. We give
a counter-example that uses a non-regular terrain in the next lemma before
proving the nesting condition for persistent watersheds in regular terrains later in this section.

\begin{lemma}\lemlab{counter-nesting}
There exists an imprecise terrain with two nodes $p$ and $q$
such that $p \in \PsWS(q)$ and $\PsWS(p) \nsubseteq \PsWS(q)$.
\end{lemma}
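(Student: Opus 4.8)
The plan is to exhibit an explicit, small counterexample and verify the three facts $p\in\PsWS(q)$, $s\in\PsWS(p)$, $s\notin\PsWS(q)$ for a carefully chosen witness $s$. Before building anything I would record a guiding observation that constrains the shape of any counterexample. One checks easily that $\PsWS(q)\subseteq\PoWS(q)$: if $w\neq q$ and $w\notin\PoWS(q)$, then taking any flow path through $w$ and the trivial subpath $w$ itself already puts $w$ into $\AvWS{q}(\compl{\PoWS(q)})$, so $w\notin\PsWS(q)$. Moreover, if $w\in\PsWS(p)\setminus\PsWS(q)$ then the flow path witnessing $w\notin\PsWS(q)$ reaches some $s\notin\PoWS(q)$ while avoiding $q$; since $p\in\PsWS(q)$ it cannot pass through $p$ first (otherwise $p$ itself would witness $p\notin\PsWS(q)$), and then $w\in\PsWS(p)$ forces $s\in\PoWS(p)$. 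Hence any counterexample already violates nesting for \emph{potential} watersheds ($p\in\PoWS(q)$ but $\PoWS(p)\not\subseteq\PoWS(q)$), and I would design the terrain around producing exactly that, after which persistence falls out for the witness $s$.

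The construction I would use is a $1.5$-dimensional terrain (a path embedded on the $x$-axis), with nodes $s,u,p,c,q$ at positions $0,1,11,12,13$, so that $|up|=10$ is much larger than $|pc|=|cq|=1$. I fix the elevations $\elev(\Rlow,\cdot)$ and $\high(\cdot)$ equal on all but one node: $s=10$, $u=c=5$, $q=0$, while $p$ is imprecise with range $[0,10]$. The mechanism is a ``seesaw'' on $p$ created by the unequal edge lengths: when $p<5$ it lies below both neighbours and is a local minimum (a \emph{trap}), and when $p>5$ it drains, choosing its steepest‑descent neighbour. The whole point of making $|pc|\ll|up|$ is that, whenever $p$ is above both neighbours, the slope toward the near node $c$ strictly exceeds the slope toward the far node $u$, so $p$ drains to $c$ and onward to $q$, never back toward $u$. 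Consequently water from the $s$-side can enter $p$ \emph{only} when $p<5$, i.e.\ exactly when $p$ traps it; and $u$ itself becomes a local minimum once $p>5$, so water from $s$ then stops at $u$. Thus in no realization does $s$ reach $q$, yet $s$ reaches $p$ when $p<5$ and $p$ reaches $q$ when $p>5$.

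The verification steps I would carry out are then routine case checks of steepest descent: (1) compute $\PoWS(q)=\{p,c,q\}$ and $\PoWS(p)=\{s,u,p\}$, which already gives $p\in\PoWS(q)$ with $\PoWS(p)\not\subseteq\PoWS(q)$; (2) show $p\in\PsWS(q)$ by checking that in every realization every flow path leaving $p$ either stops at $p$ (when $p<5$) or runs $p\to c\to q$ (when $p>5$), so it never reaches a node outside $\PoWS(q)$ before $q$ — here I must confirm $p$ never flows to $u$, which is precisely the distance inequality above; (3) show $s\in\PsWS(p)$, since every flow path from $s$ stays inside $\{s,u,p\}=\PoWS(p)$ until it (possibly) reaches $p$; and (4) conclude $s\notin\PsWS(q)$ directly from $\PsWS(q)\subseteq\PoWS(q)=\{p,c,q\}$. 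Together these give $p\in\PsWS(q)$ while $s\in\PsWS(p)\setminus\PsWS(q)$, proving $\PsWS(p)\nsubseteq\PsWS(q)$. I would also note, for consistency with the surrounding narrative, that this terrain is non-regular: $\{p\}$ is a local minimum of $\Rlow$ but is not an imprecise minimum (it is not a local minimum when $p>5$).

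The step I expect to be the main obstacle is part (2), ensuring $p\in\PsWS(q)$ — equivalently, guaranteeing that $p$ has \emph{no} leak out of $\PoWS(q)$ toward the $s$-side in any intermediate realization. A naive path (with $c$ lower in elevation than $u$, or with equal edge lengths) fails, because for elevations of $p$ strictly between the two neighbours the water drifts left into $u\notin\PoWS(q)$, destroying $p\in\PsWS(q)$. The resolution, and the one genuinely delicate design choice, is the edge-length asymmetry $|pc|\ll|up|$ together with $\elev(u)=\elev(c)$: it makes $c$ the unique steepest-descent neighbour of $p$ whenever $p$ drains at all, so the leftward leak is eliminated while the trap-versus-conduit seesaw that keeps $s$ out of $\PoWS(q)$ is preserved.
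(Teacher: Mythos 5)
Your overall strategy---exhibiting an explicit small non-regular terrain and verifying the memberships by hand---is the same as the paper's (which uses a concrete eight-node example with three imprecise vertices), and your preliminary observations ($\PsWS(q) \subseteq \PoWS(q)$, and that any counterexample must already violate nesting of \emph{potential} watersheds) are correct. The construction itself, however, fails, and the failure is exactly at the boundary case your write-up waves past. Consider the realization in which $p$ has elevation $5$, so that $p$, $u$ and $c$ all sit at elevation $5$. The slope from $u$ to $p$ is $0$, which is non-negative and maximal over $u$'s neighbours, so $p$ is a steepest-descent neighbour of $u$; and none of $\{u\}$, $\{u,p\}$, $\{u,p,c\}$ is a local minimum, because a local minimum requires its entire neighbourhood to be \emph{strictly} higher, and here some neighbour is always at equal or lower elevation. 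Hence $s \to u \to p \to c \to q$ is a valid flow path, so $s$ and $u$ lie in $\PoWS(q)$. That gives $\PoWS(q)=V$, hence $\compl{\PoWS(q)}=\emptyset$, hence $\AvWS{q}(\compl{\PoWS(q)})=\emptyset$ and $\PsWS(q)=V$, so $\PsWS(p)\subseteq\PsWS(q)$ holds trivially and the counterexample collapses. Your sentence ``water from the $s$-side can enter $p$ only when $p<5$, i.e.\ exactly when $p$ traps it'' is where this slips through: at elevation exactly $5$ water enters $p$ but is not trapped, since horizontal edges carry flow in this model and the paper's stability lemma explicitly keeps such degenerate flow paths in $\FP(\ReT)$.

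Worse, the flaw cannot be patched within this path topology. To block every through-realization you need the elevations at which $u$ drains into $p$ (namely $z$ at most the elevation of $u$) to be disjoint from those at which $p$ drains into $c$ (namely $z$ at least the elevation of $c$), which forces $c$ to be strictly higher than $u$; but then in the realization where $p$ sits exactly at the elevation of $c$, the slope from $p$ to $c$ is $0$ while the slope from $p$ to $u$ is strictly positive, so $p$ drains back to $u$---no matter how short $|pc|$ is---and $u$ is then a local minimum outside $\PoWS(q)$, destroying $p\in\PsWS(q)$. So every choice of elevations yields either a through-realization or a leak realization. This is presumably why the paper's counterexample is not a single path with one imprecise vertex but a richer two-dimensional graph with three imprecise vertices: extra structure is needed to keep $s$ outside $\PoWS(q)$ while simultaneously guaranteeing that water from $p$ can never escape $\PoWS(q)$ without passing through $q$.
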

\begin{proof}
We give an example of a non-regular terrain that has this property.  Refer to
\figref{cws-counter}.
The persistent watershed of $p$ as shown in red is not completely contained in
the persistent watershed of $q$ as shown in blue.
The left figure gives a top-view. All edges have unit length, except for the edge between $w$ and $q$.
The right figure shows the fixed elevations of $s, t, t', u, v$ and $w$, the
elevation intervals of $p$, $q$ and $r$, and the correct horizontal distances on all edges except $|pv|$ and $|qv|$.
The red outline delimits $\PoWS(p) = \{p,q,r,s,t,v,w\}$.
The red dashed outline delimits $\PsWS(p) = \{p,s,v\}$.
The blue outline delimits $\PoWS(q) = \{p,q,s,v,w\}$.
The blue dashed outline delimits $\PsWS(q) = \{p,q,v\}$.
\end{proof}

\begin{figure}[htb]\center
\includegraphics[scale=1]{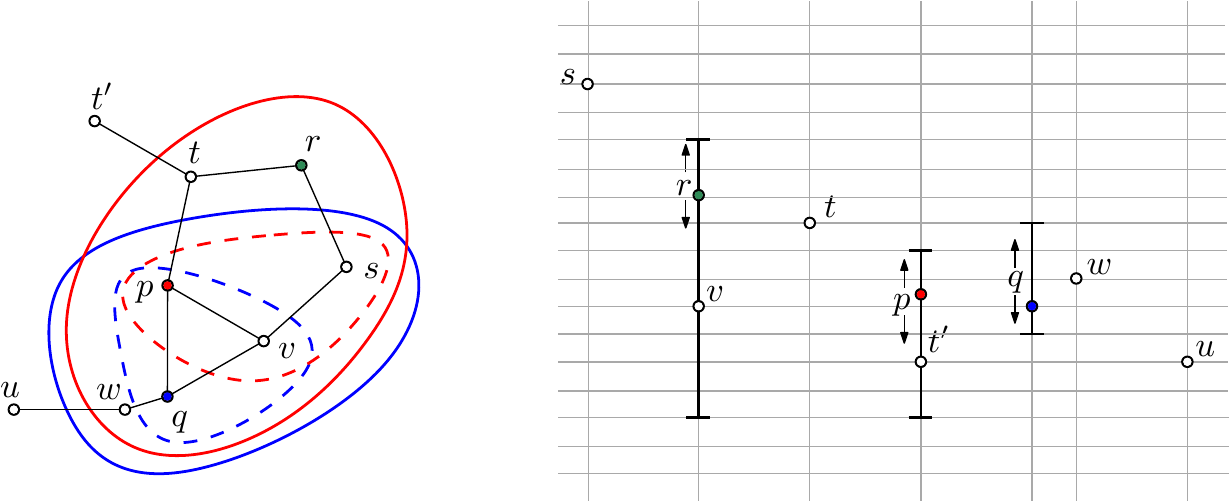}
\caption{Counterexample of a non-regular terrain to the nesting condition of persistent watersheds.}
\figlab{cws-counter}
\end{figure}


The following lemmas will prove that on regular imprecise terrains persistent
watersheds do satisfy the nesting condition.

\begin{lemma}\lemlab{pws:low}
Let $Q$ be a set of nodes in a regular imprecise terrain, then $\PersistentWS(Q) \subseteq \WS(\Rlow,Q)$.
\end{lemma}
\begin{proof}
Consider a flow path from a node $p \in \PersistentWS(Q)$ in $\Rlow$. By the definition of persistent watersheds, the path cannot leave $\PoWS(Q)$ without going through $Q$.

If the path reaches a local minimum $S$ without going through any node of $Q$,
we claim that this local minimum must contain a node $q \in Q$. To prove this
claim, assume, for the sake of contradiction, that there is a $Q$-avoiding flow
path to a local minimum $S$ such that $Q \cap S = \emptyset$. Thus there
would be a $Q$-avoiding flow path to any node of $S$, and in particular, to a
proxy $s \in S$; because the terrain is regular, $S$ must be an imprecise
minimum (by \defref{regular}), and therefore a proxy is guaranteed to exist by
\lemref{fb:characterization}. As observed above, $s$ is not in the potential
watershed of any set of nodes outside $S$; in particular, $s$ is not in
$\PoWS(Q)$. This implies that there is a realization in which a flow path from
$p$ leaves $\PoWS(Q)$ without going through any node of $Q$, contradicting the
assumption that $p \in \PsWS(Q)$. Hence, if a flow path from $p$ in $\Rlow$
reaches a local minimum $S$ without going through any node of $Q$, then $S$ must
contain a node $q \in Q$, and there would also be a flow path from $p$ to $q$.

Therefore, from any node $p \in \PersistentWS(Q)$ there must be a flow path to a
node $q \in Q$ in $\Rlow$, and thus, $\PersistentWS(Q) \subseteq \WS(\Rlow,Q)$.
\end{proof}

\begin{lemma}\lemlab{nesting:ws:b}\lemlab{plunging}
Let $Q$ be a set of nodes on an imprecise terrain, and let $P \subseteq
\WS(\Rlow,Q)$. Then $\PoWS(P) \subseteq \PoWS(Q)$.
\end{lemma}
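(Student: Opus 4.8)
The plan is to prove the inclusion pointwise: I will show that every node $u \in \PoWS(P)$ lies in $\PoWS(Q)$. Fix such a $u$. Since $u \in \PoWS(P)$, there is a realization $R$ and a node $p \in P$ with $u \flowsto{R} p$, so $u \in \WS(R,p)$. Since $P \subseteq \WS(\Rlow,Q)$, there is a node $q \in Q$ together with a flow path $\pi$ from $p$ to $q$ in $\Rlow$; in particular $p \in \WS(\Rlow,q)$, and every node on $\pi$ lies in $\WS(\Rlow,q)$ as well.

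The key idea is to fuse these two witnesses by means of the watershed-overlay of \defref{ws-overlay}. I would form the overlay $\WSO$ of the two watersheds $\WS(R,p)$ and $\WS(\Rlow,q)$, i.e.\ apply \defref{ws-overlay} to the sequences $R_1=R$, $R_2=\Rlow$ and $q_1=p$, $q_2=q$. Applying \lemref{watershed-overlay} with target set $\{p,q\}$ gives $\WS(R,p) \subseteq \WS(\WSO,\{p,q\})$, hence $u$ flows to $p$ or to $q$ in $\WSO$.

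It then remains to show that water from $p$ reaches $q$ in $\WSO$; combined with the previous step and the transitivity of flow within a single realization, this yields $u \flowsto{\WSO} q$ in either case, so $u \in \WS(\WSO,q) \subseteq \PoWS(Q)$, as desired. For this I would argue that $\pi$ survives the overlay as a flow path. The crucial observation is that overlaying is one-sided: by \defref{ws-overlay}, every node $v$ satisfies $\elev(\WSO,v) \geq \low(v) = \elev(\Rlow,v)$, while every node on $\pi$, being contained in $\WS(\Rlow,q)$, is set to exactly $\low(v)$ in $\WSO$. Thus along $\pi$ each node keeps its $\Rlow$-elevation, whereas every off-path neighbour sits at least as high in $\WSO$ as in $\Rlow$. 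Consequently, for each $v_j$ on $\pi$ no off-path neighbour can be a steeper descent neighbour in $\WSO$ than the successor $v_{j+1}$, which retains its $\Rlow$-elevation and was a steepest-descent neighbour in $\Rlow$. Hence $v_{j+1}$ remains a steepest-descent neighbour of $v_j$ in $\WSO$, so $\pi$ is a flow path in $\WSO$ and $p \flowsto{\WSO} q$.

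The main obstacle is precisely this last step: a priori, lowering the nodes of $\WS(R,p)$ when forming $\WSO$ could divert the flow leaving $p$ away from $q$, and \lemref{watershed-overlay} only certifies that $u$ reaches the \emph{set} $\{p,q\}$, not $q$ itself. What makes the argument go through is the one-sided nature of the overlay: it can only lower elevations down to the $\Rlow$-level and never below, so the descent relations that define $\pi$ in $\Rlow$ are preserved, and the case distinction ``$u$ reaches $p$'' versus ``$u$ reaches $q$'' collapses once we know that $p$ reaches $q$. Note that this argument uses neither regularity nor any structure of $P$ beyond the hypothesis $P \subseteq \WS(\Rlow,Q)$.
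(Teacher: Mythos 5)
Your proof is correct, and it rests on the same central device as the paper's: form a watershed-overlay with $\WS(\Rlow,Q)$ and exploit that overlaying is one-sided, so every node of $\WS(\Rlow,Q)$ sits at exactly $\low(v)$ in the overlay while all other nodes sit at least as high, whence flow paths of $\Rlow$ inside $\WS(\Rlow,Q)$ survive. The execution differs in a genuine way, though. The paper overlays $\WS(\CanonR(P),P)$ with $\WS(\Rlow,Q)$ and, rather than citing \lemref{watershed-overlay}, re-derives the needed reachability inline: it decomposes the flow path from $r$ to $p$ into maximal prefixes and argues in three cases that wherever the flow gets diverted in the overlay, it is diverted onto a node whose elevation has dropped, and any such node must lie in $\WS(\Rlow,Q)$ --- so it never needs to know whether the flow still reaches $p$ itself. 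You instead overlay a single witness watershed $\WS(R,p)$ with $\WS(\Rlow,q)$, invoke \lemref{watershed-overlay} as a black box to conclude that $u$ reaches $\{p,q\}$ in $\WSO$, and then collapse the case distinction by proving $p \flowsto{\WSO} q$ and appealing to transitivity of flow within a single realization. Your route is shorter and reuses machinery the paper has already established; the paper's prefix analysis is in effect a lightweight specialization of the overlay lemma whose only payoff is landing directly in $\WS(\Rlow,Q)$. One small point that both you and the paper leave implicit in the path-preservation step: if the steepest-descent edge $(v_j,v_{j+1})$ on $\pi$ is horizontal, raising off-path neighbours could in principle make $v_j$ part of a local minimum of $\WSO$; but any local minimum containing $v_j$ must then also contain $v_{j+1}$ (its neighbourhood must be strictly higher), so water from $v_j$ still reaches $v_{j+1}$ and, iterating, reaches $q$. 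This does not affect the validity of your argument.
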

\begin{proof}
Let $\overline{R}$ be the watershed-overlay of $\WS(\CanonR(P),P)$ and $\WS(\Rlow,Q)$.
Consider a node $r \in \PoWS(P)$ and a flow path $\pi$ from $r$ to a node $p \in P$ in $\CanonR(P)$.
Let $\pi'$ be the maximal prefix of $\pi$ such that the nodes of $\pi'$ have the same elevation in $\CanonR(P)$ and $\overline{R}$, and let $\pi''$ be the maximal prefix of $\pi'$ such that $\pi''$ is still a flow path in $\overline{R}$. We distinguish three cases:\begin{itemize}
\item If $\pi' = \pi''$ is empty, then $r$ has lower elevation in $\CanonR(P)$ than in $\overline{R}$, so $r$ must be in $\WS(\Rlow,Q)$.
\item If $\pi' = \pi'' = \pi$, then flow from $r$ reaches a node $p \in P \subseteq \WS(\Rlow,Q)$ in $\overline{R}$.
\item Otherwise, let $(u,v)$ be the edge of $\pi$ such that $u$ is the last node of $\pi''$. Now $v$ is not on $\pi''$, so in $\overline{R}$, flow from $u$ either still follows $(u,v)$ but $\elev(\overline{R},v) < \elev(\CanonR(P),v)$, or flow from $u$ is diverted over an edge $(u,\widehat{v})$ to another node $\widehat{v}$ with $\elev(\overline{R},\widehat{v}) < \elev(\CanonR(P),\widehat{v})$. In either case, from $u$ we follow an edge to a node of which the elevation in $\overline{R}$ is lower than in $\CanonR(P)$; therefore this must be a node of $\WS(\Rlow,Q)$.  \end{itemize}
In any case, there is a flow path from $r$ to a node of $\WS(\Rlow,Q)$. From here, there must a path to a node $q \in Q$, since every flow path within $\WS(\Rlow,Q)$ in $\Rlow$ is also a flow path in $\overline{R}$. Thus there is flow path from $r$ to $q$ in $\overline{R}$, and thus, $r \in \PoWS(Q)$. This proves the lemma.
\end{proof}

\begin{figure}[tb]\center
\includegraphics{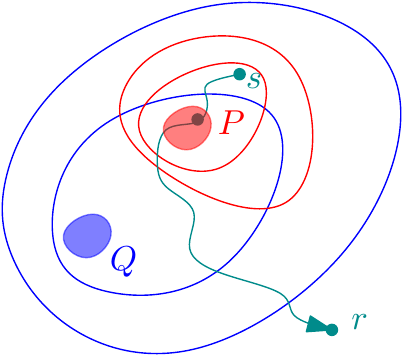}
\caption{Persistent watersheds are nested on regular terrains. Illustration to
the proof of \lemref{nesting-ws}. }
\figlab{nesting}
\end{figure}

\begin{lemma}\lemlab{nesting-ws}
(persistent watersheds are nested) Let $Q$ be a set of nodes in a regular imprecise terrain, and let $P \subseteq
\PsWS(Q)$. Then $\PsWS(P) \subseteq \PsWS(Q)$.  \end{lemma}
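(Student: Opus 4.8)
The plan is to prove the inclusion by passing to complements and using the definition $\PsWS(X) = \compl{\AvWS{X}(\compl{\PoWS(X)})}$ for both $X = P$ and $X = Q$. Taking complements reverses the desired inclusion, so $\PsWS(P) \subseteq \PsWS(Q)$ is equivalent to
\[\AvWS{Q}(\compl{\PoWS(Q)}) \subseteq \AvWS{P}(\compl{\PoWS(P)}).\]
In words: every node from which water can potentially escape $\PoWS(Q)$ without first passing through $Q$ can also escape $\PoWS(P)$ without first passing through $P$. I would prove this containment directly, by taking a witnessing flow path for the left-hand side and arguing that it also witnesses membership in the right-hand side.

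First I would record two preliminary inclusions. Since the terrain is regular and $P \subseteq \PsWS(Q)$, \lemref{pws:low} gives $P \subseteq \PsWS(Q) \subseteq \WS(\Rlow,Q)$. Feeding this into \lemref{plunging} yields $\PoWS(P) \subseteq \PoWS(Q)$, and hence $\compl{\PoWS(Q)} \subseteq \compl{\PoWS(P)}$. This is the only place where regularity enters, through \lemref{pws:low}.

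Now take any $r \in \AvWS{Q}(\compl{\PoWS(Q)})$. By definition there is a flow path $\pi \in \FP(\ReT)$ and a node $s \notin \PoWS(Q)$ with $r \flowsvia{\pi} s$ and $\pi[r,s] \cap Q = \emptyset$. The crucial step is to observe that $\pi[r,s]$ cannot contain any node $w \in P$: if it did, then $w \flowsvia{\pi} s$ with $s \notin \PoWS(Q)$ and $\pi[w,s] \subseteq \pi[r,s]$ avoiding $Q$, so the very same flow path $\pi$ would certify $w \in \AvWS{Q}(\compl{\PoWS(Q)}) = \compl{\PsWS(Q)}$, contradicting $w \in P \subseteq \PsWS(Q)$. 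Hence $\pi[r,s] \cap P = \emptyset$. Finally, by the first inclusion $\compl{\PoWS(Q)} \subseteq \compl{\PoWS(P)}$ we also have $s \notin \PoWS(P)$, so the same path $\pi$ and node $s$ witness $r \in \AvWS{P}(\compl{\PoWS(P)})$. This establishes the displayed containment, and therefore the lemma.

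I expect the main obstacle to be the middle (contradiction) step: one must be careful that the same flow path $\pi$, read in the order $w \flowsvia{\pi} s$, really certifies that $w$ lies in the $Q$-avoiding potential watershed of $\compl{\PoWS(Q)}$ — here it is important that the definition of $\AvWS{Q}(\cdot)$ quantifies over a full flow path in $\FP(\ReT)$ together with an ordered pair of its nodes, rather than requiring the suffix $\pi[w,s]$ to be a flow path in its own right. The rest is complement bookkeeping and two invocations of the earlier lemmas. It is worth noting that the argument uses $P \subseteq \PsWS(Q)$ in two essentially different ways: once via \lemref{pws:low} to compare the potential watersheds, and once directly to forbid the escape path from touching $P$.
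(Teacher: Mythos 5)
Your proposal is correct and is essentially the paper's own argument in contrapositive form: the paper fixes a hypothetical $s \in \PsWS(P)\setminus\PsWS(Q)$, takes its $Q$-avoiding escape path to a node outside $\PoWS(Q)$, and derives a contradiction from the node of $P$ that the path must contain, using exactly the same two ingredients you use (namely $\PoWS(P)\subseteq\PoWS(Q)$ via \lemref{pws:low} and \lemref{plunging}, and the fact that a node of $P$ on a $Q$-avoiding escape path would itself violate $P\subseteq\PsWS(Q)$). The only difference is presentational: you argue the complement inclusion directly rather than by contradiction.
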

\begin{proof}
Assume for the sake of contradiction that there exists a node $s \in \PsWS(P)$,
such that $s \notin \PsWS(Q)$. Clearly, $s \in \PoWS(P)$ and by
\lemref{nesting:ws:b} and \lemref{pws:low}, it holds that $s \in \PoWS(Q)$.
Furthermore, $s$ must have a flow path $\pi$ to a point $r \notin \PoWS(Q)$, which
does not pass through a node of $Q$, refer to \figref{nesting}.
By \lemref{nesting:ws:b} and  \lemref{pws:low}, $\PoWS(P) \subseteq \PoWS(Q)$,
and thus $r \notin \PoWS(P)$.  Furthermore, since $s \in \PsWS(P)$, the subpath
$\pi[s,r]$ must include a node $p \in P$. This contradicts the fact that $p  \in
\PsWS(Q)$, since $\pi[p,r] \cap Q = \emptyset$.
\end{proof}

\subsection{Fuzzy watershed boundaries}
\seclab{fuzzy-boundary}

\lemref{pws:low} and \lemref{plunging} also allow us to compute the difference
between the potential and a persistent watershed of a set of nodes $Q$
efficiently, given only the boundary of the watershed of $Q$ on the lowermost
realization of the terrain. We first define these concepts more precisely.

\begin{defn}\deflab{fuzzyboundary}
Given a realization $R$, and a set of nodes $Q$, let $\crossing(R,Q)$ be the
directed set of edges $(u,v)$ such that $u \in \WS(R,Q)$ and $v \notin
\WS(R,Q)$. We call $\crossing(R,Q)$ the \emphi{watershed boundary} of $Q$ in
$R$.
Likewise, we define the \emphi{fuzzy watershed boundary} of $Q$
as the directed set of edges $(u,v)$ such that $u \in \PoWS(Q)$ and $v \notin
\PsWS(Q)$ and we denote it with $\PoBound(Q)$.
We call the set $\PoWS(Q) \setminus \PsWS(Q)$ the \emphi{uncertainty area} of
this boundary.
\end{defn}

We will now discuss how we can compute the uncertainty area of any fuzzy watershed boundary efficiently.

\paragraph{Algorithm to compute the uncertainty area of a watershed.}
Assume we are given $\crossing(\Rlow,Q)$.
We will compute the set $\PoWS(Q) \setminus \PsWS(Q)$ with \algref{compute-pws},
modified as follows. Instead of initializing the priority queue with the nodes
of $Q$, we initialize in the following way. For each edge $(u,v) \in \crossing(\Rlow,Q)$, we
use the slope diagram of $u$ to determine the minimum elevation $z_u$ of $u$,
such that there is a realization in which water flows on the edge from $u$ to
$v$. If there exists such an elevation $z_u$, we enqueue $u$ with elevation (and
key) $z_u$. Similarly, we use the slope diagram of $v$ to determine the minimum
elevation $z_v$ of $v$, such that water may flow on the edge from $v$ to $u$. If
$z_v$ exists, we enqueue $v$ with elevation (and key) $z_v$. After initializing
the priority queue in this way, we run \algref{compute-pws} as written.

\begin{lemma}\lemlab{fuzzyboundary}
If the terrain is regular, the algorithm described above computes $\PoWS(Q) \setminus \PsWS(Q)$.
This is the uncertainty area of the fuzzy watershed boundary of $Q$.
\end{lemma}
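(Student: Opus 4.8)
\emph{Proof plan.}\ The plan is to establish the two inclusions $\PoWS(Q)\setminus\PsWS(Q)\subseteq O$ and $O\subseteq\PoWS(Q)\setminus\PsWS(Q)$, where $O$ denotes the set output by the modified algorithm, and to read $O$ through \thmref{compute-pws}. Write $W=\WS(\Rlow,Q)$. By \lemref{pws:low} we have $\PsWS(Q)\subseteq W\subseteq\PoWS(Q)$, so the uncertainty area decomposes as $(\PoWS(Q)\setminus W)$ together with $(W\setminus\PsWS(Q))$ and straddles the boundary $\crossing(\Rlow,Q)$. Since the inner endpoint $u$ of a boundary edge $(u,v)$ is enqueued at the smallest elevation $z_u$ at which water can flow $u\to v$ out of $W$, and the outer endpoint $v$ at the smallest elevation $z_v$ at which water can flow $v\to u$ into $W$, the correctness statement of \thmref{compute-pws} lets us read $O$ as the set of nodes $p$ for which some realization carries water from $p$ to a boundary endpoint while that endpoint is high enough to actually cross $\crossing(\Rlow,Q)$.

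For $\PoWS(Q)\setminus\PsWS(Q)\subseteq O$ I would take $p$ in the uncertainty area and split on the two pieces above. If $p\notin W$ then, as $p\in\PoWS(Q)$, some realization $R$ carries a flow path $\pi$ from $p$ to $Q$; since $\pi$ starts outside $W$ and ends in $Q\subseteq W$, it traverses a boundary edge $(u,v)\in\crossing(\Rlow,Q)$ in the direction $v\to u$, where necessarily $\elev(R,v)\geq z_v$. The prefix of $\pi$ up to $v$ then witnesses that $p$ flows to the outer seed $v$ at its spill elevation, so $p\in O$. If instead $p\in W\setminus\PsWS(Q)$, then by the definition of $\PsWS$ there is a realization $R$ and a $Q$-avoiding flow path from $p$ to a node outside $\PoWS(Q)$; this path leaves $W$ across some edge $(u,v)$ in the direction $u\to v$ with $\elev(R,u)\geq z_u$, so $p$ flows to the inner seed $u$ at its spill elevation, and again $p\in O$.

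For the reverse inclusion I would take $p\in O$, so some realization $R$ carries water from $p$ to a seed at its spill elevation; the water therefore crosses a boundary edge $(u,v)\in\crossing(\Rlow,Q)$ and reaches both its inner endpoint $u\in W$ and its outer endpoint $v\notin W$. From $u\in W$ we obtain $p\in\PoWS(\{u\})$, and \lemref{plunging} applied to $\{u\}\subseteq W$ yields $p\in\PoWS(Q)$. To see $p\notin\PsWS(Q)$ I would continue the flow from $v\notin W$: since $v\notin\WS(\Rlow,Q)$, water from $v$ in $\Rlow$ drains to a local minimum $S$ with $S\cap Q=\emptyset$. Because the terrain is regular, $S$ is an imprecise minimum (by \defref{regular}), so by \lemref{fb:characterization} it contains a proxy $\rho$, and a proxy never flows out of its own minimum; hence $\rho\notin\PoWS(Q)$. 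Splicing the upstream fragment from $p$ to $v$ with this descent produces a $Q$-avoiding flow path from $p$ to $\rho\notin\PoWS(Q)$, i.e.\ $p\notin\PsWS(Q)$.

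The hard part is the bookkeeping that ties flow in an arbitrary realization (where the flow paths live) to the boundary $\crossing(\Rlow,Q)$ (which is defined only in $\Rlow$), while keeping the exhibited escape path disjoint from $Q$. The descent fragment from $v$ stays outside $W$, and hence outside $Q$, because each of its nodes drains to $S$ rather than to $Q$; the delicate fragment is the upstream part from $p$ to $v$, which a priori might meet $Q$. Ruling this out --- equivalently, showing that the algorithm never outputs a node of $Q$, as it must not since $Q\subseteq\PsWS(Q)$ --- is where regularity does the real work: I would carry the induction on extraction order from \thmref{compute-pws} and combine $\PsWS(Q)\subseteq W$ with the monotone decrease of elevation along flow paths to argue that membership of $p$ in $O$ already supplies a $Q$-avoiding escape, so the two inclusions meet exactly. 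A minor technical point is that ``$u$ spills to $v$'' and ``$v$ spills to $u$'' must remain meaningful despite horizontal edges and ties on the boundary; this is precisely the degeneracy neutralised by \lemref{all:stable}, which I would invoke to reduce to a non-ambiguous $\Rlow$.
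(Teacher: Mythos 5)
Your proof follows essentially the same route as the paper's: seed the edges of $\crossing(\Rlow,Q)$ at their spill elevations, read the output through the correctness argument of \thmref{compute-pws}, use \lemref{plunging} (via the inner endpoint $u \in \WS(\Rlow,Q)$) for membership in $\PoWS(Q)$, and use regularity, \lemref{fb:characterization} and proxies to exhibit a $Q$-avoiding escape from the outer endpoint $v$ for non-membership in $\PsWS(Q)$. The only notable deviation is that for $p\in\WS(\Rlow,Q)\setminus\PsWS(Q)$ you argue directly from \defref{persistent} that an escape path must traverse $\crossing(\Rlow,Q)$, whereas the paper routes this case through a flow path on $\Rlow$ to a $Q$-free minimum and its proxy; both versions leave the same bookkeeping (splicing flow fragments living in different realizations, and keeping the upstream fragment off $Q$) at the level of a sketch, which you at least flag explicitly.
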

\begin{proof}
Observe, following the proof of \thmref{compute-pws}, that for any node $p$ output by the above algorithm, there are a realization $R_s$ and a node $s$ which was in the initial queue with elevation $z_s$, such that $\elev(R_s,s) = z_s$ and $R_s$ induces a flow path $\pi$ from $p$ to $s$. Let $(u,v)$ be the edge of $\crossing(\Rlow,Q)$ which led to the insertion of $s \in \{u,v\}$ into $Q$ with elevation $z_s$. Let $t$ be the other node of $(u,v)$, that is, $t = \{u,v\} \setminus \{s\}$. let $R_t$ be the realization obtained from $R_s$ by setting $\elev(R_t,t) = \low(t)$. Observe that, by our choice of $z_s$, the realization $R_t$ now induces a flow path from $p$ to $t$. We will now argue that (i) $p \in \PoWS(Q)$, and (ii) $p \notin \PsWS(Q)$.

(i) The existence of $R_u$ implies that $p \in \PoWS(u)$; since $u \in
\WS(\Rlow,Q)$ (by definition of $\crossing(\Rlow,Q)$ this implies $p \in \PoWS(Q)$ (by \lemref{plunging}).

(ii) By definition of $\crossing(\Rlow,Q)$, there is no flow path from $v$ to $Q$ on
$\Rlow$. Hence, any flow path from $v$ on $\Rlow$ must lead to a local minimum
$S$ that does not contain any node of $Q$, and by \defref{regular}, each such
local minimum $S$ is an imprecise minimum. Now, by \lemref{fb:characterization},
each such local minimum $S$ contains a proxy $s$, which is, by \defref{proxy},
not contained in $\PoWS(Q)$. Thus there is a flow path from $v$ that does not go
through any nodes of $Q$ and leads to a proxy $s \notin \PoWS(Q)$. Hence, by
\defref{persistent}, $p \notin \PsWS(Q)$.

Next, we will argue that if $p \in \PoWS(Q)$ and $p \notin \PsWS(Q)$, the algorithm will output $p$. We distinguish two cases.

If $p \in \WS(\Rlow,Q)$, then, because $p \notin \PsWS(Q)$, there must be a flow path on $\Rlow$ from $p$ to a minimum $S$ that does not contain any node of $Q$. By \defref{regular}, \lemref{fb:characterization} and \defref{proxy}, there will then be a flow path from $p$ to a proxy $s \in S$ that lies outside $\PoWS(Q)$, and thus, outside $\WS(\Rlow,Q)$.

If $p \notin \WS(\Rlow,Q)$, then, because $p \in \PoWS(Q)$, there must be a realization in which there is a flow path from $p$ to $Q$, and thus, from $p$ to $\WS(\Rlow,Q)$.

In both cases, there is a realization in which there is a flow path from $p$
that traverses an edge $(u,v) \in \crossing(\Rlow,Q)$, either from $u$ to $v$ or from $v$ to $u$. The algorithm reports at least all such points $p$.

This completes the proof of the lemma.
\end{proof}

\bigskip
Note that if all nodes have degree $O(1)$, the running time of the above
algorithm is linear in the size of the input ($\crossing(\Rlow,Q)$) and the
output ($\PoWS(Q) \setminus \PsWS(Q)$). When a data structure is given that
stores the boundaries of watersheds on $\Rlow$ so that they can be retrieved
efficiently, and the imprecision is not too high, this would enable us to compute
the boundaries and sizes of potential and persistent watersheds much faster
than by computing them (or their complements) node by node with
\algref{compute-pws}.

We can use the same idea as above to compute an uncertain area of the watershed boundaries between a set of nodes $Q$. More precisely, given a collection of nodes $Q$ such that no node $q \in Q$ is contained in the potential watershed of another node $q' \in Q$, we can compute the nodes that are in the potential watersheds of multiple nodes from $Q$.

\paragraph{Algorithm to compute the uncertainty area between watersheds.}
Let $Q$ be $\{q_1,...,q_k\}$ and let $G'$ be the graph induced by the potential watershed of $Q$. The algorithm is essentially the same as algorithm that computes the uncertainty area of a single watershed's boundary---the main difference is that now we have to start it with a suitable set of edges ${\cal X}$ on the fuzzy boundaries \emph{between} the watersheds of the nodes of $Q$. More precisely, ${\cal X}$ should be an edge separator set of $G'$, which separates the nodes of $G'$ into $k$ components $G'_1,...,G'_k$ such that nodes of each component $G'_i$ are completely contained in $\PoWS(q_i)$.

We obtain ${\cal X}$ with the following modification of \algref{compute-pws}. For each node $p$ we will maintain, in addition to a tentative elevation $z$, a tentative tag that identifies a node $q \in Q$ such that there is a realization $R$ with $\elev(R,p) = z$ and $p \flowsto{R} q$. We initialize the priority queue of \algref{compute-pws} with all nodes $q \in Q$, each with tentative elevation $\low(z)$ and each tagged with itself. The first time any particular node $q'$ is extracted from the priority queue, we obtain not only its final elevation but also its final tag $q$ from the queue, and each pair $(p,z) \in \expandPWSX{q'}{z'}$ is enqueued with that same tag $q$.
At the end of \algref{compute-pws}, we obtain the set of nodes in $\PoWS(Q)$ together with their elevations in the canonical realization $\CanonR(Q)$ and with tags, such that any set of nodes tagged with the same tag $q \in Q$ forms a connected subset of $\PoWS(q)$. We now extract the separator set ${\cal X}$ by identifying the edges between nodes of different tags.

Having obtained ${\cal X}$, we compute the union of the pairwise intersections of the potential watersheds of $q_1,...,q_k$ as follows. Again, we use \algref{compute-pws}. This time the priority queue is initialized as follows. For each edge $(u,v) \in {\cal X}$, we use the slope diagram of $u$ to determine the minimum elevation $z_u$ of $u$, such that there is a realization $R$ \emph{with $\elev(R,v) = \elev(\CanonR(Q),v)$} in which water flows on the edge from $u$ to $v$. If there exists such an elevation $z_u$, we enqueue $u$ with elevation (and
key) $z_u$. Similarly, we use the slope diagram of $v$ to determine the minimum
elevation $z_v$ of $v$, such that water may flow on the edge from $v$ to $u$ \emph{at elevation $\elev(\CanonR(Q),u)$}. If $z_v$ exists, we enqueue $v$ with elevation (and key) $z_v$. After initializing the priority queue in this way, we run \algref{compute-pws} as written, and output the result.

\begin{lemma} \lemlab{all:fuzzy:boundaries}
Given a set of nodes $q_1,\dots,q_k$ of an imprecise terrain, such that $q_i
\notin \PoWS(q_j)$ for any $i \neq j$ and $1\leq i,j\leq k$, we can compute the
set $\bigcup_i \bigcup_{j \neq i} (\PoWS(q_i) \cap \PoWS(q_j))$ in $O(n \log n)$
time, where $n$ is the number of edges of the imprecise terrain.
\end{lemma}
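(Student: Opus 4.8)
The plan is to verify the two-phase algorithm just described. The first phase, a tagged run of \algref{compute-pws} on $Q=\{q_1,\dots,q_k\}$, builds the separator ${\cal X}$, and the second phase, a run of \algref{compute-pws} seeded from ${\cal X}$, outputs the target set $\bigcup_i\bigcup_{j\neq i}(\PoWS(q_i)\cap\PoWS(q_j))$. The tagged run is just the algorithm of \thmref{compute-pws} with one extra label propagated along the edges of its Dijkstra tree, so it still runs in $O(n\log n)$ time and computes $\PoWS(Q)$ and $\CanonR(Q)$. First I would show that the nodes sharing a tag $q_i$ form a connected set $G_i'\subseteq\PoWS(q_i)$: each node inherits its tag from the node that expanded it, so it is joined to $q_i$ by same-tagged tree edges (connectivity), and by part~(i) of the induction in \thmref{compute-pws} it has a flow path to $q_i$ in $\CanonR(Q)$, so it lies in $\PoWS(q_i)$. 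The hypothesis $q_i\notin\PoWS(q_j)$ makes the roots distinct, so $\{G_i'\}$ partitions $\PoWS(Q)$ and the inter-tag edge set ${\cal X}$ is a genuine edge separator.

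Next I would prove the two inclusions for the seeded run. For ``output $\subseteq$ target'', suppose $p$ is output; following \thmref{compute-pws}, some realization carries a flow path from $p$ to a seed, say a node $u$ enqueued from an edge $(u,v)\in{\cal X}$ with $u$ tagged $q_i$ and $v$ tagged $q_j$ ($i\neq j$), at threshold elevation $z_u$ (the case where the seed is $v$ is symmetric). Assembling three realizations—the seed realization giving $p\flowsto{R}u$ with $\elev(R,u)=z_u$, the threshold realization giving $u\flowsto{R}v$ with $v$ at $\elev(\CanonR(Q),v)$, and $\CanonR(Q)$ giving $v\flowsto{R}q_j$—into one, using \lemref{watershed-overlay} to keep each concatenated flow alive, witnesses $p\in\PoWS(q_j)$. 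To also obtain $p\in\PoWS(q_i)$ I would use the inequality $z_u\geq\elev(\CanonR(Q),u)$, which holds because draining to $v$ already forces $u$ to reach $Q$ while $\elev(\CanonR(Q),u)$ is the least elevation at which $u$ reaches $Q$ (by \thmref{compute-pws}); raising $u$ to $z_u$ only steepens its canonical descent toward $q_i$, so the flow from $u$ to $q_i$ in $\CanonR(Q)$ is preserved and stitches onto the flow from $p$ to $u$. Hence $p\in\PoWS(q_i)\cap\PoWS(q_j)$ with $i\neq j$.

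For ``target $\subseteq$ output'', take $p\in\PoWS(q_a)\cap\PoWS(q_b)$ with $a\neq b$ and let $q_\ell$ be the tag of $p$. Picking $m\in\{a,b\}\setminus\{\ell\}$, a realization $R$ witnessing $p\in\PoWS(q_m)$ provides a flow path $\rho$ from $p$ to $q_m$. Every node of $\rho$ lies in $\PoWS(Q)$, while $p\in G_\ell'$ and $q_m\in G_m'$ with $\ell\neq m$; since $\rho$ is connected, it must traverse a separator edge $(x,y)\in{\cal X}$. A monotonicity property of the slope diagram—lowering $y$ can only lower the threshold elevation at which $x$ drains to $y$—combined with $\elev(R,y)\geq\elev(\CanonR(Q),y)$ (again by \thmref{compute-pws}, since $y$ reaches $Q$ along $\rho$) gives $\elev(R,x)\geq z_x$, so the subpath of $\rho$ from $p$ to $x$ lies inside the potential watershed that the seeded run explores from $x$, and therefore $p$ is output. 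The total cost is three $O(n\log n)$ passes—the tagged run, the extraction of ${\cal X}$, and the seeded run—hence $O(n\log n)$ overall.

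The step I expect to be the main obstacle is the forward inclusion, and within it the certification that $p$ lies in the \emph{second} watershed $\PoWS(q_i)$. Because potential watersheds are not nested, the bare facts that water flows from $p$ to $u$ and that $u\in\PoWS(q_i)$ do not place $p$ in $\PoWS(q_i)$; the argument closes only through $z_u\geq\elev(\CanonR(Q),u)$ and the observation that raising $u$ to $z_u$ keeps its canonical steepest-descent direction, which must be checked carefully against all other neighbours of $u$ (those feeding $u$ on the path, and those left at $\high$). A secondary delicate point is the slope-diagram monotonicity invoked in the reverse inclusion, which I would verify directly from the structure of $U(p)$.
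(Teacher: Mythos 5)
Your algorithm and the overall proof structure (tagged first pass to build the separator ${\cal X}$, seeded second pass, then the two inclusions) coincide with the paper's, and your reverse inclusion is sound and essentially the paper's argument --- you work with an arbitrary realization witnessing $p \in \PoWS(q_m)$ together with a monotonicity of the drain threshold in the elevation of the receiving endpoint, where the paper works with the specific flow path in $\CanonR(q')$ and the equality $\elev(\CanonR(Q),r') = \elev(\CanonR(q'),r')$; both close the induction back to $p$.

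The genuine gap is exactly where you predicted it: the certification that an output node $p$ lies in $\PoWS(q_i)$, the watershed of the tag of the seed $u$ itself. Your argument rests on the claim that raising $u$ from $\elev(\CanonR(Q),u)$ to $z_u$ ``only steepens its canonical descent toward $q_i$'' and therefore preserves the flow from $u$ to $q_i$. This is false in general: raising $u$ by $\delta$ increases the slope $\steep(R,u,x)$ of \emph{every} outgoing edge, by $\delta/|ux|$, i.e.\ by an amount inversely proportional to the horizontal edge length. A nearer neighbour therefore gains more slope than the canonical steepest-descent neighbour does, so the steepest-descent edge of $u$ can switch and divert the flow away from $q_i$ (for instance into the region of a different tag, or out of $\PoWS(Q)$ altogether). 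The paper closes this step in the opposite direction: it takes the realization in which $p$ flows to the seed $u$ and keeps $u$ \emph{low} rather than raising the canonical copy --- lowering the terminal node of a flow path only steepens its last edge and disturbs nothing else, and in any case the watershed-overlay of $\WS(R_u,u)$ and $\WS(\CanonR(Q),q_u)$ assigns $u$ the minimum of the two elevations, namely $\elev(\CanonR(Q),u)$ --- and then invokes \lemref{watershed-overlay} to stitch the two flows together, yielding $p \in \PoWS(q_u)$. The repair is therefore cheap: apply to the $q_i$ half the same overlay argument you already use for the $q_j$ half, instead of reasoning about raising $u$.
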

\begin{proof}
The separator set ${\cal X}$ is obtained in $O(n \log n)$ time by running the modified version of \algref{compute-pws} and one scan over the graph to identify edges between nodes with different tags. Computing the union of the pairwise intersections of the potential watersheds of $q_1,...,q_k$ with the modified \algref{compute-pws} takes $O(n \log n)$ time again.

By the same arguments as in the proof of \lemref{fuzzyboundary}, we can observe the following: for any node $p$ output by the above algorithm, there is an edge $(u,v) \in {\cal X}$, a realization $R_u$ with $\elev(R_u,u) = \elev(\CanonR(Q),u)$ and $p \flowsto{R_u} u$, and a realization $R_v$ with $\elev(R_v,v) = \elev(\CanonR(Q),v)$ and $p \flowsto{R_v} v$. Let $q_u, q_v \in Q$ be the nodes of $Q$ with which $u$ and $v$ were tagged, respectively. It follows that there is a flow path from $p$ to $q_u$ in the watershed overlay of $\WS(R_u,u)$ and $\WS(\CanonR(Q),q_u)$, so $p \in \PoWS(q_u)$. Analogously, $p \in \PoWS(q_v)$. Since $(u,v) \in {\cal X}$, we have $q_u \neq q_v$, so any point $p$ that is output by the algorithm lies in the intersection of the potential watersheds of two different nodes from $Q$.

Next, we will argue that if $p$ lies in the intersection of the potential watersheds of two different nodes from $Q$, then the algorithm will output $p$.
Let $q \in Q$ be the node with which $p$ is tagged (hence, $p \in \PoWS(q)$), and let $q' \in Q, q' \neq q$ be another node from $Q$ such that $p \in \PoWS(q')$. Consider a flow path $\pi$ from $p$ to $q'$ in $\CanonR(q')$, and let $(r,r')$ be the edge on $\pi$ such that $r$ is tagged with a node other than $q'$ and all nodes of $\pi[r',q']$ are tagged with $q'$. Note that $(r,r')$ must exist because all nodes of $\pi$ lie in $\PoWS(Q)$ and have received a tag, $p$ is tagged with another node than $q'$, and, since none of the nodes of $Q$ lie in each other's potential watersheds, $q'$ is tagged with itself. Therefore $(r,r')$ exists, and $(r,r') \in {\cal X}$. Moreover, we have $\elev(\CanonR(Q),r') = \elev(\CanonR(q'),r')$. Therefore $r$ was put in the priority queue with the minimum elevation such that there is a realization $R$ with $\elev(R,r') = \elev(\CanonR(q'),r')$ in which water flows on the edge from $r$ to $r'$. By induction on the nodes of $\pi$ from $r$ back to $p$, it follows that $p$ must eventually be extracted from the priority queue and output.

This completes the proof of the lemma.
\end{proof}

\subsection{The fuzzy watershed decomposition}
\seclab{fuzzy-ridge}

In this section we further characterize the structure of imprecise terrains by
considering the ridge lines that delineate the ``main'' watersheds.  In fact,
the fuzzy watershed boundaries (\defref{fuzzyboundary}) of the imprecise minima
(\defref{fuzzybottom}) possess a well-behaved ridge structure if the terrain is
regular.  Consider the following definition of an ``imprecise'' ridge.

\begin{defn}\deflab{fuzzy:ridge}
Let $S_1,..S_k$ be the imprecise minima of an imprecise terrain. We call the
union of the pairwise intersection of the potential watersheds of imprecise minima
the \emphi{fuzzy ridge} of the terrain.
\end{defn}

Let $S$ be an imprecise minimum of a regular imprecise terrain.
The next lemma testifies that the persistent watershed of any proxy $q$ of $S$
is equal to the intersection of the persistent watersheds of all possible
subsets of $S$.
Therefore, we think of $\PsWS(q)$ as the actual minimal watershed of $S$, or
the minimum associated with $S$.
By \obsref{flatbottom}, the potential watersheds of all subsets of $S$ are equal.
Consequently, we think of the fuzzy watershed boundary of $q$ as the fuzzy
watershed boundary of $S$.

\begin{lemma}\lemlab{proxy:persws}
Let $S$ be an imprecise minimum on a regular terrain, and let $x$ be any proxy
of $S$. Then $\bigcap_{Q \subseteq S} \PersistentWS(Q) = \PersistentWS(x)$.
\end{lemma}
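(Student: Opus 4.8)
The plan is to reduce every $\PsWS(Q)$, for nonempty $Q \subseteq S$, to a common shape and then exploit the defining property of the proxy. By \obsref{flatbottom}, all these subsets share the same potential watershed, $\PoWS(Q) = \PoWS(S)$, so by \defref{persistent} each persistent watershed equals $\compl{\AvWS{Q}(\compl{\PoWS(S)})}$; here $Q$ appears \emph{only} as the avoided set, while the destination set $\compl{\PoWS(S)}$ is the same for all of them. Taking complements turns the claim into the union identity $\bigcup_{Q \subseteq S} \AvWS{Q}(\compl{\PoWS(S)}) = \AvWS{x}(\compl{\PoWS(S)})$. One inclusion is immediate, since $x \in S$ makes $\{x\}$ one of the subsets, so $\AvWS{x}(\compl{\PoWS(S)})$ is a term of the union. (If $|S| = 1$ then $S = \{x\}$ and the lemma is trivial, so I assume $|S| \geq 2$, which makes each singleton a proper subset and lets \obsref{flatbottom} apply.)

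For the reverse inclusion I would prove the stronger statement $\AvWS{Q}(\compl{\PoWS(S)}) \subseteq \AvWS{x}(\compl{\PoWS(S)})$ for \emph{every} $Q$. The crux is the observation that no flow path can reach a node of $\compl{\PoWS(S)}$ after visiting the proxy $x$: if some $\pi \in \FP(\ReT)$ with inducing realization $R$ had $p \flowsvia{\pi} s$ for a destination $s \in \compl{\PoWS(S)}$ with $x$ on the subpath $\pi[p,s]$, then $x$ would precede $s$ on a flow path, giving $x \flowsto{R} s$; by \defref{proxy} this forces $s \in S$, and since every node flows to itself we have $S \subseteq \PoWS(S)$, so $s \in \PoWS(S)$, contradicting $s \in \compl{\PoWS(S)}$ (the endpoint cases $x = p$ and $x = s$ reduce to the same contradiction). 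Hence any witnessing subpath $\pi[p,s]$ ending in $\compl{\PoWS(S)}$ already avoids $x$; so whenever $p$ lies in $\AvWS{Q}(\compl{\PoWS(S)})$ via a subpath that avoids $Q$, that same subpath avoids $x$, placing $p$ in $\AvWS{x}(\compl{\PoWS(S)})$. This yields the inclusion for all $Q$, completes the union identity, and proves the lemma.

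I expect the difficulty to be organizational rather than deep. The two things to get right are the reduction in the first step --- one must invoke \obsref{flatbottom} to guarantee that the destination set $\compl{\PoWS(S)}$ is genuinely common to all subsets $Q$, so that the proxy's property can be applied uniformly --- and the careful treatment of the endpoint cases in the avoidance argument. Once the destinations are pinned to $\compl{\PoWS(S)}$, the proxy property does all the work: it makes $\AvWS{x}(\compl{\PoWS(S)})$ the largest among the $Q$-avoiding watersheds, and hence $\PsWS(x)$ the smallest persistent watershed among the subsets of $S$, which is exactly their intersection.
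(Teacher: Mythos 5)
Your proposal is correct and follows essentially the same route as the paper: complement the intersection, use \obsref{flatbottom} to replace every $\PoWS(Q)$ by the common destination set $\compl{\PoWS(S)}$, and then use the proxy property to show that any witnessing flow path into $\compl{\PoWS(S)}$ automatically avoids $x$, so all the $Q$-avoiding watersheds coincide with the $x$-avoiding one. Your endpoint-case argument for why the subpath cannot contain $x$ is in fact spelled out slightly more carefully than in the paper, which simply asserts this as a consequence of \defref{proxy}.
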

\begin{proof}
We want to argue about the intersection of the persistent
watersheds of all subsets of $S$. Consider the complement $C$ of this set,
\[C := \compl{\bigcap_{Q \subseteq S} \PersistentWS(Q)} =
\bigcup_{Q \subseteq S} \compl{\PersistentWS(Q)} = \bigcup_{Q
\subseteq S} \AvWS{Q}(\compl{\PotentialWS(Q)}).\]

By \obsref{flatbottom} we have $\PoWS(Q)=\PoWS(x)=\PoWS(S)$ for any $Q
\subseteq S$, so we have $C = \cup_{Q \subseteq S} \AvWS{Q}(\compl{\PotentialWS(S)})$.
Now, the nodes contained in $C$ can be characterized as follows.
For any node $p \in C$, there must be a realization, in which there is a subset $S' \subseteq S$ and a node $q$ outside $\PotentialWS(S)$, such that there is a flow path from $p$ to $q$ that does not contain any node of $S'$. The given node $\{x\}$ always serves as such a subset $S'$ that is being ``avoided'', since $x$ is a proxy and, by \defref{proxy}, it is impossible for water that reaches $x$ to continue to flow to a node outside of $\PotentialWS(S)$. Therefore,
\[  \AvWS{Q}(\compl{\PotentialWS(S)})
= \PotentialWS(\compl{\PotentialWS(S)})
= \AvWS{x}(\compl{\PotentialWS(x)}) = C.
\]
The claim now follows from the definition of persistent watersheds.
\end{proof}

\vspace{\baselineskip}
We can now further characterize the fuzzy ridge for
regular terrains. The following lemma implies that on a regular terrain, the
fuzzy ridge is equal to the union of the uncertainty areas of the fuzzy watershed
boundaries of any representative set of proxies of the imprecise minima, see
\corref{main:structure}.

\begin{lemma} \lemlab{fuzzy:ridge}
Let $S_1,..S_k$ be the imprecise minima of a regular imprecise
terrain and let $q_1,..q_k$ be associated proxies. For any $1\leq i\leq k$, we have that
$\PersistentWS(q_i) = \compl{ \bigcup_{j\neq i} \PotentialWS(q_j)}$.
\end{lemma}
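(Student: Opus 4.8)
The plan is to prove the statement in its complementary form, i.e. to show that $\compl{\PsWS(q_i)} = \bigcup_{j \neq i} \PoWS(q_j)$; unfolding \defref{persistent} rewrites the left-hand side as $\AvWS{q_i}(\compl{\PoWS(q_i)})$, the set of nodes having a potential flow path that avoids $q_i$ and ends at some node outside $\PoWS(q_i)$. Before attacking the two inclusions I would record three structural facts about regular terrains. First, by \defref{regular} (together with the remark that every imprecise minimum is a minimum of $\Rlow$) the imprecise minima $S_1,\dots,S_k$ are exactly the local minima of $\Rlow$, and any two of them are disjoint: if $S_i$ and $S_j$ shared a node they would have equal elevation in $\Rlow$, so their union would again be a local minimum of $\Rlow$ and hence an imprecise minimum by \defref{regular}, contradicting minimality (as $S_i$ is a proper subset that already contains a minimum in every realization). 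Second, $\PoWS(q_m) = \PoWS(S_m)$ for every $m$ by \obsref{flatbottom}. Third, since $q_m$ is a proxy of $S_m$, \defref{proxy} forbids water that reaches $q_m$ from ever leaving $S_m$.

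For the inclusion $\bigcup_{j\neq i}\PoWS(q_j) \subseteq \compl{\PsWS(q_i)}$, I would take $p \in \PoWS(q_j)$ with $j \neq i$ and let $\pi$ be a flow path in a realization $R$ with $p \flowsvia{\pi} q_j$. I claim this same path witnesses membership of $p$ in $\AvWS{q_i}(\compl{\PoWS(q_i)})$ with destination $s = q_j$. Indeed $q_j \notin \PoWS(q_i) = \PoWS(S_i)$, since reaching a node of $S_i$ would force water to leave $S_j$ (as $q_i \in S_i$ and $S_i \cap S_j = \emptyset$), which \defref{proxy} forbids. Likewise $q_i$ cannot lie on $\pi[p,q_j]$: otherwise $q_i \flowsto{R} q_j$ with $q_j \notin S_i$, again contradicting that $q_i$ is a proxy of $S_i$. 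Hence $\pi[p,q_j] \cap \{q_i\} = \emptyset$, so $p \in \compl{\PsWS(q_i)}$.

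For the reverse inclusion $\compl{\PsWS(q_i)} \subseteq \bigcup_{j\neq i}\PoWS(q_j)$, I would take $p \in \AvWS{q_i}(\compl{\PoWS(q_i)})$, witnessed by a flow path $\pi$ in some realization with $p \flowsvia{\pi} s$ and $s \notin \PoWS(q_i)$; in particular $p \in \PoWS(s)$. Now I would use that in $\Rlow$ every node drains to some local minimum, which by \defref{regular} is an imprecise minimum; thus $s \in \WS(\Rlow,S_m)$ for some $m$. Since $\WS(\Rlow,S_i) \subseteq \PoWS(S_i) = \PoWS(q_i)$ while $s \notin \PoWS(q_i)$, we must have $m \neq i$. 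Finally, as $\{s\} \subseteq \WS(\Rlow,S_m)$, \lemref{plunging} yields $\PoWS(s) \subseteq \PoWS(S_m) = \PoWS(q_m)$, so $p \in \PoWS(q_m)$ with $m \neq i$, as required. Combining both inclusions and taking complements proves the lemma.

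The step I expect to be the main obstacle is the concatenation in the reverse inclusion: the witness only guarantees that $p$ reaches $s$ in one realization and that $s$ reaches some minimum in a possibly different realization, and these two fragments must be merged into a single realization in which $p$ itself reaches that minimum. The key to overcoming this is to route the second fragment through $\Rlow$ rather than an arbitrary realization: establishing that $s$ already drains to an imprecise minimum $S_m$ in $\Rlow$ lets me apply \lemref{plunging} (together with \obsref{flatbottom}) to carry out the composition, while the condition $s \notin \PoWS(q_i)$ forces $m \neq i$. The disjointness of the imprecise minima, by contrast, is what makes the forward inclusion go through, since it is exactly what prevents a proxy of one minimum from draining into, or being reached from, another.
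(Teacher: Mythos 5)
Your proof is correct and follows essentially the same route as the paper's: both reduce to the complementary identity, handle the hard inclusion by following a flow path from the witness node in $\Rlow$ to a local minimum, invoking regularity to identify it as an imprecise minimum $S_m \neq S_i$, and then applying \obsref{flatbottom} together with \lemref{plunging}, while the easy inclusion rests on the proxy property. The only difference is presentational: you unpack the definition of $\AvWS{q_i}(\cdot)$ path-by-path instead of using the identity $\AvWS{q_i}(X)=\PoWS(X)$ from \lemref{proxy:persws}, and you supply an explicit justification (disjointness of imprecise minima plus the proxy property of $q_j$) for the claim $q_j \notin \PoWS(q_i)$, which the paper asserts without proof.
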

\begin{proof}
Since $q_i$ is a proxy, we have that,
\[\compl{\PersistentWS(q_i)} = \AvWS{q_i}(\compl{\PotentialWS(q_i)})
 = \PotentialWS(\compl{\PotentialWS(q_i)})
 = \bigcup_{p \in \compl{\PotentialWS(q_i)}} \PotentialWS(p).
\]

Now, for a node $p \in \compl{\PotentialWS(q_i)}$, consider a minimum $S$ that
is reached by a flow path from $p$ in $\Rlow$.  By \defref{regular}, we have
that $S$ is an imprecise minimum, and since $p \in \compl{\PoWS(q_i)} =
\compl{\PoWS(S_i)}$, we have $S \neq S_i$. As such, $S$ must be equal to some
$S_j$ for $j \neq i$.  Furthermore, by \obsref{flatbottom} we have
$\WS(\Rlow,S_j) = \WS(\Rlow,q_j)$, therefore $p \in \WS(\Rlow,q_j)$. Now,
\lemref{nesting:ws:b} implies that $\PoWS(p) \subseteq \PoWS(q_j)$. It follows
that $\compl{\PersistentWS(q_i)} \subseteq \bigcup_{i \neq j} \PotentialWS(q_j).
$

Since we also have that $q_j \in \compl{\PoWS(q_i)}$ for any $j \neq i$, we also get
$\compl{\PersistentWS(q_i)}
 \supseteq \bigcup_{i \neq j} \PotentialWS(q_j),$ which implies the equality.
\end{proof}

\begin{corollary}\corlab{main:structure}
\lemref{fuzzy:ridge} implies that, given $q_1,\dots, q_k$, a representative set of
proxies for the imprecise minima of a regular imprecise terrain, it holds that
\begin{eqnarray*}
\bigcup_i (\PoWS(q_i) \setminus \PsWS(q_i)) &=&
\bigcup_i \left(\PoWS(q_i) \setminus \compl{\bigcup_{j\neq i} \PotentialWS(q_j)}\right)\\
&=& \bigcup_i \left(\PoWS(q_i)\cap \bigcup_{j\neq i} \PotentialWS(q_j)\right)\\
&=& \bigcup_i \bigcup_{j\neq i} \left(\PoWS(q_i) \cap \PoWS(q_j)\right).
\end{eqnarray*}
By \obsref{flatbottom}, this is equal to the fuzzy ridge of this terrain as
defined in \defref{fuzzy:ridge}.
This relationship is illustrated in \figref{fuzzy:ridge}.
\end{corollary}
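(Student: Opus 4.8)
The plan is to obtain the displayed chain of equalities by a direct sequence of substitutions and elementary set-algebra identities, with \lemref{fuzzy:ridge} supplying the only substantive ingredient. I would start from the leftmost expression $\bigcup_i (\PoWS(q_i) \setminus \PsWS(q_i))$ and invoke \lemref{fuzzy:ridge}, which asserts $\PsWS(q_i) = \compl{\bigcup_{j\neq i} \PotentialWS(q_j)}$ for each $i$. Substituting this identity term by term into the union immediately yields the first equality, so no further argument is needed at that step.

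For the second equality I would apply the set-theoretic identity $A \setminus B^c = A \cap B$ with $A = \PoWS(q_i)$ and $B = \bigcup_{j\neq i} \PotentialWS(q_j)$; since $\compl{\bigcup_{j\neq i} \PotentialWS(q_j)}$ is precisely $B^c$, each set difference becomes the intersection $\PoWS(q_i) \cap \bigcup_{j\neq i} \PotentialWS(q_j)$. The third equality then follows by distributing intersection over union inside each term, giving $\PoWS(q_i) \cap \bigcup_{j\neq i} \PotentialWS(q_j) = \bigcup_{j\neq i} (\PoWS(q_i) \cap \PotentialWS(q_j))$, and collecting the resulting double union over $i$ and $j \neq i$.

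Finally, to identify this last expression with the fuzzy ridge of \defref{fuzzy:ridge}, I would recall that the fuzzy ridge is the union of the pairwise intersections of the potential watersheds of the imprecise minima $S_1,\dots,S_k$, that is, $\bigcup_i \bigcup_{j\neq i} (\PoWS(S_i) \cap \PoWS(S_j))$. Since each $q_i$ is a proxy of $S_i$, \obsref{flatbottom} gives $\PoWS(q_i) = \PoWS(S_i)$, so the two unions coincide term by term. There is no genuine obstacle in this corollary, as all the real content already resides in \lemref{fuzzy:ridge}; the only point meriting care is this last identification, where one must make explicit that replacing each imprecise minimum $S_i$ by a single representative proxy $q_i$ leaves its potential watershed unchanged — which is exactly the guarantee of \obsref{flatbottom} and is what makes the chosen proxies a legitimate representative set.
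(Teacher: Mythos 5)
Your proposal is correct and follows essentially the same route the paper takes: the corollary is proved exactly by substituting the identity of \lemref{fuzzy:ridge}, applying $A \setminus B^c = A \cap B$, distributing intersection over union, and finally invoking \obsref{flatbottom} to identify $\PoWS(q_i)$ with $\PoWS(S_i)$ so that the result matches \defref{fuzzy:ridge}. Your explicit attention to the last identification step is exactly the point the paper flags with its reference to \obsref{flatbottom}.
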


\begin{figure}[htb]\center
\includegraphics{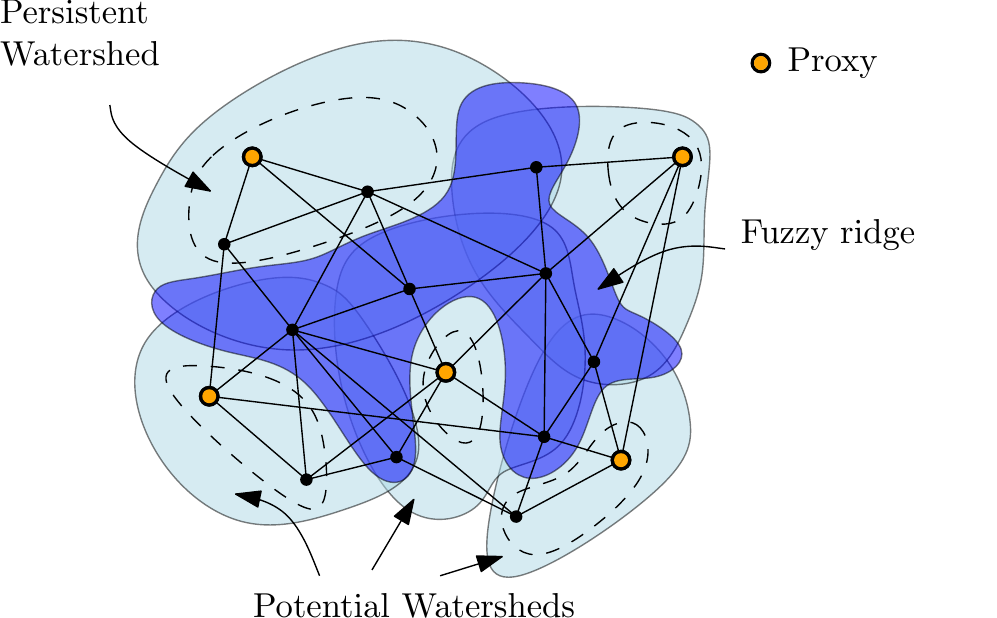}
\caption{Illustration to the fuzzy ridge on a regular terrain.}
\figlab{fuzzy:ridge}
\end{figure}

Combining this with \lemref{findtheproxies} and \lemref{all:fuzzy:boundaries} we obtain:

\begin{theorem} \thmlab{compute:fuzzy:ridge}
We can compute the fuzzy ridge of a regular imprecise terrain in
$O(n \log n)$ time, where $n$ is the number of edges of the imprecise terrain.
\end{theorem}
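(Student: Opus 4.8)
The plan is to obtain the theorem by simply chaining together the three results developed above, since \corref{main:structure} already rewrites the fuzzy ridge as a union of pairwise intersections of potential watersheds of a representative set of proxies. Concretely, the strategy has four steps: (1) compute a representative set of proxies, one per imprecise minimum; (2) verify that this set meets the precondition of \lemref{all:fuzzy:boundaries}; (3) invoke \lemref{all:fuzzy:boundaries} to compute the union of the pairwise intersections of the potential watersheds of these proxies; and (4) appeal to \corref{main:structure} to conclude that the set so computed is exactly the fuzzy ridge of \defref{fuzzy:ridge}.

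First I would run the modified Gray et al.\ algorithm of \secref{regularize}: by \lemref{findtheproxies}, it reports exactly one proxy $q_i$ of each imprecise minimum $S_i$ of the terrain, and the cited implementation runs in $O(n \log n)$ time. This produces precisely the representative set $q_1, \dots, q_k$ that \corref{main:structure} requires. I would then feed these proxies into the algorithm of \lemref{all:fuzzy:boundaries}, which computes $\bigcup_i \bigcup_{j \neq i} (\PoWS(q_i) \cap \PoWS(q_j))$ in a further $O(n \log n)$ time, and identify this set with the fuzzy ridge via \corref{main:structure}.

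The only step that requires genuine argument, rather than quotation of an earlier result, is checking that the computed proxies satisfy the hypothesis of \lemref{all:fuzzy:boundaries}, namely $q_i \notin \PoWS(q_j)$ whenever $i \neq j$. Here the definition of a proxy does the work: by \defref{proxy}, water that reaches the proxy $q_i$ of $S_i$ can never flow to a node outside $S_i$, so $q_i$ lies in no potential watershed of a set of nodes disjoint from $S_i$. Since $q_j$ is a proxy of a distinct imprecise minimum $S_j$ and hence does not belong to $S_i$, it follows that $q_i \notin \PoWS(q_j)$, so the precondition holds.

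I expect this non-containment claim to be the main (though minor) obstacle, since it is the only place where we must actually use that the $q_i$ come from distinct imprecise minima rather than from arbitrary nodes; everything else is bookkeeping. The running time is then immediate: the proxy computation and the intersection computation each cost $O(n \log n)$ and dominate, giving an overall bound of $O(n \log n)$.
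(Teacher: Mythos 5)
Your proposal is correct and follows essentially the same route as the paper, which obtains the theorem by combining \corref{main:structure} with \lemref{findtheproxies} and \lemref{all:fuzzy:boundaries}. The one detail you add explicitly---that the proxies satisfy $q_i \notin \PoWS(q_j)$ because, by \defref{proxy}, a proxy of $S_i$ lies in no potential watershed of a set disjoint from $S_i$---is left implicit in the paper but is exactly the right justification for the precondition of \lemref{all:fuzzy:boundaries}.
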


Note that \defref{fuzzy:ridge} can also be applied to non-regular terrains, since it is solely based on the potential watersheds of the imprecise minima. We can use the algorithm of \secref{regularize} to compute proxies for these minima, and
then use the algorithm of \lemref{all:fuzzy:boundaries} to compute a fuzzy ridge between the watersheds of these proxies efficiently for non-regular terrains. However, note that the result may not be exactly the same as the fuzzy ridge according to \defref{fuzzy:ridge}, because on a non-regular terrain, the potential watersheds of the proxies may be smaller than the potential watersheds of the imprecise minima.

%
%

\section{Conclusions}
\seclab{conclusions}


In this paper we studied flow computations on imprecise terrains under two
general models of water flow.
For the surface model, where flow paths are traced across the surface of an
imprecise polyhedral terrain, we showed NP-hardness for deciding whether water can flow
between two points.  For the network model, where flow paths are traced along
the edges of an imprecise graph, we gave efficient algorithms to compute potential
(maximal) and persistent (minimal) watersheds and potential downstream areas.
Our algorithms also work for sets of nodes and can therefore be applied to
reason about watersheds of areas, such as lakes and river beds.

In order to enable several extensions to these results in the network model, we
introduced a certain class of imprecise terrains, which we call regular.  We
first defined when a set of vertices on an imprecise terrain can be considered a
'stable' imprecise minimum.  We then described how to turn a non-regular terrain
into a regular terrain using an algorithm by Gray et al.~\cite{gkls-rleit-10}
and showed that this regularization algorithm preserves
these imprecise minima. Interestingly, this algorithm also minimizes the number
of minima of the terrain, while respecting the elevation bounds, as shown in
\cite{gkls-rleit-10}.

We showed that persistent watersheds are nested on regular terrains and that
these terrains have a fuzzy ridge structure which delineates the persistent
watersheds of these stable minima. We gave an algorithm to compute this
structure in $O(n \log n)$ time, where $n$ is the number of edges of the terrain.
The correspondence between the imprecise minima of the regular and the
non-regular terrain suggests that this fuzzy watershed decomposition on the
regular terrain also allows us to reason about the structure of the watersheds
on the original non-regular terrain.
We think that, even though our work is motivated by geographical
applications, the results will be useful in other application areas where
watersheds are being computed, for instance in image segmentation
\cite{dcj-cmmis-00}.


There are many open problems for further research.
Clearly, the contrast between the results in the surface model vs.\ the results in
the network model leaves room for further questions, e.g.,
can we develop a model to measure the quality of approximations of water flow in the surface model, and how
does it relate to the network model?
Other flow models have been proposed in the \textsc{gis} literature, e.g. \mbox{D-$\infty$},
in which the incoming water at a vertex is distributed among
the outgoing descent edges according to steepness. These models can be seen as
modified network models which approximate the steepest descent direction more
truthfully.  In order to apply the techniques we developed for watersheds, we
first need to formalize \emph{to which extent} a node is part of a watershed in
these models.

\paragraph{Acknowledgments.}
{\small We thank Chris Gray for many interesting
and useful discussions on the topic of this paper.}



\bgroup
\small
\bibliographystyle{abbrv}
\bibliography{fuzzy}
\egroup

\break
\appendix

\section{Computing potential watersheds in linear time}
\label{sec:lineartime}

\setcounter{theorem}{2}
\begin{theorem}
The canonical realization of the potential watershed 
of a set of cells $Q$ in an
imprecise grid terrain of $n$ cells can be computed in $O(n)$ time.
\end{theorem}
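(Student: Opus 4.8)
The plan is to reuse the Dijkstra-like algorithm \computePWSX{Q} of \algref{compute-pws}, whose correctness and output (the nodes of the potential watershed together with their elevations in $\CanonR(Q)$) are already guaranteed by \thmref{compute-pws}, and to strip it of its only super-linear ingredient, the global priority queue. First I would observe that on a grid terrain every node has degree $O(1)$, so the convex chain $U(p)$ in the slope diagram of each node has constant size and can be built in $O(1)$ time; consequently the precomputation over all nodes is $O(n)$ rather than $O(n\log n)$, and each call to \expandPWSX{q'}{z'} costs only $O(1)$ by \lemref{pre-expand-pws}, since the factor $\log d'$ collapses when $d' = O(1)$. As each node is expanded at most once and the total degree is $2n$, the entire expansion work is $O(n)$. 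Hence the only remaining bottleneck is the $O(n)$ priority-queue operations at $O(\log n)$ each, which is exactly the cost that the planar shortest-path framework of Henzinger et al.~\cite{hkrs-fspapg-97} is designed to remove.

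Next I would recast the computation as a single-source-shortest-path-style problem so that this framework applies. The canonical elevation $\elev(\CanonR(Q),p)$ plays the role of the distance label of $p$ from the source set $Q$, and the proof of \thmref{compute-pws} shows that these labels are finalized in nondecreasing order and that the tentative label $z$ produced for a neighbour $p$ during the expansion of $q'$ always satisfies $z \geq z'$ (because a flow path is monotone in elevation). This ``nonnegative edge length'' together with the monotone extraction order is precisely the Dijkstra structure that \cite{hkrs-fspapg-97} exploits. The point to stress is that their technique does not need the relaxation to be a fixed additive edge weight: it only relies on the facts that a node's final label is the minimum of the tentative labels produced by its finalized neighbours and that each such tentative label is at least the label of the neighbour that produced it. Both hold here, with the slope-diagram query of \lemref{pre-expand-pws} acting as the edge relaxation. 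Granting this, I can replace the global heap by their hierarchy of bucketed, per-region priority queues over a recursive $r$-division of the graph and obtain $O(1)$ amortised cost per relaxation, hence $O(n)$ total time.

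The main obstacle is that the D-8 grid graph is not strictly planar: the two diagonals inside a cell cross, so the separator-based machinery of \cite{hkrs-fspapg-97}, which is phrased for planar graphs, does not apply verbatim. I would resolve this by noting that the grid nonetheless supplies, directly and without appealing to the planar separator theorem, exactly the recursive decomposition the algorithm consumes: a block of $r$ cells forms a region whose boundary consists of only $O(\sqrt r)$ cells, yielding an $r$-division with $O(\sqrt r)$ boundary size per region and a balanced number of regions. I would then check that every step of the framework stated in terms of planarity in fact uses only this $O(\sqrt r)$-boundary recursive division together with the monotone relaxation above, and conclude that the linear running time carries over to the grid. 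The one delicate point to verify is that the generalised (non-additive but monotone) relaxation does not disturb the amortised charging of the nested bucket structure; since each relaxation still only compares a freshly computed label against a neighbour's finalized label, I expect the charging argument of \cite{hkrs-fspapg-97} to go through unchanged, giving the claimed $O(n)$ bound.
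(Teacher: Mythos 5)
Your proposal is correct and follows essentially the same route as the paper's own proof: both reduce the computation to a sequence of monotone, non-additive edge relaxations satisfying the Dijkstra-style invariants that the framework of Henzinger et al.\ actually requires, both sidestep the non-planarity of the D-8 adjacency graph by constructing the recursive $r$-division directly from square blocks of the grid (with $O(\sqrt{r})$ boundary nodes per region), and both use the constant node degree to make each slope-diagram relaxation cost $O(1)$. No substantive difference to report.
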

\begin{proof}
The computation of potential watersheds in \secref{potential-ws} has much in common with computing single-source shortest paths. In both cases, the goal is to compute a label $\delta(v)$ for each node~$v$: in the case of potential watersheds it is the lowest elevation such that a flow path to a given destination~$q$ exists; in the case of shortest paths it is the distance from the given source~$q$. During the computation, we maintain \emph{tentative} labels $d[v]$ for each node $v$ which are upper bounds on the labels to be computed. (The tentative label of a node that has not been discovered yet would be $\infty$.) The computations consist of a sequence of edge \emph{relaxations}: when relaxing a directed edge $(u,v)$, we try to improve (that is, lower) $d[v]$ based on the current value of $d[u]$, which is an upper bound on $\delta(u)$. Both problems share some crucial properties: for every node~$v$ that can be reached, there is a ``shortest'' path $\pi(v) = u_0,u_1,...,u_k$ where $u_0 = q$ and $u_k = v$, the correct labels $\delta(u_0),\delta(u_1),...,\delta(u_k)$ form a non-decreasing sequence, and when the edges on this path are relaxed in order from $(u_0,u_1)$ to $(u_{k-1},u_k)$, the relaxation of $(u_{i-1},u_i)$ will correctly set $d[u_i]$ equal to $\delta(u_i)$. All that is necessary for the computations to compute all labels, is that the sequence $\rho$ of relaxations performed by the algorithm contains $\pi(v)$ as a subsequence, for each $v$. Note that the edges of $\pi(v)$ do not need to be consecutive in $\rho$: the labels along $\pi(v)$ are computed correctly even if the relaxations of $\pi(v)$ are interleaved with relaxations of other edges, or even with out-of-order relaxations of edges of $\pi(v)$.

There are several algorithms to find a sequence of relaxations $\rho$ in the above setting, such that for every node $v$, the sequence $\rho$ contains the relaxations of a shortest path $\pi(v)$ as a subsequence. These algorithms are usually known as algorithms to compute (single-source) shortest paths, but they can also be applied directly to the more general setting described above. Dijkstra's algorithm finds a sequence of relaxations that is optimal in the sense that it relaxes each edge only once. However, to achieve this, the algorithm needs $\Theta(n)$ operations on a priority queue of size $\Theta(n)$ in the worst case, where $n$ is the number of nodes and edges in the graph~\cite{clrs-ia-09}.

An alternative is the algorithm of Henzinger et al.~\cite{hkrs-fspapg-97}. This algorithm uses a hierarchy of priority queues. Most priority queue operations in this algorithm are on small priority queues. The algorithm needs more relaxations than Dijkstra's algorithm, but still not more than $O(n)$. Provided the relaxations take constant time each, the whole algorithm runs in $O(n)$ time. However, the algorithm by Henzinger et al.\ only works if a recursive decomposition of the graph is provided that satisfies certain properties. Fortunately such decompositions can be found in $O(n)$ time for planar graphs, and also for certain other types of graphs. In particular, it is easy to construct such a decomposition for a graph that represents a grid terrain model, even in the model where each cell can drain to one or more of its eight neighbors, for which the adjacency graph is non-planar. Let $r_1 < r_2 < ... $ be a sequence of powers of four. Now we can easily make a decomposition of the graph into square regions of $\sqrt{r_1} \times \sqrt{r_1}$ nodes; we group these together into regions of $\sqrt{r_2} \times \sqrt{r_2}$ regions, etc., generally grouping regions of $\sqrt{r_i} \times \sqrt{r_i}$ nodes into regions of $\sqrt{r_{i+1}} \times \sqrt{r_{i+1}}$ nodes (some regions at the boundary of the whole input grid may be slightly smaller). On each level $i$, the regions have size $\Theta(r_i)$ and each region has $\Theta(\sqrt{r_{i}})$ nodes on its boundary, thus each level forms a so-called $r_i$-division. We choose the region sizes such that they satisfy Equation~(19) from Henzinger et al.

With this decomposition, the structure of the single-source shortest paths algorithm from Henzinger et al.\ can also be applied to the computation of potential watersheds on grid terrains. For grid terrains, $d_{\max} = O(1)$, and thus, the computation of the slope diagrams and the $O(n)$ relaxation steps from the ``shortest-paths'' algorithm take only $O(n)$ time. Together with $O(n)$ time for priority queue operations, we get a total running time of $O(n)$.
\end{proof}

\section{Persistent watersheds with multiple connected components}

\begin{lemma}\lemlab{disconnected-per-ws}
There exists a regular terrain that contains a persistent watershed that consists of more than one connected component.
\end{lemma}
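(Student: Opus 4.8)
The plan is to exhibit an explicit small regular terrain together with a single target node $q$ whose persistent watershed $\PsWS(q)$ splits into two graph-components. I would take a path-like graph on nodes $q,a,c,b$ with edges $q\text{--}a$, $a\text{--}c$, $c\text{--}b$, augmented by one \emph{escape} edge $c\text{--}q'$ leading to a second, very deep minimum $q'$. Only $c$ is imprecise; the nodes $q,a,b,q'$ have fixed elevations, with $\elev(R,q)<\elev(R,a)$ and $\elev(R,q')$ very low. I would choose the heights so that in the lowermost realization $\Rlow$ water runs $b \to c \to a \to q$ while $q'$ is an isolated sink; in particular $\Rlow$ has exactly the two local minima $q$ and $q'$, each of which is a local minimum in \emph{every} realization and hence, being a singleton, an imprecise minimum. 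By \defref{regular} the terrain is then regular. A figure in the style of \figref{silly:cws} would accompany the construction.

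Next I would pin down the two relevant sets. Setting $\low(c)$ above the (fixed) elevation of $a$ guarantees $a<c$ in every realization, so $a$ always drains straight to $q$; and when $c$ is low enough the flow is $b \to c \to a \to q$. Hence $\PoWS(q)=\{q,a,c,b\}$, while $q'$ is a sink that reaches $q$ in no realization, so $q'\notin\PoWS(q)$. The delicate point is the behaviour of $c$: by the slope-diagram analysis of \secref{slope-diagram}, its steepest-descent neighbour switches from $a$ (towards $q$) to $q'$ (escape) as the elevation of $c$ crosses a threshold $z^{*}$ determined by the distances $|ca|$ and $|cq'|$ and the fixed elevations. I would calibrate these parameters so that $z^{*}$ is at least the elevation of $b$.

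Granting that calibration, the verification is short. Whenever $c$ is set above $z^{*}$ we have $c \flowsto{R} q'$ along a $q$-avoiding path ending at $q'\notin\PoWS(q)$, so by \defref{persistent} we get $c\notin\PsWS(q)$. For $b$, I would show it can never escape: in any realization $R$, either $\elev(R,c)$ exceeds the elevation of $b$, so $b$ is a local minimum and its trivial flow path stays inside $\PoWS(q)$; or $\elev(R,c)$ is at most the elevation of $b$ and therefore at most $z^{*}$, in which case $b \to c$ but $c \to a \to q$, so the flow reaches $q$ without ever leaving $\PoWS(q)$ along a $q$-avoiding route. In both cases no realization yields a $q$-avoiding flow path from $b$ to a node outside $\PoWS(q)$, so $b\in\PsWS(q)$; the same, easier, reasoning gives $q,a\in\PsWS(q)$. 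Thus $\PsWS(q)=\{q,a,b\}$. Since the only neighbour of $b$ is the excluded node $c$, node $b$ is isolated from $\{q,a\}$ in the graph induced on $\PsWS(q)$, which therefore has the two components $\{q,a\}$ and $\{b\}$.

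The main obstacle is exactly the calibration of the threshold $z^{*}$: I must ensure that $c$ can escape only at elevations at which $b$ can no longer feed water into $c$. This single condition, $z^{*}\ge\elev(R,b)$ together with $c\to a$ at the low end, is what simultaneously forces the water of $b$ towards $q$ and keeps $c$ itself ``unsafe'', and it is the mechanism that detaches the component $\{b\}$ from $\{q,a\}$. Checking that both requirements hold at once is a routine slope-diagram computation in which the two free distance parameters leave enough room; in the full proof I would simply record concrete coordinates and elevations, in the spirit of the gradient bookkeeping of \secref{surface}.
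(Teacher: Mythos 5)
Your overall strategy is the same as the paper's: a constant-size example with one imprecise ``gate'' node whose steepest-descent edge points toward the target when the gate is low and toward an escape route when it is high, so that the gate drops out of the persistent watershed while the node upstream of it --- which becomes a local minimum precisely when the gate is high enough to escape --- stays in. The logical skeleton (regularity, $c\notin\PsWS(q)$, $b\in\PsWS(q)$ via the case split on $\elev(R,c)$ against the elevation of $b$, hence two components) is sound. But the calibration you defer to ``a routine slope-diagram computation'' is impossible as you have set it up, and this is the one place where the construction genuinely has to be checked. Write $s_a(z)=(z-h_a)/|ca|$ and $s_{q'}(z)=(z-h_{q'})/|cq'|$ for the slopes from $c$ at elevation $z$ toward $a$ and toward $q'$. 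For $q'$ to overtake $a$ as $z$ increases you need $|cq'|<|ca|$, since a slope grows in $z$ at rate one over the edge length; but a neighbour that is both closer and lower than $a$ (your $q'$ is ``very deep'', so $h_{q'}<h_a<\low(c)$) has the larger numerator \emph{and} the smaller denominator, hence $s_{q'}(z)>s_a(z)$ for every $z\ge\low(c)$. So either $q'$ wins at every elevation of $c$, in which case $c\notin\PoWS(q)$ and the example collapses, or you take $|cq'|\ge|ca|$ and the switch runs in the wrong direction: $c$ escapes when it is \emph{low}, i.e.\ exactly when $b$ drains into it, which destroys $b\in\PsWS(q)$. No choice of the two distances repairs this.

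The fix is the one the paper's figure uses: the escape edge out of the imprecise node must lead to an intermediate \emph{fixed} node that is \emph{higher} than the target-route neighbour but \emph{closer} to the imprecise node, with the very deep sink one further edge away, reached by a very steep edge from that intermediate node. Then at low elevations the lower target-route neighbour is steepest, at high elevations the closer intermediate node is steepest, and the crossover threshold can be pushed above the elevation of the upstream node. Concretely, in the paper's example the imprecise node $d$ has escape neighbour $b$ at distance $1$ and target-route neighbour $e$ at distance $1.6$, with $b$ higher than $e$ and the steep drop only beyond $a$; the threshold works out to $6\frac13$, above the range in which $c$ can feed $d$. With that one structural change --- and concrete coordinates recorded --- your argument goes through.
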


\begin{proof}
Refer to \figref{disconnected-persistent-watershed}.
 The figure shows five nodes with their elevation intervals. The edges $(a,b)$,
 $(b,d)$ and $(c,d)$ have length 1. The edge $(d,e)$ has length $1.6$. From $a$
 and $e$, very steep edges lead to nodes downwards not shown in the figure. The
 potential watershed $\PoWS(e)$ of $e$ is $\{c,d,e\}$. The node $d$ is not in
 the persistent watershed of $e$: if $d$ has elevation more than $6\frac13$, the
 flow path from $d$ will lead to $b$, outside $\PoWS(e)$. In that case $c$ is a
 local minimum inside $\PoWS(e)$. Whenever $c$ is not a local minimum, the
 elevation of $d$ must be less than 4, and the flow path from $c$ will lead to
 $d$ and on to $e$. Thus $c$ is in the persistent watershed $\PsWS(e)$ of $e$,
 but $d$ is not, so we have $\PsWS(e) = \{c, e\}$.  \end{proof}

\begin{figure}[htb]\center
\includegraphics[width=0.45\textwidth]{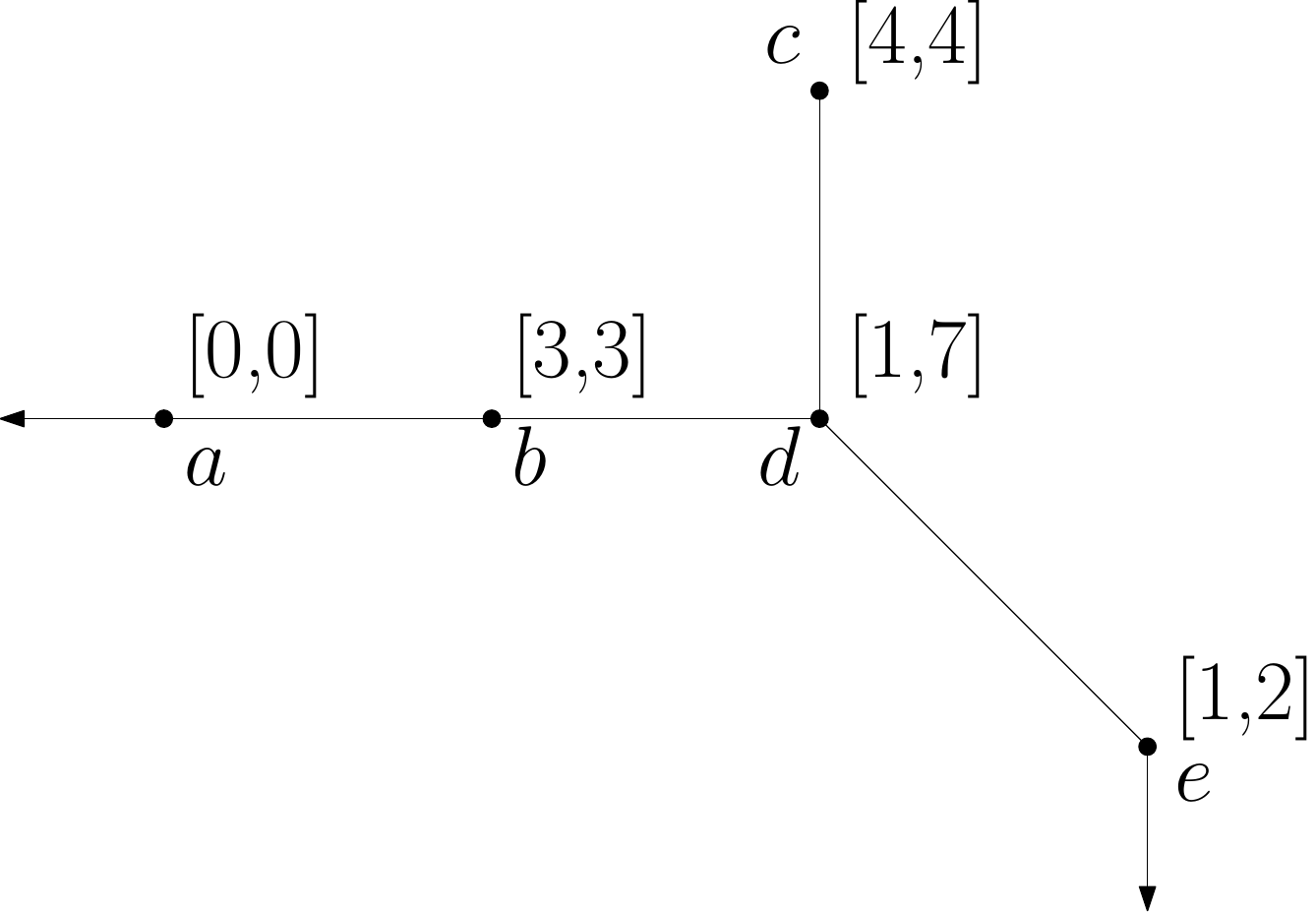}
\caption{Example of disconnected persistent watershed on a regular terrain.}
\figlab{disconnected-persistent-watershed}
\end{figure}

\end{document}